\newtheorem{theorem}{Theorem}[section]
\newtheorem{corollary}[theorem]{Corollary}
\newtheorem{lemma}[theorem]{Lemma}
\newtheorem{proposition}[theorem]{Proposition}
\theoremstyle{definition}
\newtheorem{definition}[theorem]{Definition}
\newtheorem{remark}[theorem]{Remark}
\newtheorem{example}[theorem]{Example}
\newcommand\xqed[1]{%
  \leavevmode\unskip\penalty9999 \hbox{}\nobreak\hfill
  \quad\hbox{#1}}
\newcommand\demo{\xqed{$\triangle$}}
\title{\sf Multisymplectic structures and invariant tensors for Lie systems}
\author{\sf X. Gr\`acia$^a$, J. de Lucas$^b$, M.C. Mu\~{n}oz-Lecanda$^a$, and S. Vilari\~no$^c$}
\date{%
$^a$\textit{Dept.\ of Mathematics, Universitat Polit\`ecnica de Catalunya, Barcelona, Spain}\\
$^b$\textit{Katedra Metod Matematycznych Fizyki, Uniwersytet Warszawski, Warszawa, Poland}\\
$^c$\textit{Centro Universitario de la Defensa \& IUMA, Zaragoza, Spain}\\[3ex]
\textsf{19 August 2018}
}
\numberwithin{equation}{section}
\begin{document}
\maketitle

\begin{abstract}
A {\it Lie system} is the non-autonomous system of differential equations describing the integral curves of a non-autonomous vector field 
taking values in a finite-dimensional Lie algebra of vector fields, a so-called {\it Vessiot--Guldberg Lie algebra}.
This work pioneers the analysis of Lie systems admitting a Vessiot--Guldberg Lie algebra of Hamiltonian vector fields relative to a multisymplectic structure: 
the {\it multisymplectic Lie systems}. 
Geometric methods are developed to consider a Lie system as a multisymplectic one. 
By attaching a multisymplectic Lie system via its multisymplectic structure with a tensor coalgebra, 
we find methods to derive superposition rules, constants of motion, and invariant tensor fields relative to the evolution of the multisymplectic Lie system. 
Our results are illustrated with examples occurring in physics, mathematics, and control theory. 
\end{abstract}

\noindent
{\it Keywords:} 
Casimir element, Grassmann algebra, invariant form, Lie system, multisymplectic structure, superposition rule, tensor coalgebra, unimodular Lie algebra

\noindent
{\it MSC 2010:} 
34A26 (primary); 34A05, 34C14, 53C15, 16T15 (secondary)
%


\section{Introduction}
\label{section:intro}

A \textit{Lie system} is a non-autonomous first-order system of ordinary differential equations in normal form 
whose general solution can be expressed as an autonomous function, 
a so-called {\it superposition rule}  
\cite{CGM00,Dissertationes,LS,PW}, 
depending on a generic finite family of particular solutions and a set of constants. 
Standard examples of Lie systems are most types of Riccati equations 
\cite{GL13,PW} 
and non-autonomous first-order affine systems of ordinary differential equations 
\cite{Dissertationes}.

The Lie--Scheffers theorem \cite{CGM07,LS,OG00} states that a Lie system amounts to a $t$-dependent vector field taking values in a 
\textit{Vessiot--Guldberg Lie algebra} 
\cite{Dissertationes,Ib09}.
The later property gave rise to a number of methods for the determination of superposition rules 
based upon the integration of systems of ordinary and/or partial differential equations 
that are simpler to solve than the Lie systems they describe 
\cite{CGL18,CGM07,Dissertationes,PW}. 
At the same time, the Lie--Scheffers theorem also showed that being a Lie system is rather the exception than the rule \cite{Dissertationes}. 
Despite this, Lie systems admit relevant physical and mathematical applications, as witnessed by the many works on the topic \cite{BHLS15,CCJL18,CGM00,Ru10,GMR97,Ib09,Ra06,Te60,PW}.

Recently, a lot of attention has been paid to Lie systems admitting a Vessiot--Guldberg Lie algebra of Hamiltonian vector fields and/or Lie symmetries 
relative to a geometric structure: 
Poisson and symplectic 
\cite{BBHLS13,BCHLS13,BHLS15,CCJL18,CGM00,CLS13,Ru10}, 
Dirac 
\cite{CCJL18,CGLS14}, 
$k$-symplectic 
\cite{LV15}, 
Jacobi 
\cite{HLS15}, 
Riemann 
\cite{HLT17,LL18}, 
and others 
\cite{CCJL18,LL18}. 
We say that these Lie systems admit 
\emph{compatible geometric structures}. 
Although such Lie systems represent a relatively small subclass of all Lie systems \cite{BBHLS13,GL17,GMR97,LL18}, 
they seem to admit more applications than 
Lie systems without compatible geometric structures 
\cite{BHLS15,LL18}.

Geometric structures compatible with Lie systems allow for the algebraic construction of superposition rules, 
constants of motion, 
and other evolution invariants of Lie systems 
without solving systems of partial and/or ordinary differential equations \cite{BCHLS13,CGLS14,LV15}
as in standard methods 
\cite{CGM07,Dissertationes,PW}. 
Geometric structures also explain the geometric properties of superposition rules 
\cite{BCHLS13,BHLS15}, 
and lead to the investigation of non-Lie systems 
\cite{BCFHL18} 
as well as physical and mathematical problems 
\cite{BHLS15,CCJL18,GMR97,LL18}.

The first aim of this paper is to introduce Lie systems admitting a Vessiot--Guldberg Lie algebra of Hamiltonian vector fields relative to a multisymplectic structure \cite{CIL-96b}. 
Such systems are called 
\emph{multisymplectic Lie systems} 
and the associated multisymplectic form is referred to as a 
\emph{compatible multisymplectic form}. 
To motivate the relevance of multisymplectic Lie systems, 
we provide examples appearing in physical and mathematical problems, 
e.g.\ in the study of the Schwarz derivative  \cite{Be07,LG99,OT09}, 
control systems 
\cite{Ni00,Ra06}, 
and diffusion equations 
\cite{SSV14}.

The theory of multisymplectic Lie systems plays a more relevant role than Lie systems related to other geometric structures 
\cite{BCHLS13,CGLS14,CLS13,LL18,LV15}. 
For instance, since symplectic structures are multisymplectic ones, 
the Lie--Hamilton systems related to symplectic structures 
\cite{CLS13} 
can be considered as multisymplectic Lie systems. 
Moreover, we prove that the hereafter called  \emph{locally automorphic Lie systems}, 
which are Lie systems locally diffeomorphic to the very relevant \emph{automorphic Lie systems} 
\cite{BM09,CGM00,KCD93}, 
can always be studied through multisymplectic Lie systems 
(cf.\ Theorem \ref{th:invariant_form}). Consequently, multisymplectic Lie systems admit much more applications 
than other types of Lie systems with compatible geometric structures which cannot be used to study all above-mentioned types of Lie systems, e.g. Lie--Hamilton ones  
(cf.\ \cite{CGLS14,LV15}).
As a byproduct of our techniques, 
we find that multisymplectic Lie systems can be frequently endowed with other compatible geometric structures, 
e.g.\ Dirac and $k$-symplectic structures 
\cite{CGLS14,LV15}, 
which can be used to apply previously known techniques, 
for instance, 
to obtain their constants of motion or superposition rules.

More specifically, 
Theorems \ref{th:invariant_form}, \ref{th:MTS}, 
and Corollary \ref{cor:InvVolumeForm}, 
give methods to endow locally automorphic Lie systems 
with a compatible multisymplectic structure 
and other invariants, 
e.g.\ compatible presymplectic structures which can also be understood as Dirac structures 
\cite{CGLS14}.  
As the local diffeomorphisms mapping 
locally automorphic Lie systems into automorphic ones 
are difficult to be obtained explicitly and they are generally locally defined, 
it is unlikely that automorphic Lie systems can be used to study directly
locally automorphic Lie systems.
Despite that, the existence of the local diffeomorphism 
(see Theorem \ref{Trivial}) 
is at the core of all methods in Section \ref{section:LS-IF} 
to infer the existence of invariant geometric structures, 
e.g.\ multisymplectic volume forms, 
for locally automorphic Lie systems.  
Such compatible multisymplectic forms 
can be obtained algebraically in an easy manner under mild conditions, 
e.g.\ when the locally automorphic Lie system 
admits a unimodular Vessiot--Guldberg Lie algebra 
(see Corollary \ref{cor:InvVolumeForm}). 
Additionally, other accessory results concerning the properties of locally automorphic Lie systems are detailed in 
Corollaries 
\ref{cor:superposition_rule_locally_automorphic_LS} and 
\ref{cor:superposition_rule_locally_automorphic_LS_2}. 

Next, compatible multisymplectic forms are employed to study superposition rules and constants of motion for multisymplectic Lie systems in geometric and algebraic terms. 
Our methods do not require the integration of systems of partial or ordinary differential equations as in most standard methods in the literature 
\cite{CGM07,PW}. 
Our procedures also avoid the transformation of a system of differential equations onto a normal form, as employed in several works 
\cite{GL17,SW84}. 
As a byproduct, 
our approach also retrieves algebraically and geometrically 
invariants and geometric structures related to Lie systems appearing in previous works  
\cite{BCHLS13,BHLS15,CGLS14,LV15}. 
These structures were obtained in the above-mentioned articles in an {\it ad-hoc} manner 
or by solving systems of PDEs. 
Then, our work simplifies their derivation. 

Remarkably, we show that the coalgebra method to derive superposition rules for Lie--Hamilton systems developed in 
\cite{BCHLS13,BHLS15} 
can be retrieved as a particular case of our techniques 
when it concerns Lie--Hamilton systems related to symplectic forms. 
Moreover, 
our methods give rise to tensor field invariants for multisymplectic Lie systems from invariants of tensor algebras, 
which are more general than the invariant structures appearing in the standard coalgebra method, e.g. Casimir elements and invariant functions \cite{BCHLS13}. 

More specifically, 
a multisymplectic Lie system $(N,\Theta,X)$, 
where $X$ is a Lie system on a manifold $N$ with a compatible multisymplectic form $\Theta$, 
is endowed with a finite-dimensional Lie algebra $\mathfrak{M}$ 
of Hamiltonian forms for one of its Vessiot--Guldberg Lie algebras: 
a so-called {\it Lie--Hamilton algebra} of~$X$. 
If $\mathfrak{g}$ is an abstract Lie algebra isomorphic to~$\mathfrak{M}$, 
then the adjoint representation of $\mathfrak{g}$ can be extended to a Lie algebra representation on the tensor algebra, $T(\mathfrak{g})$, 
which makes the latter into a 
\emph{$\mathfrak{g}$-module} 
\cite{Va84}. 
Similarly, the symmetric and Grassmann algebras, 
$S(\mathfrak{g})$ and $\mathsf{\Lambda}(\mathfrak{g})$, 
can be considered as 
$\mathfrak{g}$-submodules of $T(\mathfrak{g})$. 
Moreover, we endow 
$T(\mathfrak{g})$, 
$S(\mathfrak{g})$, 
$\mathsf{\Lambda}(\mathfrak{g})$ 
with {\it coalgebra structures} 
(see \cite{BCHLS13,Pressley} for details), 
which are extended to the tensor products 
$T^{(m)}(\mathfrak{g})=
T(\mathfrak{g})\boxtimes\stackrel{m}{\ldots} \boxtimes T(\mathfrak{g})$, 
$S^{(m)}(\mathfrak{g})=
S(\mathfrak{g})\boxtimes\stackrel{m}{\ldots} \boxtimes S(\mathfrak{g})$, and 
$\mathsf{\Lambda}^{(m)}(\mathfrak{g})=
\mathsf{\Lambda}(\mathfrak{g})\boxtimes\stackrel{m}{\ldots} \boxtimes \mathsf{\Lambda}(\mathfrak{g})$. 
Previous structures are then represented as covariant tensor fields on~$N$ and $N^m$ in such a way that 
the $\mathfrak{g}$-invariants in $T(\mathfrak{g})$ 
(or its  $\mathfrak{g}$-submodules 
$\mathsf{\Lambda}(\mathfrak{g})$ and $S(\mathfrak{g})$) 
give rise to tensor invariants for~$X$ and its diagonal prolongations 
\cite{CGM07}; 
see diagrams (\ref{StarTrekki1}) and (\ref{StarTrekki2}) for details. 
This is employed to obtain constants of motion and superposition rules for~$X$ 
\cite{Dissertationes}.

Our approach shows that invariants and superposition rules for multisymplectic Lie systems can be obtained through Casimir elements of universal enveloping algebras, 
which can be understood as symmetric tensors in $T(\mathfrak{g})$, 
or co-cycles of the Chevalley--Eilenberg cohomology of $\mathfrak{g}$ (see \cite{Va84}), 
which are understood as antisymmetric tensors of $T(\mathfrak{g})$. 
Moreover, this method gives rise to obtaining $k$-symplectic or presymplectic structures compatible with Lie systems, 
which allows for the application of the techniques in \cite{CGLS14,LV15} to study multisymplectic Lie systems.

As an application, our methods are employed to study superposition rules for multisymplectic Lie systems related to locally automorphic Lie systems. 
In particular, the cases of Schwarz equations and Riccati-type diffusion systems are studied in detail, 
while control and Darboux--Brioschi--Halphen systems are used to illustrate some results and/or techniques 
\cite{Halphen,Ni00,Ra06}.

The structure of the paper goes as follows. 
Section~2 surveys several fundamental concepts on Lie systems and multisymplectic structures to be used hereupon. 
Section~3 is devoted to motivating the definition of multisymplectic Lie systems and to illustrating some of its applications in the physics and mathematics literature. 
Methods for the calculation of compatible multisymplectic forms for locally automorphic Lie systems are described in Section~4. 
The use of multisymplectic structures and tensor coalgebras for the determination of invariants, constants of motion, and superposition rules for multisymplectic Lie systems is developed in Section~5. 
Section~6 summarises the results of the work and provides some hints on future research. 
Additionally, the Appendix contains the proof of some technical results.

\section{Some basic concepts and notations}
\label{section:basic_concepts}

Unless otherwise stated, we assume all mathematical objects to be real, smooth, and globally defined.
This permits us to omit minor technical problems so as to highlight the main aspects of our theory. 
$N$ will hereafter represent an $n$-dimensional connected manifold. All remaining manifolds are considered to be finite-dimensional and connected if not stated otherwise.

\subsection{Generalised distributions and t-dependent vector fields}
\label{subsection:generalised_distributions}

Let $V$ be a Lie algebra.
Given two subsets $\mathcal{A}, \mathcal{B} \subset V$, 
we write $[\mathcal{A},\mathcal{B}]$ for the linear space spanned by the Lie brackets between elements of $\mathcal{A}$ and $\mathcal{B}$.
Meanwhile, ${\rm Lie}(\mathcal{B})$ stands for the smallest Lie subalgebra of $V$ containing $\mathcal{B}$.

Given a vector bundle $\rho \colon P\rightarrow N$, we denote by $\Gamma(\rho)$
its $C^\infty(N)$--module of sections.
In particular, if $\tau_N\colon {\rm T}N\rightarrow N$ is the tangent bundle projection, then
$\mathfrak{X}(N) = \Gamma(\tau_N)$
designates the $C^\infty(N)$--module of vector fields on~$N$.

Remember that a
{\it generalised distribution}
$\mathcal{D}$ on a
manifold $N$ is a function mapping each $x\in N$ to a linear
subspace $\mathcal{D}_x\subset {\rm T}_xN$.
A vector field $Y$ on~$N$ is said to
take values in $\mathcal{D}$, in short $Y\in\mathcal{D}$, when
$Y_x\in\mathcal{D}_x$ for all $x\in N$. 

An arbitrary set $\mathcal{V}$ of vector fields on~$N$ generates a generalised distribution $\mathcal{D}^\mathcal{V}$ on $N$ by considering, at each point $x\in N$, the linear span of all of its vector fields:
$\mathcal{D}^\mathcal{V}_x = \mathrm{span}\{X_x \mid X \in \mathcal{V} \}$.
As these vector fields are smooth by assumption, the generalised distribution $\mathcal{D}^\mathcal{V}$ is smooth \cite{DLPR12}. 
Along the paper all distributions are assumed to be smooth.

The dimension of $\mathcal{D}_x$ is called the {\it rank} of $\mathcal{D}$ at~$x$.
A generalised distribution $\mathcal{D}$ is
{\it regular at} $x'\in N$ when,
in a neighbourhood of~$x'$,
the distribution has constant rank. 
The generalised distribution
$\mathcal{D}$ is called \emph{regular} when its rank is constant on the whole~$N$.

A {\it $t$-dependent vector field} on $N$ is a map
$X \colon (t,x) \in \mathbb{R} \times N  \mapsto  X(t,x) \in {\rm T}N$
such that $\tau_N\circ X=\pi_2$,
where
$\pi_2 \colon (t,x) \in \mathbb{R} \times N  \mapsto  x \in N$. An {\it integral curve} of $X$ 
is a curve $\gamma \colon \mathbb{R} \to N$
such that 
\begin{equation}\label{Eq:Sys}
\frac{{\rm d}\gamma}{{\rm d} t}(t) = X(t,\gamma(t))
\,,
\quad 
\forall t\in \mathbb{R}
\,.
\end{equation}
Then,
$\tilde\gamma(t) = (t,\gamma(t))$
is an integral curve
of the {\it suspension} $\tilde X$ of~$X$,
namely the vector field 
$\displaystyle
\tilde X = \partial/\partial t + X$ 
on the product manifold 
$\mathbb{R}\times N$
\cite{FM}.
Conversely,
if $\tilde\gamma \colon \mathbb{R} \to \mathbb{R} \times N$ 
is an integral curve of the suspension~$\tilde X$
satisfying
$(\pi_1 \circ \tilde\gamma)(t) = t$
for every~$t$,
where
$\pi_1 \colon (t,x) \in \mathbb{R} \times N  \mapsto  t \in \mathbb{R}$,
then
$\gamma = \pi_2 \circ \tilde\gamma$
is an integral curve of~$X$.

Every $t$-dependent vector field $X$ gives rise to a unique system (\ref{Eq:Sys}) describing its integral curves. 
Also, every system (\ref{Eq:Sys}) describes the integral curves $\bar\gamma:t\in \mathbb{R}\rightarrow (t,\gamma(t))\in \mathbb{R}\times N$ of the suspension of a unique $t$-dependent vector field $X$. 
This motivates to use $X$ to designate both a $t$-dependent vector field and its associated system (\ref{Eq:Sys}), indistinctly. 

Notice that giving a $t$-dependent vector field~$X$ 
amounts to giving 
a family of vector fields
$\{X_t\}_{t\in\mathbb{R}}$ on~$N$,
with 
$X_t \colon x \in N  \mapsto  X(t,x) \in {\rm T}N$
\cite{Dissertationes}. This enables us to relate $t$-dependent vector fields to several geometric structures given in the following definition.

\begin{definition}
\label{def:smallest_Lie_algebra}
Let $X$ be a $t$-dependent vector field on~$N$. 
The {\it smallest Lie algebra} of~$X$
is the smallest real Lie algebra, $V^X$, 
containing the vector fields $\{X_t\}_{t\in\mathbb{R}}$, namely
$V^X={\rm Lie}(\{X_t\}_{t\in\mathbb{R}})$. 
The {\it associated distribution} of~$X$ is the generalised distribution on~$N$ spanned by the vector
fields of the smallest Lie algebra $V^X$, that is,
$\mathcal{D}^{V^X}$. 
\end{definition}

It can be proved that
the rank of $\mathcal{D}^{V^X}$
must only be
constant on the connected components of an open and dense subset of~$N$,
where the distribution becomes regular, involutive, and integrable (see \cite{CLS13}).
The most relevant instance for us is when $\mathcal{D}^{V^X}$ is determined by a finite-dimensional $V^X$ 
and hence the distribution becomes integrable on the whole~$N$ in the sense of Stefan--Sussmann 
\cite[p.\,63]{JPOT}.
Among other reasons, the associated distribution is important to study superposition rules for Lie systems \cite{Dissertationes}.

\subsection{Lie systems}\protect\label{subsection:Lie systems}

Let us now turn to some fundamental notions appearing in the theory of Lie systems (see \cite{Dissertationes} for details).

\begin{definition}\label{def:superposition_rule} A {\it superposition rule} depending on $m$ particular solutions for a system~$X$ in $N$ 
is a function
$\Phi \colon N^{m} \times N \rightarrow N$,
$x=\Phi(x_{(1)}, \ldots,x_{(m)};\lambda)$,
such that the general solution, $x(t)$, of $X$ can be brought into the form
$x(t)=\Phi(x_{(1)}(t), \ldots,x_{(m)}(t);\lambda),$
where $x_{(1)}(t),\ldots,x_{(m)}(t)$ is any generic family of
particular solutions and $\lambda$ is a point of $N$ to be related to initial conditions. 
A {\it Lie system} is a system of first-order ordinary differential equations which admits a superposition rule.
\end{definition}

The conditions ensuring that a $t$-dependent system possesses a superposition rule are
stated
by the {\it Lie--Scheffers theorem} \cite{CGM07,LS}.

\begin{theorem}
\label{th:characterization_superposition_rule}
A $t$-dependent vector field $X$ admits a superposition rule if and only if
$X$ can be written as
$X = {{\sum_{\alpha=1}^r}} b_\alpha(t) X_\alpha$,
for a certain family $b_1(t),\ldots,b_r(t)$ of functions and a
collection $X_1,\ldots,X_r$ of vector fields spanning
an $r$-dimensional real Lie algebra.
\end{theorem}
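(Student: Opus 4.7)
The plan is to prove the two implications separately, using the \emph{diagonal prolongation} of $X$ onto products $N^{m+1}$ as the common technical device.

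For the sufficiency direction, suppose $X_t = \sum_{\alpha=1}^r b_\alpha(t) X_\alpha$ with $V = \mathrm{span}\{X_1,\ldots,X_r\}$ closed under the Lie bracket. I would pass to the diagonal prolongations $\widetilde X_\alpha$ on $N^{m+1}$, defined by $\widetilde X_\alpha(x_{(0)},\ldots,x_{(m)}) = \sum_{i=0}^m (X_\alpha)_{x_{(i)}}$; these span a Lie algebra $\widetilde V$ homomorphic to $V$ and generate an involutive distribution $\widetilde{\mathcal D}$ of rank at most $r$. Choosing $m$ large enough so that at a generic diagonal point $(\lambda,\ldots,\lambda)$ the distribution has constant rank and its complement surjects onto the tangent space of the last factor, the Frobenius/Stefan--Sussmann theorem produces $n=\dim N$ functionally independent common first integrals $F_1,\ldots,F_n$ of $\widetilde V$, which are then automatically first integrals of every $\widetilde X_t$. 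Solving $F_j(x_{(1)}(t),\ldots,x_{(m)}(t),x(t)) = F_j(x_{(1)},\ldots,x_{(m)},\lambda)$ for $x(t)$ via the implicit function theorem yields the superposition rule.

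For the necessity direction, suppose $\Phi\colon N^{m+1}\to N$ is a superposition rule. Differentiating the identity $x_{(0)}(t) = \Phi(x_{(1)}(t),\ldots,x_{(m)}(t),\lambda_0)$ along generic particular solutions shows that at any point and time the value $X_t(x_{(0)})$ is a fixed linear combination of $X_t(x_{(1)}),\ldots,X_t(x_{(m)})$ whose coefficients depend only on the chosen auxiliary points. Equivalently, locally inverting $\Phi$ in the last argument produces $n$ functions on $N^{m+1}$ that are common constants of motion of the diagonal prolongation $\widetilde X_t$ for every $t$, so the associated distribution generated by prolongations of elements of $V^X$ has rank at most $mn$ on an open dense subset. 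Since for a generic $(m+1)$-tuple of points the evaluation map $Y \in V^X \mapsto (Y_{x_{(i)}})_{i=0}^m$ is injective, this bound transfers upstairs to yield $\dim V^X \le mn$; any basis of $V^X$ then supplies the desired $X_1,\ldots,X_r$, and the decomposition $X_t = \sum_\alpha b_\alpha(t) X_\alpha$ follows.

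The main obstacle lies precisely in the last step of the necessity direction: extracting finite-dimensionality of $V^X$ from a rank bound on its prolongation. The difficulty is somewhat circular, since $V^X$ is a priori infinite-dimensional and the injectivity of the evaluation map requires enough points on a common generic orbit, whose size in turn depends on $\dim V^X$. The resolution uses the fact recalled just before Definition~\ref{def:smallest_Lie_algebra} that $\mathcal{D}^{V^X}$ becomes regular, involutive, and integrable in the Stefan--Sussmann sense on an open dense subset; choosing $(m+1)$ points in general position on a single integral leaf already detects all elements of $V^X$, and this is what closes the argument.
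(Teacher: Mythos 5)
The paper itself does not prove this statement: it is the Lie--Scheffers theorem, quoted with references to \cite{CGM07,LS}, so your proposal must be measured against the standard argument in those sources. Your sufficiency half follows that argument and is essentially sound: prolong a basis of $V$ diagonally, use involutivity and the finite rank $r$ to produce $n$ common first integrals, and apply the implicit function theorem. The only imprecisions are the condition you impose on $m$ and its existence. What is actually needed is that the prolongations of $X_1,\ldots,X_r$ to $N^m$ be linearly independent at a generic point; this is what makes the rank-$r$ distribution on $N^{m+1}$ intersect trivially the tangent spaces to the factor being solved for, which is the exact condition for being able to choose $n$ first integrals with nonvanishing Jacobian with respect to that factor. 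Your phrase ``its complement surjects onto the tangent space of the last factor'' is not equivalent to this, and the existence of such an $m$ is itself a (standard, but not free) lemma.

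The necessity half has a genuine gap at precisely the step you flag, and the resolution you propose does not close it. Injectivity of the evaluation map $Y\in V^X\mapsto (Y_{x_{(0)}},\ldots,Y_{x_{(m)}})$ is \emph{equivalent} to $\dim V^X\le (m+1)n$: its kernel consists of the elements of $V^X$ vanishing at those points, a subspace of codimension at most $(m+1)n$, hence nonzero whenever $V^X$ is infinite-dimensional. So assuming that injectivity for generic points is assuming the conclusion. Moreover, the fact you invoke to break the circle---regularity and integrability of $\mathcal{D}^{V^X}$ on an open dense subset---holds for \emph{every} $t$-dependent vector field (it is stated before Definition \ref{def:smallest_Lie_algebra} with no hypothesis on $X$), so no argument based only on it and on points in general position on a leaf can force finite-dimensionality; for instance, the Lie algebra of all polynomial vector fields on $\mathbb{R}$ spans a regular rank-one distribution, yet evaluation at any finite set of points has infinite-dimensional kernel. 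The correct closing move uses the superposition rule itself. Let $W$ be the Lie algebra of all vector fields on $N$ whose diagonal prolongations to $N^{m+1}$ are tangent to the codimension-$n$ foliation by the graphs $x_{(0)}=\Phi(x_{(1)},\ldots,x_{(m)};\lambda)$; each $X_t$ lies in $W$, hence so does $V^X$. Tangency says that $Y\bigl(\Phi(x_{(1)},\ldots,x_{(m)};\lambda)\bigr)$ is determined linearly by $Y(x_{(1)}),\ldots,Y(x_{(m)})$, and since for generic fixed $x_{(1)},\ldots,x_{(m)}$ the points $\Phi(x_{(1)},\ldots,x_{(m)};\lambda)$ sweep out (a dense subset of) $N$ as $\lambda$ varies, any $Y\in W$ vanishing at $x_{(1)},\ldots,x_{(m)}$ vanishes identically. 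Thus the evaluation of $W$ at the $m$ input points alone is injective, $\dim V^X\le\dim W\le mn$, and expanding $X_t$ in a basis of $V^X$ gives the required form. It is this injectivity on $W$, derived from the graph structure of the foliation and the surjectivity in $\lambda$, rather than any genericity statement about $V^X$, that closes the argument.
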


In other words,
$X$ admits a superposition rule if and only if
its smallest Lie algebra $V^X$ is finite-dimensional. 
Normally, a Lie system is given in the form 
$X=\sum_{\alpha=1}^rb_\alpha(t)X_\alpha$, 
where the vector fields $X_\alpha$, with $\alpha=1,\ldots,r$, 
span a Lie algebra $V$ that may strictly contain $V^X$. 
Then, $V$ is called a
{\it Vessiot--Guldberg Lie algebra} of~$X$.

In view of the preceding theorem and comments,
from now on
we will denote a \emph{Lie system} as a triple 
$(N,X,V)$,
where $N$ is a manifold,
$X$ is a $t$-dependent vector field on~$N$ as given by Lie--Scheffers theorem,
and $V$ is a Vessiot--Guldberg Lie algebra of~$X$.

Now we will see how the integration of a Lie system on a manifold 
can be reduced to 
the integration of a Lie system on a Lie group \cite{CGM00,Ve1904}.
To this end, we need to recall some facts about Lie group actions.

Consider a (left) Lie group action
$\varphi \colon G \times N \to N$,
or more generally
a local Lie group action
$\varphi \colon D \subset G \times N \to N$,
and denote by
$\varphi_{x} = \varphi(\cdot,x)$ 
the partial map defined, in general, on an open neighbourhood of the neutral element of~$G$ (see \cite{Palais} for details).
Then one defines, 
for every $\xi \in \mathrm{T}_eG$,
the fundamental vector field $\xi_N \in \mathfrak{X}(N)$,
given by
$\xi_N(x) = T_e \varphi_{x} (\xi)$.
The tangent space $\mathrm{T}_eG$ is in bijection with 
the left-invariant, $\mathfrak{X}_L(G)$,
and right-invariant, $\mathfrak{X}_R(G)$, 
vector fields of~$G$;
let us denote by $\xi^L$ and $\xi^R$ 
the respective invariant vector fields associated with 
$\xi \in \mathrm{T}_eG$.
By convention,
$\mathfrak{g} = \mathrm{T}_eG$ 
inherits its Lie algebra structure from $\mathfrak{X}_L(G)$.
Then
(cf.\ \cite[Ch.\,20]{Lee})
\begin{enumerate}
\itemsep 0pt
\item 
The map 
$\hat\varphi \colon \mathfrak{g} \to \mathfrak{X}(N)$,
$\xi \mapsto \xi_N$,
is a Lie algebra antihomomorphism
(the \emph{infinitesimal generator} of~$\varphi$).
\item
For every $x \in N$,
the \emph{right}-invariant vector field $\xi^R$
and the fundamental vector field $\xi_N$ 
are $\varphi_x$-related.
\end{enumerate}

In general, an antihomomorphism 
$\mathfrak{g} \to \mathfrak{X}(N)$ 
is called a (left) Lie algebra action, and,
when $\mathfrak{g}$ is finite-dimensional,
it can be integrated to a Lie group action.
More precisely:

\begin{theorem}
Let $\mathfrak{g}$ be a finite-dimensional Lie algebra,
and let $G$ be a Lie group with Lie algebra $\mathfrak{g}$.
Given a Lie algebra action $\hat\varphi$ of $\mathfrak{g}$ on a manifold~$N$,
there exists a local Lie group action 
$\varphi$ of~$G$ on~$N$
such that $\hat\varphi$ is the infinitesimal generator of~$\varphi$.
If $G$ is simply connected and the vector fields of the image of~$\hat\varphi$ are complete,
then $\varphi$ can be supposed to be a global Lie group action. 
\end{theorem}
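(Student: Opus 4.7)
The plan is to construct $\varphi$ by the classical graph method, realising the partial maps $\varphi_x$ as leaves of an involutive distribution on $G\times N$.

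For each $\xi\in\mathfrak{g}$, introduce the vector field $E_\xi\in\mathfrak{X}(G\times N)$ given by $E_\xi(g,x)=(\xi^R(g),\xi_N(x))$. Since $\mathfrak{g}$ is identified with the left-invariant fields, the right-invariant fields satisfy $[\xi^R,\eta^R]=-[\xi,\eta]^R$; combined with the antihomomorphism property $[\xi_N,\eta_N]=-[\xi,\eta]_N$, one obtains $[E_\xi,E_\eta]=-E_{[\xi,\eta]}$. As $\xi\mapsto\xi^R(g)$ is injective at every $g$, the vector fields $E_\xi$ span a regular involutive distribution $\mathcal{D}$ of rank $\dim\mathfrak{g}$ on $G\times N$. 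By the Frobenius theorem, $\mathcal{D}$ integrates to a foliation; denote by $\mathcal{L}_x$ the leaf through $(e,x)$.

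The projection $\mathrm{pr}_G\colon\mathcal{L}_x\to G$ has surjective differential at $(e,x)$, since $\mathrm{T}_{(e,x)}\mathcal{L}_x$ contains every $E_\xi(e,x)=(\xi,\xi_N(x))$. It is therefore a local diffeomorphism onto an open neighbourhood $U_x$ of~$e$, and its inverse has the form $g\mapsto(g,\varphi(g,x))$. This defines a smooth map $\varphi\colon D\to N$ on an open neighbourhood $D$ of $\{e\}\times N$ in $G\times N$, with $\varphi(e,x)=x$ and such that each graph of $\varphi_x$ is tangent to $\mathcal{D}$; hence $\xi^R$ and $\xi_N$ are $\varphi_x$-related and $\hat\varphi$ is the infinitesimal generator of $\varphi$. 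For the action axiom $\varphi(gh,x)=\varphi(g,\varphi(h,x))$, I would observe that the diffeomorphism $R_{h^{-1}}\times\mathrm{id}_N$ preserves each $\xi^R$, and hence $\mathcal{D}$; so it sends leaves to leaves, carrying $\mathcal{L}_x$ (which contains $(h,\varphi(h,x))$) to the leaf through $(e,\varphi(h,x))$, which must be $\mathcal{L}_{\varphi(h,x)}$. Since $(gh,\varphi(gh,x))\in\mathcal{L}_x$ is mapped to $(g,\varphi(gh,x))\in\mathcal{L}_{\varphi(h,x)}$, uniqueness of the $N$-coordinate over a fixed $g$ near~$e$ forces the axiom.

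Finally, when $G$ is simply connected and every vector field in $\hat\varphi(\mathfrak{g})$ is complete, every $E_\xi$ is complete on $G\times N$, with flow $(g,x)\mapsto(\exp(t\xi)g,\phi^{\xi_N}_t(x))$. Piecewise-smooth paths in $G$ issuing from $e$ can then be lifted to $\mathcal{L}_x$ by concatenating flows of the $E_\xi$, showing that $\mathrm{pr}_G\colon\mathcal{L}_x\to G$ is a covering map. Simple connectedness of~$G$ promotes this covering to a global diffeomorphism, so $\varphi$ extends smoothly to all of $G\times N$; the action axiom extends by the same leaf-translation argument, now valid globally. The most delicate step is the covering-map claim: its proof requires that concatenated lifts remain in $\mathcal{L}_x$ for arbitrarily long paths in~$G$, and this is exactly where the completeness hypothesis enters in an essential way.
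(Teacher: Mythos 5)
Your proposal is correct and is essentially the classical Lie--Palais integration argument: the paper itself gives no proof of this theorem, deferring instead to the references (Palais, Lee, Bourbaki), and the graph/foliation construction you describe --- the involutive distribution spanned by $E_\xi=(\xi^R,\xi_N)$ on $G\times N$, the leaf through $(e,x)$ as the graph of $\varphi_x$, right-translation invariance for the action axiom, and completeness plus simple connectedness turning the leaf projection from a covering into a diffeomorphism --- is exactly the proof found in those sources, with the sign conventions matching the paper's (bracket from left-invariant fields, $\hat\varphi$ an antihomomorphism, so $[E_\xi,E_\eta]=-E_{[\xi,\eta]}$). The two points you leave as sketches (joint smoothness of $\varphi$ near $\{e\}\times N$, which follows from foliated charts, and the covering-map property, which uses completeness to produce uniform-size flow sections over translates of $\exp$ of a fixed ball) are standard and correctly identified as the only delicate steps.
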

The proof of these results, and other related facts, can be found, for instance, in
\cite[p.\,58]{Palais}
\cite[p.\,529]{Lee}
\cite[p.\,207]{Bo68}.
This Lie group action is the device that relates the Lie system on~$N$ to a Lie system on~$G$:

\begin{theorem}
\label{Th:XG-X}
Let $(N,X,V)$ be a Lie system of the form
$X = {{\sum_{\alpha=1}^r}} b_\alpha(t) X_\alpha$,
where
$X_1,\ldots,X_r$ is a basis of the Vessiot--Guldberg
Lie algebra~$V$.
Let $G$ be a Lie group whose Lie algebra is isomorphic to $V$,
and let $\varphi$ be a local group action of $G$ on~$N$
as given by the preceding theorem.
Let $X_\alpha^R$ be the right-invariant vector fields on~$G$
related to the vector fields $X_\alpha$ through the action $\varphi$.
Then
\begin{enumerate}
\item 
The triple 
$(G,X^G,V^G)$,
where
\begin{equation}
\label{Eq:EquLie}
X^G(t,g) = \sum_{\alpha=1}^r b_\alpha(t) X_\alpha^R(g)
\end{equation}
and
$V^G = \mathfrak{X}_R(G)$,
is a Lie system on~$G$.
\item
For every $x_0 \in N$ and $t\in \mathbb{R}$,
the vector field
$X_t^G$ is $\varphi_{x_0}$-related with~$X_t$; 
namely, 
the $t$-dependent vector fields $X^G$ and $X$ are $\varphi_{x_0}$-related.
\item
If $g(t)$ is the integral curve of $X^G$ with $g(0)=e$,
and $x_0 \in N$,
then 
$x(t) = \varphi(g(t),x_0)$
is the integral curve of~$X$ with $x(0)=x_0$.
\end{enumerate}
\end{theorem}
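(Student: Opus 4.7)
The proof breaks into three essentially routine pieces, each leaning on a structural fact already recorded just above the statement. The guiding idea is that the right-invariant vector fields $X_\alpha^R$ on $G$ and the infinitesimal generators $X_\alpha$ on $N$ are linked, vector by vector, through the relation between fundamental vector fields and right-invariant vector fields of a Lie group action.

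For part (1), my plan is to invoke the Lie--Scheffers theorem (Theorem \ref{th:characterization_superposition_rule}). The Lie algebra $\mathfrak{X}_R(G)$ is finite-dimensional of dimension $\dim G = \dim V = r$, and the correspondence $\xi \mapsto \xi^R$ from $\mathfrak{g}$ to $\mathfrak{X}_R(G)$ is a linear isomorphism (in fact a Lie algebra isomorphism up to sign convention). Since $X_\alpha = \hat\varphi(\xi_\alpha)$ for a basis $\xi_1,\dots,\xi_r$ of $\mathfrak{g}\cong V$, the right-invariant lifts $X_\alpha^R = \xi_\alpha^R$ form a basis of $\mathfrak{X}_R(G)$. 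Thus $X^G(t,\cdot) = \sum_\alpha b_\alpha(t)\, X_\alpha^R$ is a $t$-dependent vector field taking values in the $r$-dimensional Lie algebra $\mathfrak{X}_R(G)$, which by Lie--Scheffers makes $(G,X^G,\mathfrak{X}_R(G))$ a Lie system.

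For part (2), I would invoke the second bullet point in the list preceding the theorem: for each $x_0\in N$ and each $\xi\in\mathfrak{g}$, the right-invariant vector field $\xi^R$ on $G$ and the fundamental vector field $\xi_N$ on $N$ are $\varphi_{x_0}$-related. Applied to $\xi_\alpha$, this gives $T\varphi_{x_0}\circ X_\alpha^R = X_\alpha\circ \varphi_{x_0}$ for each $\alpha$. Taking a $\mathbb{R}$-linear combination with coefficients $b_\alpha(t)$, we obtain
\begin{equation*}
T\varphi_{x_0}\circ X_t^G = X_t\circ \varphi_{x_0}\,,
\end{equation*}
which is exactly the $\varphi_{x_0}$-relatedness of $X_t^G$ and $X_t$ for all $t\in\mathbb{R}$.

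For part (3), the plan is to differentiate $x(t) = \varphi_{x_0}(g(t))$ using the chain rule and plug in part (2). Namely,
\begin{equation*}
\frac{{\rm d}x}{{\rm d}t}(t)
= T\varphi_{x_0}\!\left(\frac{{\rm d}g}{{\rm d}t}(t)\right)
= T\varphi_{x_0}\bigl(X^G_t(g(t))\bigr)
= X_t\bigl(\varphi_{x_0}(g(t))\bigr)
= X_t(x(t))\,,
\end{equation*}
and the initial condition $x(0) = \varphi(e,x_0) = x_0$ holds because $\varphi$ is a (local) group action. The only mild obstacle I anticipate is a bookkeeping one: keeping the left/right and homomorphism/antihomomorphism conventions consistent, since $\hat\varphi$ is an antihomomorphism and one must check that this is precisely compensated by the use of \emph{right}-invariant (rather than left-invariant) vector fields on~$G$, so that $\mathrm{span}\{X_\alpha^R\}$ is closed under Lie bracket and relatedness in part (2) goes through without a sign flip. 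Once this convention check is made, nothing else is required.
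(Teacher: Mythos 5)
Your proposal is correct and follows essentially the same route as the paper's own (very brief) proof: the $X_\alpha^R$ span $\mathfrak{X}_R(G)$ so Lie--Scheffers gives part (1), the $\varphi_{x_0}$-relatedness of $\xi^R$ and $\xi_N$ gives part (2) by linearity, and relatedness plus the chain rule gives part (3). You merely fill in the routine details that the paper delegates to the cited references, and your remark about the antihomomorphism/right-invariance convention is an accurate accounting of the only point needing care.
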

The proof of this result is almost immediate:
the vector fields $X_\alpha^R$ span the Lie algebra
$\mathfrak{X}_R(G)$, the $t$-dependent vector field
$X^G$ is $\varphi_{x_0}$-related with~$X$,
and this implies that 
$\varphi_{x_0}$ maps integral curves of $X^G$ to integral curves of~$X$ (see
\cite{CGM00,Dissertationes}
for details).

In this manner, if $\varphi$ is explicitly known, 
then finding all the integral curves of~$X$
reduces to finding one particular integral curve of~(\ref{Eq:EquLie}).
Conversely, the general solution of $X$ enables us to construct 
the integral curve of (\ref{Eq:EquLie}) with $g(0)=e$
by solving an algebraic system of equations obtained through~$\varphi$;
this can be used to solve other Lie systems
\cite{Dissertationes}.

Lie systems of the form (\ref{Eq:EquLie}) are sometimes called 
\emph{automorphic Lie systems} in the literature 
\cite{BS08,Ve1904}. 
Due to their specific structure, these systems admit invariant forms relative to their evolution;
this will be studied later in Section~4. 
We will say that the automorphic Lie system 
$(G,X^G,V^R)$,
where $V^R = \mathfrak{X}_R(G)$, 
is \emph{associated} with $(N,X,V)$.

It is worth noting that the automorphic Lie system related to the $t$-dependent right-invariant vector field (\ref{Eq:EquLie}) 
is invariant under the right multiplications 
$R_g: g'\in G \rightarrow g'g\in G$. 
Then, if $g_1(t)$ is a particular solution to the system $X^G$ with initial condition $g_1(0)$, 
then $R_gg_1(t)$ is a new particular solution of the same system with initial condition $g_1(0)g$. 
Hence, the general solution to this system can be written as 
$g(t)=\Phi^G(g_1(t);g) := R_g \,g_1(t)$, 
which gives rise to the superposition rule 
$\Phi^G \colon (g_1;g) \in G\times G \mapsto R_gg_1 \in G$ 
depending on a sole particular solution 
(see \cite{Dissertationes} for details).

\subsection{Multisymplectic manifolds}\label{subsection:multisymplectic}

This section addresses the main properties of multisymplectic structures
(see \cite{CIL-96a,CIL-96b} for details). 
We hereafter write $\Omega(N)$ and $\Omega^k(N)$ for the spaces of differential forms and differential $k$-forms on~$N$, respectively.

A differential $k$-form $\omega$ on~$N$
is called \emph{1-nondegenerate}
if, for every $p \in N$,
the inner contraction
$\iota_{X_p} \omega_p = 0$
if and only if
$X_p = 0$.
In other words,
$\omega$ is 1-nondegenerate if and only if the vector bundle morphism
\[
\begin{array}{rrcl}
\omega^\flat \colon & \mathrm{T}N & \to & \mathsf{\Lambda}^{k-1} \mathrm{T}^*N
\\
& X_p & \mapsto & \iota_{X_p} \omega_p
\end{array}
\]
is injective.
In this case,
the corresponding morphism of $\mathcal{C}^\infty(N)$-modules
$\hat\omega\colon\mathfrak{X}(N) \to \Omega^{k-1}(N)$,
$X  \mapsto  \iota_X \omega$,
is also injective.

\begin{definition}\label{def:Mult}
A \emph{multisymplectic $k$-form} on $N$ is a closed and $1$-nondegenerate differential 
$k$-form $\Theta \in \Omega^k(N)$.
A \emph{multisymplectic manifold} of degree $k$ is a pair $(N,\Theta)$,
where $\Theta$ is a multisymplectic $k$-form on $N$.
\end{definition}

Thus, multisymplectic 2-forms are just symplectic forms.
Multisymplectic $n$-forms on $N$ coincide with volume forms.
In what follows, we assume that $\dim N\geq 2$. 
In this case,
every multisymplectic $k$-form has degree $k \geq 2$. 

\begin{definition}\label{def:multi_hamvf}
Let $(N,\Theta)$ be a multisymplectic manifold of degree~$k$.
A vector field $X$ on $N$ is \textit{locally Hamiltonian} if $\iota_X\Theta$ is closed;
this amounts to saying that $\Theta$ is invariant by~$X$, that is,
the Lie derivative of $\Theta$ relative to $X$ vanishes,
\[
\mathcal{L}_X \Theta = 0
\,.
\]
A vector field $X$ is \textit{globally Hamiltonian} if $\iota_X\Theta$ is exact;
that is, there exists a  differential $(k-2)$-form  $\Upsilon_X$ on $N$ such that
\[
\iota_X \Theta= d\Upsilon_X
\,.
\]
In this case, $\Upsilon_X$ is called a \textit{Hamiltonian form associated with~$X$}.
\end{definition}

For locally Hamiltonian vector fields $X, Y$,
we have
$\iota_{[Y,X]}\Theta =
d \iota_Y\iota_X\Theta$,
that is, their Lie bracket is a Hamiltonian vector field.
Therefore, the space of (locally) Hamiltonian vector fields of
$(N,\Theta)$ 
is a Lie algebra.

\begin{definition}\label{def:AH}
Let $(N,\Theta)$ be a multisymplectic manifold of degree~$k$.
\begin{itemize}
\item
Let 
$\xi,\zeta \in {\rm Im}\,\hat\Theta \subset \Omega^{k-1}(N)$, 
and let $X,Y\in\mathfrak{X}(N)$ be the unique vector fields such that $\iota_{X}\Theta=\xi$ and $\iota_{Y}\Theta=\zeta$.
The {\sl bracket between $\xi$ and $\zeta$} is defined by
\begin{equation}
\{\xi,\zeta\} = \iota_{[Y,X]}\Theta\ \in {\rm Im}\,\hat\Theta
\,.
\label{braform}
\end{equation}
It is immediate from its definition that this bracket satisfies the Jacobi identity and becomes a Lie bracket.

\item
The {\sl bracket between Hamiltonian forms}
is defined in the following way:
let
$\Upsilon_X,\Upsilon_Y \in \Omega^{k-2}(N)$ 
be Hamiltonian forms, corresponding to the Hamiltonian vector fields
$X,Y\in\mathfrak{X}(N)$. 
Then, we define
$$
\{\Upsilon_X,\Upsilon_Y\} = \iota_Y\iota_X\Theta \,.
$$
\end{itemize}
\end{definition}
It can be proved that the bracket of differential $(k-2)$-forms needs not be a Lie bracket for $k>2$ \cite{CIL-96a}.

Although the brackets for $k-1$ and $k-2$ differential forms have been denoted in the same way, 
this will not lead to any confusion and it will simplify the notation. 

From the above definitions and the properties of the Lie bracket we have that
\begin{equation}
d\{\Upsilon_X,\Upsilon_Y\}=
d\iota_Y\iota_X\Theta=
\iota_{[Y,X]}\Theta=
\{d\Upsilon_X,d\Upsilon_Y\} \,.
\label{relation1}
\end{equation}
As a consequence of the equality
$d\{\Upsilon_X,\Upsilon_Y\}=\iota_{[Y,X]}\Theta$, 
we have that $[Y,X]$ is a Hamiltonian vector field
which has $\{\Upsilon_X,\Upsilon_Y\}$ as a Hamiltonian form. 
Therefore, the space of (locally) Hamiltonian vector fields (${\rm Ham}_{\mathrm{loc}}(N)$) ${\rm Ham}(N)$ 
is a Lie algebra, 
and the function mapping a (locally) Hamiltonian vector field $X$ to $\iota_{X}\Theta$ is an injective Lie algebra anti-homomorphism. 

\medskip

Finally we will recall the notion of multivector field, 
which will be used to find constants of motion of multisymplectic Lie systems
(see \cite{EMR,FR} for more details).
An \textit{$\ell$-multivector field} on $N$ is a section of $\mathsf{\Lambda}^\ell(\mathrm{T}N)$.
An $\ell$-multivector field $Y$ is said to be \emph{decomposable} if there is a family of vector fields 
$Y_1,\ldots, Y_\ell \in \mathfrak{X}(N)$ 
such that 
$Y = Y_1 \wedge \ldots \wedge Y_\ell$.

Let $(N,\Theta)$ be a multisymplectic manifold of degree~$k$.
Generalising the notion of Hamiltonian vector field,
we say that an $\ell$-multivector field $Y$ is \textit{Hamiltonian} 
(with respect to~$\Theta$)
if there exists a $(k-\ell-1)$-form $\theta$ such that 
$\iota_Y \Theta = d\theta$. 
Additionally, $Y$ is \textit{locally Hamiltonian or multisymplectic} if 
$\mathcal{L}_Y \Theta = 0$ (see \cite{CIL-96a,CIL-96b}).

\subsection{Unimodular Lie algebras}
\label{subsection:unimodular}

This section surveys the notions of unimodular Lie algebras and unimodular Lie groups. These two definitions are necessary in the following parts of the paper, when the existence of multisymplectic structures compatible with Lie systems is addressed.

Consider a Lie group $G$ with a Lie algebra $\mathfrak{g}=\mathrm{T}_eG$. 
Recall that a (left) {\it Haar measure} on~$G$ is given by a left-invariant volume form on~$G$ \cite{Ja98}. 
Every Lie group admits a Haar measure given by a left-invariant volume form, and it is unique up to a non-zero multiplicative constant 
(cf.\ \cite{Bu13}).

Let $X^L_1,\ldots, X_r^L$ be a basis of~the Lie algebra $\mathfrak{X}_L(G)$ of left-invariant vector fields on~$G$ and
let
$\eta^L_1,\ldots,\eta_r^L$ 
be the dual basis of left-invariant differential 1-forms.
Then any left-invariant volume form on~$G$ is a nonzero scalar multiple of
$$
\Theta = \eta_1^L \wedge\ldots\wedge \eta_r^L
\,.
$$
If $X^L$ is any left-invariant vector field on~$G$, then
\begin{equation}
\label{Eq:exp}
\mathcal{L}_{X^L} \Theta = -{\rm Tr}({\rm ad}_{X^L}) \,\Theta
\,;
\end{equation}
here ${\rm Tr}$ denotes the trace of an endomorphism,
and 
${\rm ad}\colon \mathfrak{g} \to {\rm End}(\mathfrak{g})$,
$v \mapsto {\rm ad}_v$,
denotes the adjoint representation of a Lie algebra~$\mathfrak{g}$,
given by 
${\rm ad}_v w = [v,w]$.

Remember that a Lie group is called {\it unimodular} if its Haar measure is also right-invariant \cite{Mi76}.
All Abelian Lie groups, as well as all compact and semi-simple Lie groups, are unimodular \cite{Yo15}. 
In this work we are mainly concerned with the Lie algebras of unimodular Lie groups, whose main properties are detailed in the following definition and proposition.

\begin{definition}
\label{def:unimodular_lie_algebra}
A finite-dimensional Lie algebra $\mathfrak{g}$ is called {\it unimodular} when the maps 
${\rm ad}_v \in {\rm End}(\mathfrak{g})$
are traceless
---we say that the adjoint representation is {\it traceless}.
\end{definition}

\begin{proposition}
\label{prop:unimodular_LG}
A (connected) Lie group $G$ is unimodular if and only if its Lie algebra is unimodular.
\end{proposition}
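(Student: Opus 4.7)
The plan is to use the key identity (\ref{Eq:exp}) together with the fact that, on a connected Lie group, a (left-invariant) differential form is right-invariant if and only if it is annihilated by the Lie derivative along every left-invariant vector field. I would proceed in three steps.

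First, I would fix a nonzero left-invariant volume form $\Theta=\eta_1^L\wedge\dots\wedge\eta_r^L$ on $G$, whose existence and uniqueness up to a nonzero scalar has just been recalled. Since left-invariance of the Haar measure is built into its construction, the unimodularity of $G$ is equivalent to $\Theta$ being right-invariant as well, i.e.\ $R_g^*\Theta=\Theta$ for every $g\in G$.

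Next, I would translate right-invariance of $\Theta$ into an infinitesimal condition. The flow of a left-invariant vector field $X^L$ associated with $v\in\mathfrak{g}$ is given by right translation: $\phi_t^{X^L}(h)=h\exp(tv)$. Therefore
\[
\mathcal{L}_{X^L}\Theta=\left.\frac{d}{dt}\right|_{t=0} R_{\exp(tv)}^*\Theta .
\]
If $\Theta$ is right-invariant, this Lie derivative vanishes for every $X^L\in\mathfrak{X}_L(G)$. Conversely, if $\mathcal{L}_{X^L}\Theta=0$ for every left-invariant $X^L$, then $R_{\exp(tv)}^*\Theta=\Theta$ for every $v\in\mathfrak{g}$ and $t\in\mathbb{R}$; because $G$ is connected, it is generated by any neighbourhood of the identity, and in particular by finite products of exponentials $\exp(v_1)\cdots\exp(v_k)$, so $R_g^*\Theta=\Theta$ for every $g\in G$. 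This step is where connectedness of $G$ is used in an essential way, and it is the main subtlety to take care of; everything else is algebraic.

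Finally, I would invoke the identity (\ref{Eq:exp}): for every left-invariant vector field $X^L$,
\[
\mathcal{L}_{X^L}\Theta=-\operatorname{Tr}(\mathrm{ad}_{X^L})\,\Theta .
\]
Since $\Theta$ is nowhere zero, $\mathcal{L}_{X^L}\Theta=0$ holds for all $X^L\in\mathfrak{X}_L(G)$ if and only if $\operatorname{Tr}(\mathrm{ad}_v)=0$ for all $v\in\mathfrak{g}$, which is precisely the definition of $\mathfrak{g}$ being unimodular. Combining the three steps yields the equivalence: $G$ is unimodular $\iff$ $\Theta$ is right-invariant $\iff$ $\mathcal{L}_{X^L}\Theta=0$ for all $X^L\in\mathfrak{X}_L(G)$ $\iff$ $\mathfrak{g}$ is unimodular.
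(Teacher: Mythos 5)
Your proof is correct and follows essentially the same route as the paper: it combines the identity (\ref{Eq:exp}) with the fact that on a connected Lie group a tensor field is right-invariant if and only if it is invariant under all left-invariant vector fields. The only difference is that you spell out explicitly the flow/connectedness argument (right translations by exponentials generating $G$), which the paper delegates to Remark \ref{Rem:Important}.
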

\begin{proof} 
Let $\Theta$ be a left-invariant volume form on~$G$.
The Lie algebra is unimodular if and only if
the adjoint representation is traceless,
namely 
the right-hand side of equation (\ref{Eq:exp}) 
is zero for any $X^L \in \mathfrak{X}_L(G)$.
But this amounts to saying that $\Theta$ is also right-invariant:
a tensor field $T$ on a connected Lie group~$G$ is right-invariant 
if and only if
it is invariant with respect to all left-invariant vector fields.
\end{proof}

\begin{remark}
\label{Rem:Important}
A comment about the proof of Proposition \ref{prop:unimodular_LG} is  pertinent.
It is well-known that each
left-invariant vector field on $G$ admits a flow of the form $\phi:t\in\mathbb{R}\mapsto R_{\exp(tv)}\in {\rm Diff}(G)$ for a certain $v\in \mathfrak{g}$.
From this it follows that
a vector field $Y$ on a connected Lie group~$G$ is right-invariant 
if and only if it commutes with \emph{every} left-invariant vector field~$X$, namely
$\mathcal{L}_X Y = 0$.
This also applies to tensor fields on~$G$.
\end{remark}

\section{Multisymplectic Lie systems}\label{section:multi_LS}

This section shows that there exist physical models whose dynamic can be studied through Lie systems admitting a Vessiot--Guldberg Lie algebra of Hamiltonian vector fields relative to a multisymplectic structure. 
This suggests us to introduce the hereafter called {\it multisymplectic Lie systems}. 
Next, the most fundamental properties of these systems are detailed.

\subsection{Example: Schwarz equation}
\label{subsection:KSeq}

Consider a Schwartz equation \cite{Be07,LG99,OT09} of the form
\begin{equation}\label{Eq:KS3}
\frac{d^3x}{dt^3}=\frac 32\left(
\frac{dx}{dt}\right)^{-1}\!\!\left(\frac{d^2x}{dt^2}\right)^{2}\!\!+2b_1(t)\frac{dx}{dt}.
\end{equation}
The relevance of this differential equation is due to its appearance in the study of Milne--Pinney equations and the Schwarz derivative (see \cite{BHLS15,CGL12,CGLS14} and references therein). 

The differential equation (\ref{Eq:KS3}) is known to be a higher-order Lie system \cite{CGL12}.
This means that the associated system of first-order differential equations obtained by adding the variables
$v= dx/dt$ and $a= d^2x/dt^2$, i.e.
\begin{equation}\label{Eq:firstKS3}
\frac{dx}{dt}=v,\qquad \frac{dv}{dt}=a,\qquad \frac{da}{dt}=\frac 32 \frac{a^2}v+2b_1(t)v\,,
\end{equation}	
is a Lie system.
Indeed, it is associated with the $t$-dependent vector field on $\mathcal{O}=\{(x,v,a)\in\mathbb{R}^3\mid v\neq 0\}$ of the form
\begin{equation}\label{Eq:Ex}
{X^{S}}=X_3+b_1(t)X_1,
\end{equation}
where the vector fields given by
\begin{equation}\label{Eq:VFKS1}
\begin{array}{c}
X_1=2v\dfrac{\partial}{\partial a},\qquad X_2=v\dfrac{\partial}{\partial v}+2a\dfrac{\partial}{\partial a},\qquad X_3=v\dfrac{\partial}{\partial x}+a\dfrac{\partial}{\partial v}+\dfrac 32
\dfrac{a^2}v\dfrac{\partial}{\partial a},\end{array}
\end{equation}
satisfy
the commutation relations
\begin{equation}\label{Eq:KSbracket}
[X_1,X_2]=X_1,\quad [X_1,X_3]=2X_2, \quad [X_2,X_3]=X_3.
\end{equation}
As a consequence, $X_1, X_2,$ and $X_3$ 
span a three-dimensional Lie algebra of vector fields $V^{S}$ 
isomorphic to $\mathfrak{sl}_2$, 
and $X^{S}$ becomes a $t$-dependent vector field taking values in $V^{S}$, i.e. 
$(\mathcal{O},X^{S},V^S)$ is a Lie system.

Let us determine a multisymplectic structure on $\mathcal{O}$ so that the vector fields of $V^{S}$ become locally Hamiltonian relative to it. Since $X_1,X_2,X_3$ span ${\rm T}\mathcal{O}$,
then $X_1\wedge X_2\wedge X_3\neq 0$ and $X_1,X_2,X_3$ admit
a family of dual forms $\eta_1,\eta_2$, and $\eta_3$,
i.e. $\eta_\alpha(X_\beta)=\delta_{\alpha}^{\beta}$ for $\delta_\alpha^\beta$ being the {\it Kronecker delta function} and $\alpha,\beta=1,2,3$.
In local coordinates,
$$
\eta_1=\frac{a^2}{4v^3}dx-\frac a{v^ 2}dv+\frac 1{2v} da,\qquad \eta_2=-\frac{a}{v^2}dx+\frac 1vdv,\qquad \eta_3=\frac{1}{v}dx.
$$
With them one can construct the volume form
\begin{equation}
\label{Eq:mult1}
\Theta_S = \eta_1\wedge\eta_2\wedge\eta_3 =
\frac{1}{2v^3} \, da \wedge dv \wedge dx
\,.
\end{equation}
It is well known that,
if a frame $X_\alpha$ of $TN$ satisfies
$[X_\alpha,X_\beta] = c_{\alpha\beta}^\gamma \,X_\gamma$,
then the Lie derivatives of the dual frame $\eta_\alpha$ are given by
$
\mathcal{L}_{X_\alpha} \eta_\gamma =
- c_{\alpha\beta}^{\gamma} \,\eta_\beta \,.
$
In our example we have
$$
\begin{aligned}
&\mathcal{L}_{X_1}\eta_1=-\eta_2 \,, \qquad &\mathcal{L}_{X_1}\eta_2&=-2\eta_3 \,,\qquad 
&\mathcal{L}_{X_1}\eta_3&=0 \,,
\\
&\mathcal{L}_{X_2}\eta_1=\eta_1 \,,\qquad &\mathcal{L}_{X_2}\eta_2&=0 \,,\qquad 
&\mathcal{L}_{X_2}\eta_3&=-\eta_3 \,,
\\
&\mathcal{L}_{X_3}\eta_1=0 \,,\qquad &\mathcal{L}_{X_3}\eta_2&=2\eta_1 \,,\qquad 
&\mathcal{L}_{X_3}\eta_3&=\eta_2 \,,
\end{aligned}
$$
from which it is easily proved that
\begin{equation}
\label{Eq:con}
\mathcal{L}_{X_\alpha}\Theta_S=0 \,,
\qquad \alpha=1,2,3.
\end{equation}
This proves that the $X_\alpha$ are locally Hamiltonian with respect to $\Theta_S$. 
But indeed
\begin{equation}\label{Eq:HamSc}
\begin{gathered}
\iota_{X_1}\Theta_S= \frac 1{v^2}dv\wedge dx =
-d\eta_3,
\\
\iota_{X_2}\Theta_S = \frac 1{v}\left(\frac a{v^2}dv-\frac 1{2v} da\right)\wedge dx =
\frac 12d\eta_2,
\\
\iota_{X_3}\Theta_S =
-\frac{3a^2}{4v^4}dx\wedge dv-\frac{a}{2v^3}da\wedge dx+\frac 1{2v^2}da\wedge dv =
-d\eta_1
\,;
\end{gathered}
\end{equation}
therefore, $X_1,X_2,$ and $X_3$ are Hamiltonian vector fields with respect to the multisymplectic structure
$(\mathcal{O},\Theta_S)$,  with Hamiltonian one-forms
$\theta_1=-\eta_3,\,\, 
\theta_2=\frac 12\eta_2,\,\, 
\theta_3=-\eta_1$.

As a consequence of the above, independently of the $t$-dependent coefficients in (\ref{Eq:firstKS3}), the evolution of $X^{S}$ preserves the volume form $\Theta_S$.
Since $\mathcal{D}^{V^{S}}={\rm T}\mathcal{O}$ and in view of (\ref{Eq:con}), the value of $\Theta_S$ at a point $o\in\mathcal{O}$ determines the value of $\Theta_S$ on the connected component of $o$ in $\mathcal{O}$.
Moreover, $\Theta_S$ is,
up to a multiplicative constant on each connected component of $\mathcal{O}$,
the only volume form satisfying the equations~(\ref{Eq:con}).
Since every one-form and two-form on a three-dimensional manifold are 1--degenerate, the system under study has a unique,
up to a non-zero proportional constant,
multisymplectic form which is invariant under the action of $V^{S}$.

\subsection{Definition and main properties of multisymplectic Lie systems}
\label{subsection:definition_properties_MLS}

The example given in Section \ref{subsection:KSeq}, 
along with the other multisymplectic Lie systems detailed throughout the rest of this work, motivate the following definition.

\begin{definition}
\label{def:MLS}
A (locally) \textit{multisymplectic Lie system} is a triple $(N,\Theta,X)$, where $X$ is a Lie system whose smallest Lie algebra $V^{X}$ is a finite-dimensional real Lie algebra of (locally) Hamiltonian vector fields relative to a multisymplectic structure $\Theta$ on~$N$.
If $\Theta$ has degree~$k$, we say that 
$(N,\Theta,X)$ is a {\it multisymplectic Lie system of degree~$k$}.
\end{definition}

In view of the above, the Schwarz equation (written as a first-order system) defines a multisymplectic Lie system
$(\mathcal{O},\Theta_S,X^{S})$
of degree~3. 

A relevant family of multisymplectic Lie systems is provided by automorphic Lie systems,
as stated by the following proposition:

\begin{proposition}
Every automorphic Lie system $(G,X,V^R)$, 
where $V^R$ is the Lie algebra of right-invariant vector fields on a connected Lie group~$G$, 
is a locally multisymplectic Lie system relative to 
any left-invariant volume form.
\end{proposition}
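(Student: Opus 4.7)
The plan is to verify the three ingredients of Definition 3.2 for multisymplectic Lie systems: that $\Theta$ is a multisymplectic form, that the smallest Lie algebra $V^{X}$ is finite-dimensional, and that the vector fields in $V^{X}$ are locally Hamiltonian with respect to $\Theta$. The first and second points are essentially immediate, so the real content is the third one, which I would reduce to the left/right invariance duality summarised in Remark~\ref{Rem:Important}.

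First I would check that any left-invariant volume form $\Theta$ is a multisymplectic form of degree $n=\dim G$. Since $\Theta$ has maximal degree on $G$, automatically $d\Theta=0$, and since $\Theta_g\neq 0$ everywhere, the contraction map $\omega^\flat$ is pointwise injective, so $\Theta$ is $1$-nondegenerate. Hence $(G,\Theta)$ is a multisymplectic manifold in the sense of Definition~\ref{def:Mult}. Moreover, since $(G,X,V^R)$ is an automorphic Lie system, $V^X\subset V^R$ and, being a Lie subalgebra of the finite-dimensional $V^R=\mathfrak{X}_R(G)$, it is finite-dimensional.

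The key step is to show that every right-invariant vector field $Y\in V^R$ is locally Hamiltonian relative to $\Theta$, that is, $d\iota_Y\Theta=0$. By Cartan's formula combined with $d\Theta=0$, this is equivalent to $\mathcal{L}_Y\Theta=0$, so it suffices to prove the latter. Here I would invoke the dual version of Remark~\ref{Rem:Important}: on a connected Lie group a tensor field is left-invariant if and only if its Lie derivative along every right-invariant vector field vanishes. This follows from exactly the same argument already indicated in the remark, now using that every right-invariant vector field has a flow of the form $L_{\exp(tv)}$ and that left-invariance means invariance under all left translations. Applying this characterisation to the left-invariant tensor field $\Theta$ yields $\mathcal{L}_Y\Theta=0$ for every $Y\in V^R$, and hence $d\iota_Y\Theta=0$.

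Putting the pieces together, every vector field in $V^X\subset V^R$ is locally Hamiltonian relative to $\Theta$, which is exactly the content of Definition~\ref{def:MLS} for the triple $(G,\Theta,X)$. The only subtle point, and the one I would take most care with, is to state the left/right invariance duality precisely: the remark in the text is phrased for right-invariance of a tensor in terms of commutation with \emph{left}-invariant vector fields, and I would want to spell out the symmetric statement to avoid any confusion about which side of the duality is being used.
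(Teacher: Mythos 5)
Your proposal is correct and follows essentially the same route as the paper: the paper's proof also rests on the fact that left-invariant forms (in particular a left-invariant volume form, which is automatically a multisymplectic form) satisfy $\mathcal{L}_{X^R}\Theta=0$ for all right-invariant vector fields, whence local Hamiltonianity. Your version merely spells out the degeneracy/closedness check and the left/right duality (flows of right-invariant fields are left translations) that the paper leaves implicit.
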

\begin{proof}
The Lie system $X$ has as a Vessiot--Guldberg Lie algebra 
the set of {right}-invariant vector fields. 
Therefore, any {left}-invariant differential form on~$G$
is invariant with respect to them.
In particular, this is true for a left-invariant volume form~$\Theta$,
which is also a multisymplectic form.
From $\mathcal{L}_{X^R} \Theta = 0$,
the $X^R \in V^R$ are locally Hamiltonian vector fields with respect to~$\Theta$.
\end{proof}

Next it will be shown that every multisymplectic Lie system is related to a finite-dimensional Lie algebra of Hamiltonian forms induced by a Vessiot--Guldberg Lie algebra. 
This Lie algebra will be a key structure for the determination of superposition rules for multisymplectic Lie systems.

\begin{definition} 
Let $(N,\Theta,X)$ be a multisymplectic Lie system
of degree $k$.
A {\it Lie--Hamilton differential form} for it is a $t$-dependent Hamiltonian differential form $\theta$ for~$X$, i.e. 
$\iota_{X_t} \Theta = d \theta_t$ for every~$t$. 
A {\it Lie--Hamilton algebra} for the system is a finite-dimensional Lie algebra of differential forms of degree $k{-}1$ 
(relative to the Lie bracket (\ref{braform})) 
containing all the differentials $d \theta_t$. 
\end{definition}

Recall that every locally Hamiltonian vector field relative to a multisymplectic form of degree $k$ gives rise to a closed differential $(k{-}1)$-form and this correspondence is an injective Lie algebra anti-homorphism relative to the Lie bracket of vector fields and the Lie bracket of  differential $(k{-}1)$-equations of the form (\ref{braform}). 
One has the following trivial consequence:
\begin{proposition}
\label{prop:LHalgebra}
Every multisymplectic Lie system $(N,\Theta,X)$ possesses a \emph{minimal} Lie--Hamilton algebra, namely
$$
\mathfrak{M}=\{ \iota_Z \Theta \mid Z \in V^X \}
\,.
$$
\end{proposition}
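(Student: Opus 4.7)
The plan is to exploit the injective Lie algebra anti-homomorphism $\hat\Theta \colon Z\mapsto \iota_Z\Theta$ from (locally) Hamiltonian vector fields into $\Omega^{k-1}(N)$ that was recorded just before the statement. Since $X$ is a Lie system, $V^X$ is finite-dimensional; since every element of $V^X$ is locally Hamiltonian by the definition of a multisymplectic Lie system, $\hat\Theta$ restricts to an injective linear map $V^X \hookrightarrow \Omega^{k-1}(N)$ whose image is exactly $\mathfrak{M}$. The proof then splits into checking that $\mathfrak{M}$ is a Lie--Hamilton algebra and that it is contained in any other one.

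First I would verify the three clauses of the definition of Lie--Hamilton algebra. Finite-dimensionality is immediate from $\dim\mathfrak{M}=\dim V^X<\infty$, using that $\hat\Theta$ is injective (a consequence of 1-nondegeneracy of $\Theta$). Closure under the bracket (\ref{braform}) follows from the identity
$$\{\iota_{Z_1}\Theta,\iota_{Z_2}\Theta\} = \iota_{[Z_2,Z_1]}\Theta,$$
together with the fact that $V^X$ is a Lie subalgebra of $\mathfrak{X}(N)$, so $[Z_1,Z_2]\in V^X$ whenever $Z_1,Z_2\in V^X$; this also automatically upgrades the skew-symmetric bracket on $\mathfrak{M}$ to a genuine Lie bracket since Jacobi was already noted to hold for (\ref{braform}). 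Finally, every $X_t$ lies in $V^X$ by definition of the smallest Lie algebra, so $\iota_{X_t}\Theta\in \mathfrak{M}$; when a Lie--Hamilton form $\theta$ exists this reads $d\theta_t\in \mathfrak{M}$ for every~$t$, as required.

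For minimality, let $\mathfrak{M}'$ be any Lie--Hamilton algebra for $(N,\Theta,X)$. By definition $\mathfrak{M}'$ contains every $\iota_{X_t}\Theta$, and is closed under (\ref{braform}). Applying the bracket identity above iteratively, $\mathfrak{M}'$ therefore contains $\iota_Z\Theta$ for every $Z$ in the Lie subalgebra of $\mathfrak{X}(N)$ generated by $\{X_t\}_{t\in\mathbb{R}}$, which is precisely $V^X=\mathrm{Lie}(\{X_t\}_{t\in\mathbb{R}})$. Hence $\mathfrak{M}=\hat\Theta(V^X)\subseteq \mathfrak{M}'$, establishing that $\mathfrak{M}$ is the minimum (and in particular a minimal) Lie--Hamilton algebra.

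No step presents a serious obstacle; the argument is essentially a translation of the finite-dimensionality and Lie-closure of $V^X$ across the anti-homomorphism $\hat\Theta$. The only points requiring care are the sign convention (since $\hat\Theta$ reverses the bracket, but this does not affect subspace closure) and the fact that a multisymplectic Lie system is only required to be \emph{locally} Hamiltonian, so the expression ``$d\theta_t$'' in the definition of Lie--Hamilton algebra should be understood as the closed form $\iota_{X_t}\Theta$, which is what actually appears in the argument.
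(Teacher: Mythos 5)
Your proof is correct and follows exactly the route the paper intends: it presents the proposition as an immediate consequence of the injective Lie algebra anti-homomorphism $Z\mapsto\iota_Z\Theta$ together with the finite-dimensionality and Lie-closure of $V^X={\rm Lie}(\{X_t\}_{t\in\mathbb{R}})$, which is precisely what you spell out (including the minimality step and the correct reading of ``$d\theta_t$'' as the closed form $\iota_{X_t}\Theta$ in the merely locally Hamiltonian case). No gaps; your write-up is simply a more detailed version of the paper's one-line justification.
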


\paragraph{Example}
As seen in Section \ref{subsection:KSeq},
the multisymplectic Lie system $(\mathcal{O},\Theta_S,X^S)$ 
associated with the Schwarz equation
is such that the vector fields $X_1,X_2,X_3$
spanning its Vessiot--Guldberg Lie algebra, $V^S$,
are Hamiltonian with Hamiltonian forms
\begin{equation}\label{FormSE}
\theta_1 = -\frac{1}{v}dx,
\quad
\theta_2 = -\frac{a}{2v^2}dx+\frac{1}{2v}dv,
\quad 
\theta_3 = -\frac{a^2}{4v^3}dx+\frac{a}{v^2}dv-\frac{1}{2v}da \,.
\end{equation}
Therefore, their differentials
\begin{equation}\label{dfSE}
d\theta_1= \frac{dv\wedge dx}{v^2},
\quad 
d\theta_2= \frac{adv\wedge dx}{v^3} - \frac{da\wedge dx}{2v^2},
\quad 
d\theta_3= \frac{-3a^2dx\wedge dv}{4v^4} - \frac{ada\wedge dx}{2v^3} + \frac{da\wedge dv}{2v^2}
\end{equation}
satisfy the commutation relations 
(for the bracket of differentials of Hamiltonian forms)
\begin{equation}\label{Eq:Sch-M}
\{d\theta_1,d\theta_2\} = -d\theta_1,
\quad 
\{d\theta_1,d\theta_3\} = -2d\theta_2,
\quad 
\{d\theta_2,d\theta_3\} = -d\theta_3 \,,
\end{equation}
and span a Lie--Hamilton algebra $\mathfrak{M}$ of the system.\demo

\section{Locally automorphic Lie systems and invariant forms}
\label{section:LS-IF}

In this section we analyse conditions 
under which one can ensure that 
there is a multisymplectic form~$\Theta$ 
invariant with respect to the elements of a Vessiot--Guldberg Lie algebra~$V$.

It may be difficult to find multisymplectic forms compatible with a Lie system $X$ admitting a Vessiot--Guldberg Lie algebra~$V$ 
as this requires to search for appropriate solutions, namely $\Theta$, 
of a system of partial differential equations 
$\mathcal{L}_Y\Theta=0$ for every $Y\in V$. 
Nevertheless, we can devise several simpler methods to find compatible invariant forms 
for a particular class of Lie systems with relevant physical applications: 
the hereafter locally automorphic Lie systems.

\subsection{Locally automorphic Lie systems}

\begin{definition} 
A {\it locally automorphic Lie system} on $N$ is a triple $(N,X,V)$, 
where $X$ is a Lie system on $N$ with a Vessiot--Guldberg Lie algebra~$V$ 
such that $\dim V=\dim N$ and $\mathcal{D}^V={\rm T}N$.
\end{definition}

Locally automorphic Lie systems are called in this way because they are locally diffeomorphic to automorphic Lie systems. 
The following theorem proves this fact.

\begin{theorem}
\label{Trivial}
Let $(N,X,V)$ be a locally automorphic Lie system,
let $G$ be a Lie group whose Lie algebra is isomorphic to~$V$,
let $\varphi$ be a local action of~$G$ on~$N$
obtained from the integration of~$V$,
and let $(G,X^G,V^G)$ be the corresponding automorphic Lie system on~$G$ given by Theorem \ref{Th:XG-X}.
For every $x \in N$ the map 
$\varphi_x = \varphi(\cdot,x)$
is a local diffeomorphism mapping $X^G$ to~$X$.
\end{theorem}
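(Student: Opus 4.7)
The plan is to separate the statement into two assertions: (a) for each $x\in N$, the map $\varphi_x$ is a local diffeomorphism from (an open neighbourhood of $e$ in) $G$ to $N$; and (b) $\varphi_x$ transports $X^G$ to $X$. Assertion (b) is already contained in item~(2) of Theorem~\ref{Th:XG-X}, which states that $X^G$ and $X$ are $\varphi_x$-related. Thus the real content of the theorem lies entirely in (a), which I would prove by checking that the tangent map $T_g\varphi_x$ is a linear isomorphism at every point $g$ in the domain of $\varphi_x$, and then applying the inverse function theorem.

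For (a), the central tool is the relatedness property recalled in Section~\ref{subsection:Lie systems}: for every $\xi\in\mathfrak{g}=T_eG$, the right-invariant vector field $\xi^R$ is $\varphi_x$-related to the fundamental vector field $\xi_N$. Concretely,
\[
T_g\varphi_x\bigl(\xi^R(g)\bigr)=\xi_N\bigl(\varphi_x(g)\bigr),
\qquad \forall\, g\in G,\ \forall\, \xi\in\mathfrak{g}.
\]
As $\xi$ ranges over $\mathfrak{g}$, the vectors $\xi^R(g)$ cover all of $T_gG$, since right-invariant vector fields span the tangent space at every point of $G$. Meanwhile, the vectors $\xi_N(\varphi_x(g))$ coincide, up to the antihomomorphism $\hat\varphi$, with the values at $\varphi_x(g)$ of a basis of the Vessiot--Guldberg Lie algebra $V$. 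By the defining hypotheses of a locally automorphic Lie system, $\mathcal{D}^V=TN$, so these values span $T_{\varphi_x(g)}N$. Hence $T_g\varphi_x$ is surjective, and since $\dim G=\dim V=\dim N$ it is bijective. The inverse function theorem then gives that $\varphi_x$ is a local diffeomorphism at every point of its domain.

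I do not anticipate any real obstacle; the argument is essentially a direct exploitation of the two hypotheses defining a locally automorphic Lie system, namely $\dim V=\dim N$ (for injectivity by dimension count) and $\mathcal{D}^V=TN$ (for surjectivity). The only point requiring some care is bookkeeping around the fact that $\varphi$ is in general only a \emph{local} Lie group action, so that $\varphi_x$ is defined on an open neighbourhood of $e$; but since the differential-isomorphism argument and the inverse function theorem are purely local, this causes no additional difficulty. Combining (a) and (b) yields that $\varphi_x$ is a local diffeomorphism mapping $X^G$ onto $X$, as claimed.
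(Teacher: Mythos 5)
Your proposal is correct and follows essentially the same route as the paper's proof: both show that $T_g\varphi_x$ carries the right-invariant frame of $T_gG$ onto vectors of $V$ which, by $\mathcal{D}^V={\rm T}N$ and $\dim V=\dim N$, span (indeed form a basis of) $T_{\varphi_x(g)}N$, so $T_g\varphi_x$ is a linear isomorphism and $\varphi_x$ is a local diffeomorphism, while the transport of $X^G$ to $X$ is exactly the $\varphi_x$-relatedness from Theorem \ref{Th:XG-X}. The only cosmetic difference is that you argue surjectivity plus a dimension count, whereas the paper directly observes that a basis is mapped to a basis.
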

\begin{proof}
Recall that $N$ is assumed to be an $n$-dimensional manifold.
As stated in Theorem \ref{Th:XG-X},
given $x \in N$,
the map $\varphi_x$ relates a basis  of right-invariant vector fields 
$X_\alpha^R$ of~$G$ with a basis
$X_\alpha$ of~$V$, i.e. $T_g \varphi_{x} \colon T_gG \to T_{\varphi(g,x)}N$
maps
$X_\alpha^R(g)$ onto $X_\alpha(\varphi(g,x))$.
But due to the definition of locally automorphic system,
the $X_\alpha^R(g)$ are $n$ linearly independent vectors 
constituting a basis of $T_gG$,
and the $X_\alpha(\varphi(g,x))$ are also $n$ linearly independent vectors
constituting a basis of $T_{\varphi(g,x)}N$.
Thus, $T_g \varphi_{x}$ is a linear isomorphism,
and $\varphi_{x}$ becomes a local diffeomorphism.
Therefore,
when $\varphi_x$ is restricted to open sets yielding a diffeomorphism,
it sends $X^G$ onto~$X$.
\end{proof}

Remember that the action $\varphi$ can be ensured to be globally defined  only if $G$ is simply connected and $V$ consists of complete vector fields 
\cite{Palais}. 
For simplicity, we will hereafter assume that $\varphi$ is globally defined. 

The mapping $\varphi$ not only allows us to establish local diffeomorphisms $\varphi_x:G\rightarrow N$, with $x\in N$, 
but also maps certain geometric structures related to the locally automorphic Lie system $(N,X,V)$ with the associated automorphic one $(G,X^R,V^R)$. 


In view of Theorem \ref{Trivial}, one obtains the following corollaries.


\begin{corollary}
\label{cor:superposition_rule_locally_automorphic_LS}
Let $(N,X,V)$ be a locally automorphic Lie system. 
Then, $X$ admits a superposition rule depending on only one particular solution of~$X$.
\end{corollary}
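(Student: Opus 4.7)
The plan is to transport the superposition rule known for automorphic Lie systems over to the locally automorphic setting via the mapping supplied by Theorem~\ref{Trivial}. Recall that the automorphic Lie system $(G,X^G,V^R)$ associated with $(N,X,V)$ admits the superposition rule $\Phi^G(g_1;g)=R_g g_1 = g_1 g$, depending on a single particular solution, as discussed after Theorem~\ref{Th:XG-X}. My goal is to push this rule down to $N$ through the maps $\varphi_{x_0}\colon G\to N$.

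First, I would fix a reference point $x_0\in N$ and invoke Theorem~\ref{Trivial} to obtain that $\varphi_{x_0}$ is a local diffeomorphism sending $X^G$ to $X$. By Theorem~\ref{Th:XG-X}(3), the integral curve of $X$ passing through $\varphi(g,x_0)$ at time $0$ equals $t\mapsto \varphi(g(t)g,x_0)$, where $g(t)$ is the integral curve of $X^G$ starting at the identity; equivalently, every integral curve of $X$ arises as $\varphi_{x_0}$ applied to an integral curve of $X^G$. Next, I would use the local inverse of $\varphi_{x_0}$ to define
\begin{equation*}
\Phi\colon N\times N \longrightarrow N,
\qquad
\Phi(x_1;\lambda) \;=\; \varphi\bigl(\varphi_{x_0}^{-1}(x_1)\cdot \varphi_{x_0}^{-1}(\lambda),\,x_0\bigr),
\end{equation*}
where the dot denotes group multiplication in $G$. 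This is the natural candidate obtained by conjugating $\Phi^G$ with $\varphi_{x_0}$.

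To check that $\Phi$ is indeed a superposition rule, I would take any particular solution $x_1(t)$ of $X$ and set $g_1(t)=\varphi_{x_0}^{-1}(x_1(t))$, which is a solution of $X^G$ because $\varphi_{x_0}$ relates the two $t$-dependent vector fields. Then $g_1(t)g$ is again a solution of $X^G$ for every $g\in G$, by the right-invariance of $V^R$. Pushing forward via $\varphi_{x_0}$ gives the family of curves $t\mapsto \Phi(x_1(t);\lambda)$ with $\lambda=\varphi(g,x_0)$, and this family exhausts the solutions of $X$ because, by Theorem~\ref{Th:XG-X}(3) applied to the parametrisation $g(0)=g$, every integral curve of $X$ is obtained in this way. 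Hence $\Phi$ depends on a single particular solution $x_1(t)$ and a point $\lambda\in N$ identified with an initial condition, as required by Definition~\ref{def:superposition_rule}.

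The only real subtlety is that $\varphi_{x_0}$ is in general only a local diffeomorphism, so $\varphi_{x_0}^{-1}$ is defined only on a neighbourhood of $x_0$. This is consistent with the convention, used throughout the paper and explicit in Definition~\ref{def:superposition_rule}, that superposition rules are required to reproduce only the \emph{generic} solution; the assumption following Theorem~\ref{Trivial} that $\varphi$ is globally defined removes any additional global obstruction on the $G$-side. The main technical point to verify carefully is therefore just that restricting to the open set where $\varphi_{x_0}$ is a diffeomorphism still yields every integral curve of~$X$ generically, which follows from the fact that right translations $R_g$ act transitively on~$G$ and $\varphi_{x_0}$ is a local diffeomorphism onto an open dense subset of~$N$.
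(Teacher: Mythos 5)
Your proof is correct and follows essentially the same route as the paper: the paper's argument simply combines Theorem~\ref{Trivial} with the superposition rule $\Phi^G(g_1;g)=R_g g_1$ for the associated automorphic Lie system, which is exactly the rule you transport to $N$ through $\varphi_{x_0}$. Your version just makes the conjugation $\Phi(x_1;\lambda)=\varphi\bigl(\varphi_{x_0}^{-1}(x_1)\cdot\varphi_{x_0}^{-1}(\lambda),x_0\bigr)$ and the local/generic caveats explicit, which the paper leaves implicit.
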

\begin{proof}
This is a consequence of Theorem \ref{Trivial} and the fact that automorphic Lie systems admit a superposition rule depending on one particular solution, 
as mentioned at the end of Section~\ref{subsection:Lie systems}.
\end{proof}

\begin{corollary}
\label{cor:superposition_rule_locally_automorphic_LS_2}
If $(N,X,V^X)$ is a locally automorphic Lie system on a (connected) manifold, 
then all $t$-independent constants of motion of $X$ are constants.
\end{corollary}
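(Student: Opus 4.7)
The plan is to unwind the definition of a $t$-independent constant of motion and then exploit the locally automorphic hypothesis, which forces the associated distribution to fill the whole tangent bundle.

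First, I would start with a $t$-independent function $f \in C^\infty(N)$ that is a constant of motion for~$X$. By definition, this means that $X_t f = 0$ for every $t \in \mathbb{R}$. The elementary but crucial observation is that the set
\[
W_f = \{\, Y \in \mathfrak{X}(N) \mid Y f = 0 \,\}
\]
is a Lie subalgebra of $\mathfrak{X}(N)$: it is clearly a linear subspace, and if $Y_1 f = Y_2 f = 0$ then $[Y_1,Y_2] f = Y_1(Y_2 f) - Y_2(Y_1 f) = 0$. Since $W_f$ contains every $X_t$, it must contain the whole smallest Lie algebra $V^X = \mathrm{Lie}(\{X_t\}_{t\in\mathbb{R}})$.

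Next, I would invoke the locally automorphic hypothesis. By Definition of locally automorphic Lie system, $\mathcal{D}^{V^X} = \mathrm{T}N$, that is, at every point $x \in N$ the values $\{Y_x \mid Y \in V^X\}$ span $\mathrm{T}_x N$. Combined with the previous step, $Y_x f = \langle df_x, Y_x\rangle = 0$ for all such $Y$, and this forces $df_x = 0$ at every $x \in N$. Hence $f$ is locally constant on~$N$, and since $N$ is assumed to be connected, $f$ is constant.

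There is no real obstacle here; the statement is essentially tautological once the notion of locally automorphic Lie system is in hand. The only mildly delicate point is the step ensuring that the vanishing on the generating family $\{X_t\}_{t\in\mathbb{R}}$ propagates to all of $V^X$, which is handled by the Leibniz-type computation showing that $W_f$ is closed under the Lie bracket and hence contains the Lie algebra it generates.
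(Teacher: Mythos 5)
Your proof is correct and follows essentially the same route as the paper: vanishing of $df$ on the vector fields of $V^X$, which span ${\rm T}_xN$ at every point by the locally automorphic hypothesis, forces $df=0$, hence $f$ is locally constant and constant by connectedness. The only difference is that you explicitly justify the propagation from the generators $\{X_t\}_{t\in\mathbb{R}}$ to all of $V^X$ via the bracket-closedness of $\{Y\mid Yf=0\}$, a step the paper's proof takes for granted.
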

\begin{proof}
If $f$ is a constant of motion, then
for every $Z \in V^X$ we have $\mathcal{L}_Z f = 0$.
As by hypothesis the vector fields of $V^X$ span ${\rm T}_xN$ at every $x \in N$,
one has that $df=0$ and $f$ is locally constant. 
If the manifold is connected, then $f$ constant.
\end{proof}

The existence of the local diffeomorphism $\varphi_x$ 
is useful to obtain theoretical properties 
of locally automorphic Lie systems,
as for instance Corollary \ref{cor:superposition_rule_locally_automorphic_LS}.
Nevertheless,
its practical use to map them into automorphic Lie systems is quite limited
because the locality of $\varphi_x$ and the difficulties to obtain an explicit expression, which must be obtained by solving a system of nonlinear ordinary differential equations determined by $V$.
This is illustrated in the following two examples
of locally automorphic Lie systems. 

\begin{example} 
\textbf{(The generalized Darboux--Brioschi--Halphen (DBH) system)} Consider the system of differential equations  
\cite{Darboux}:
\begin{equation}
\label{Eq:Partial}
\begin{gathered}
\left\{
\begin{aligned}
\frac{{\rm d}w_1}{{\rm d}t}& = w_3w_2-w_1w_3-w_1w_2+\tau^2,\\
\frac{{\rm d}w_2}{{\rm d}t}& = w_1w_3-w_2w_1-w_2w_3+\tau^2,\\
\frac{{\rm d}w_3}{{\rm d}t}& = w_2w_1-w_3w_2-w_3w_1+\tau^2,\\
\end{aligned}\right.\\
\tau^2=
\alpha_1^2(\omega_1-\omega_2)(\omega_3-\omega_1)+
\alpha_2^2(\omega_2-\omega_3)(\omega_1-\omega_2)+
\alpha_3^2(\omega_3-\omega_1)(\omega_2-\omega_3),
\\
\alpha_1,\alpha_2,\alpha_3\in \mathbb{R}.
\end{gathered}
\end{equation}
The DBH system with $\tau=0$ appears in the description of triply orthogonal surfaces and the vacuum Einstein equations for hyper-K\"ahler Bianchi-IX metrics \cite{CH03,Darboux,Halphen}. 
Meanwhile, the generalized DBH system for $\tau\neq 0$ is a
reduction of the self-dual Yang--Mills equations corresponding to an infinite-dimensional gauge group of diffeomorphisms of a
three-dimensional sphere \cite{CH03}. 

Although the DBH system is autonomous, 
it is more appropriate, 
e.g.\ to obtain its Lie symmetries \cite{EHLS-2014}, 
to consider it as a Lie system related to a Vessiot--Guldberg Lie algebra $V^{\rm DBH}$ spanned by
$$
\begin{gathered}
X_1^{\rm DBH} =
\frac{\partial}{\partial w_1} + \frac{\partial}{\partial w_2} + \frac{\partial}{\partial w_3},
\quad 
X_2^{\rm DBH} =
w_1\frac{\partial}{\partial w_1} + w_2\frac{\partial}{\partial w_2} + w_3\frac{\partial}{\partial w_3}
\\
X_3^{\rm DBH} =
-(w_3w_2-w_1(w_3+w_2)+\tau^2)\frac{\partial}{\partial w_1} - (w_1w_3-w_2(w_1+w_3)+\tau^2)\frac{\partial}{\partial w_2}
\\ 
- (w_2w_1-w_3(w_2+w_1)+\tau^2)\frac{\partial}{\partial w_3}.
\end{gathered}
$$
In fact, 
$$
[X_1^{\rm DBH},X_2^{\rm DBH}] = X^{\rm DBH}_1,
\qquad 
[X_1^{\rm DBH},X_3^{\rm DBH}] = 2X_2^{\rm DBH},
\qquad 
[X_2^{\rm DBH},X_3^{\rm DBH}] = X_3^{\rm DBH}.
$$
Hence, 
$\dim V^{\rm DBH}=\dim \mathcal{O}$ and 
$X_1^{\rm DBH}\wedge X_2^{\rm DBH}\wedge X_3^{\rm DBH}\neq 0$ 
on an open submanifold $\mathcal{O}$ of $\mathbb{R}^3$. 
Thus, $(\mathcal{O},X^{DBH}_3,V^{\rm DBH})$ is a locally automorphic Lie system. 
To obtain a local diffeomorphism mapping this system into an automorphic one, 
we need to integrate the vector fields of $V^{\rm DBH}$. 
Their analytic form makes it clear that it is very hard to provide such a local diffeomorphism. \demo
\end{example}

\begin{example}
\label{exampleCS}
{\bf (\textbf{A control system}
\cite{Ni00,Ra06})}. 
Consider the system of differential equations on $\mathbb{R}^5$ given by
\begin{equation}
\label{Eq:ControlSys}
\begin{gathered}
\frac{{\rm d}x_1}{{\rm d}t}=b_1(t),\quad 
\frac{{\rm d}x_2}{{\rm d}t}=b_2(t),\quad
\frac{{\rm d}x_3}{{\rm d}t}=b_2(t)x_1,\quad 
\frac{{\rm d}x_4}{{\rm d}t}=b_2(t)x_1^2, \quad 
\frac{{\rm d}x_5}{{\rm d}t}=2b_2(t)x_1x_2,\
\end{gathered}
\end{equation}
where $b_1(t)$ and $b_2(t)$ are arbitrary $t$-dependent functions. 

This system is defined by the $t$-dependent vector field 
$X^{\rm CS}=b_1(t)X_1+b_2(t)X_2$ on $\mathbb{R}^5$,
where the vector fields
\begin{equation}
\label{Eq:BasisControl}
\begin{gathered}
X_1 = \frac{\partial}{\partial x_1},
\qquad 
X_2 = \frac{\partial}{\partial x_2}+x_1\frac{\partial}{\partial x_3}+x_1^2\frac{\partial}{\partial x_4}+2x_1x_2\frac{\partial}{\partial x_5},
\\[1ex]
X_3 = \frac{\partial}{\partial x_3}+2x_1\frac{\partial}{\partial x_4}+2x_2\frac{\partial}{\partial x_5},
\qquad
X_4 = \frac{\partial}{\partial x_4},
\qquad 
X_5 = \frac{\partial}{\partial x_5},
\end{gathered}
\end{equation}
are such that their only non-vanishing commutation relations read
\begin{equation}
\label{Eq:ConRel}
[X_1,X_2]=X_3,\qquad [X_1,X_3]=2X_4,\qquad [X_2,X_3]=2X_5.
\end{equation}
Hence, $X_1,\ldots,X_5$ span a five-dimensional nilpotent Lie algebra $V^{\rm CS}$. 
Since $X^{\rm CS}$ takes values in $V^{\rm CS}$, 
then $(\mathbb{R}^5,X^{\rm CS},V^{\rm CS})$ is a Lie system as already noticed in \cite{Ra06}. 
The vector fields of $V^{\rm CS}$ span a distribution 
$\mathcal{D}^{V^{\rm CS}}={\rm T}\mathbb{R}^5$ 
and $\dim V^{\rm CS}=\dim \mathbb{R}^5$. 
Hence, 
$(\mathbb{R}^5,X^{\rm CS},V^{\rm CS})$ 
is a locally automorphic Lie system. \demo
\end{example}

\subsection{Invariants for locally automorphic Lie systems}\label{Sec:Inv}

Recall that a Lie symmetry of a Lie system $(N,X,V)$
is a vector field $Y$ on~$N$ such that
$\mathcal{L}_{Y} Z =0$
for every vector field $Z \in V$.
If $V = \langle X_1,\ldots,X_r \rangle$, 
this is equivalent to saying that $Y$ has to satisfy the system of partial differential equations 
\begin{equation}
\label{Eq:SysRos}
\mathcal{L}_{X_i}Y = 0,\qquad i=1,\ldots,r.
\end{equation}
The set ${\rm Sym}(V)$
of Lie symmetries of $(N,X,V)$ is a Lie algebra. 
Let us study this set for the case of locally automorphic Lie systems.

If $(N,X,V)$ is a locally automorphic Lie system, then each mapping $\varphi_x$  maps it onto an automorphic Lie system $(G,X^R,V^R)$. It is immediate that ${\rm Sym}(V^R)=V^L$. Since $\varphi_x$ is a local diffeomorphism mapping $V$ onto $V^R$, then it also maps ${\rm Sym}(V)$ onto $V^L$. Hence, one obtains the following lemma, whose implications will be illustrated in Example \ref{Sym}.

\begin{lemma}
\label{lemma:Exis}
Let $(N,X,V)$ be a locally automorphic Lie system. 
The Lie algebra ${\rm Sym}(V)$ of symmetries of~$V$ is isomorphic to~$V$. 
\end{lemma}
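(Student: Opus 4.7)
The strategy is to reduce the question to the associated automorphic Lie system on $G$, where the analogous identification ${\rm Sym}(V^R) = V^L$ is classical. By Theorem \ref{Trivial}, for any $x \in N$ the map $\varphi_x \colon G \to N$ is a local diffeomorphism carrying a basis $X_\alpha^R$ of $V^R$ to the chosen basis $X_\alpha$ of $V$. I would use $\varphi_x$ to transport the Lie algebra $V^L$ of symmetries of $V^R$ onto a Lie algebra of symmetries of $V$, and then argue by a dimension count that this transported algebra exhausts ${\rm Sym}(V)$.

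First, I would establish the upper bound $\dim {\rm Sym}(V) \leq \dim N = \dim V$. Any $Y \in {\rm Sym}(V)$ satisfies $\mathcal{L}_{X_\alpha} Y = 0$ for every $\alpha$, so $Y$ is invariant under the flow of every element of $V$. In particular, if $Y$ vanishes at some $x_0 \in N$, it vanishes on the orbit of $x_0$ under the pseudogroup generated by $V$; since $\mathcal{D}^V = {\rm T}N$ and $N$ is connected, this orbit is all of $N$. Hence the evaluation map ${\rm Sym}(V) \to {\rm T}_{x_0} N$ is injective.

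For the lower bound, I would produce $\dim V$ linearly independent symmetries. Remark \ref{Rem:Important} gives ${\rm Sym}(V^R) = V^L$ on the connected Lie group $G$, a Lie algebra of dimension $\dim G = \dim V$ isomorphic to $V$ as an abstract Lie algebra. Choose a neighbourhood $U$ of $e \in G$ on which $\varphi_x$ restricts to a diffeomorphism onto its image, and push each $\xi^L \in V^L$ forward to a local vector field $Y^{\rm loc}_\xi \in \mathfrak{X}(\varphi_x(U))$ that commutes with every $X_\alpha$ on $\varphi_x(U)$. To extend $Y^{\rm loc}_\xi$ globally, I would propagate it along the flows $\phi_t^{X_\alpha}$ of a basis of $V$: the identity $(\phi_t^{X_\alpha})_* Y^{\rm loc}_\xi = Y^{\rm loc}_\xi$ on overlaps (a consequence of $\mathcal{L}_{X_\alpha} Y^{\rm loc}_\xi = 0$) guarantees that the various local pushforwards glue to a single vector field $Y_\xi \in \mathfrak{X}(N)$, and the equations $\mathcal{L}_{X_\alpha} Y_\xi = 0$ hold on $N$ because they hold locally everywhere. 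The assignment $\xi^L \mapsto Y_\xi$ is then a Lie algebra monomorphism $V^L \hookrightarrow {\rm Sym}(V)$, which is surjective by the upper bound; composed with the isomorphism $V^L \cong V$, this yields the claim.

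The main obstacle I anticipate is this globalisation step, namely verifying that extending $Y^{\rm loc}_\xi$ along different composed flows of elements of $V$ produces the same value at a given point of $N$. The consistency follows from the commutation $(\phi_t^{X_\alpha})_* Y^{\rm loc}_\xi = Y^{\rm loc}_\xi$ on overlaps together with the involutivity of $V$, but some care is needed to handle the case in which $V$ need not consist of complete vector fields, so that the pseudogroup is genuinely local; a standard argument using finite chains of flow segments joining points in the single orbit of $N$ should suffice.
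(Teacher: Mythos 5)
Your proposal follows essentially the same route as the paper: the paper's entire proof consists of the two sentences preceding the lemma, namely that ${\rm Sym}(V^R)=V^L$ on the group $G$ (the symmetric counterpart of Remark \ref{Rem:Important}) and that, since $\varphi_x$ from Theorem \ref{Trivial} is a local diffeomorphism relating $V^R$ with $V$, it also relates $V^L$ with ${\rm Sym}(V)$. Your upper bound, via injectivity of the evaluation map ${\rm Sym}(V)\to {\rm T}_{x_0}N$ (using that the single orbit of the pseudogroup generated by $V$ is all of the connected manifold $N$), is correct and is actually more than the paper supplies.

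The step you flag as the main obstacle is, however, a genuine one, and your claim that consistency of the global extension ``follows from the commutation on overlaps together with involutivity'' is not correct in general: propagating the locally transported left-invariant fields along different chains of flow segments can produce a nontrivial monodromy. Concretely, take $N=G/\Gamma$ with $\Gamma$ a discrete subgroup acting by right translations and $V$ the projection of $\mathfrak{X}_R(G)$; this is a locally automorphic Lie system, but any vector field on $N$ commuting with $V$ lifts to a $\Gamma$-invariant left-invariant vector field on $G$, so ${\rm Sym}(V)\cong\{v\in\mathfrak{g}\mid {\rm Ad}_\gamma v=v,\ \forall\gamma\in\Gamma\}$, which is strictly smaller than $\mathfrak{g}$ already for the $ax+b$ group with $\Gamma$ generated by a dilation. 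Hence your lower bound (and the lemma read literally in global terms) needs an extra hypothesis, e.g.\ that $\varphi_x$ be a global diffeomorphism, or simple connectedness of $N$ together with completeness of $V$. To be fair, the paper's own proof does not address this either: it simply asserts that $\varphi_x$ maps ${\rm Sym}(V)$ onto $V^L$, under its blanket convention of ignoring local-versus-global technicalities. So, measured against the paper, your argument is at least as complete and more carefully structured; in absolute terms, the gluing step you defer to ``a standard argument'' is precisely where both proofs are incomplete.
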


\begin{example}\label{Sym}
We reconsider Example \ref{exampleCS}
studying the control system given by (\ref{Eq:ControlSys}).
By solving the linear system of partial differential equations (\ref{Eq:SysRos}) in the unknown coefficients of $Y$ in the basis $\partial/\partial_{x_1},\ldots,\partial/\partial_{x_5}$, which demands a very long and tedious calculation, 
one gets that every Lie symmetry $Y$ of an arbitrary control system (\ref{Eq:ControlSys}) must be a linear combination with constant coefficients of the vector fields
\begin{equation}
\label{Eq:SymCS}
\begin{gathered}
Y_1 = \frac{\partial}{\partial x_1} + x_2\frac{\partial }{\partial x_3} + 2x_3\frac{\partial}{\partial x_4} + x_2^2\frac{\partial}{\partial x_5}, \qquad
Y_2 = \frac{\partial}{\partial x_2} + 2x_3\frac{\partial }{\partial x_5}, \\[1ex]
Y_3 = \frac{\partial}{\partial x_3}, 
\qquad
Y_4 = \frac{\partial}{\partial x_4}, 
\qquad
Y_5 = \frac{\partial}{\partial x_5}.
\end{gathered}
\end{equation}
A straightforward calculation shows that the vector fields $-Y_i$, with $i=1,\ldots,5$, 
generate a Lie algebra with the same structure constants as $X_1,\ldots,X_5$. \demo

\end{example}

Since every locally automorphic Lie system $(N,X,V)$ is locally diffeomorphic to an automorphic Lie system $(G,X^R,V^R)$, 
one has that every differential form on $N$ invariant with respect to the Lie derivative of elements of~$V$ 
must be locally diffeomorphic to a left-invariant differential form on~$G$. 
Since ${\rm Sym}(V)$ is also diffeomorphic to $V^L$, 
in view of Remark \ref{Rem:Important}
one obtains the following theorem: 
\begin{theorem}
\label{th:invariant_form} 
Let $(N,X,V)$ be a locally automorphic Lie system 
and let $Y_1,\ldots, Y_r$ be a basis of ${\rm Sym}(V)$, 
with dual frame $\nu^1,\ldots, \nu^r$. 
Then, a differential form on~$N$ is invariant with respect to the 
Lie algebra $V$
if and only if
it is a linear combination with real coefficients of exterior products of $\nu^1,\ldots, \nu^r$.
\end{theorem}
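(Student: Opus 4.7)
The plan is to verify directly on $N$ that the dual coframe $\nu^1,\ldots,\nu^r$ itself is $V$-invariant; once this is done, a general differential form is $V$-invariant if and only if, when expanded in this coframe, all its coefficient functions are annihilated by every element of $V$, hence constant.

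First I would establish that $\mathcal{L}_Z\nu^i=0$ for every $Z\in V$ and every $i$. By definition of ${\rm Sym}(V)$, each $Y_j$ satisfies $\mathcal{L}_Z Y_j=[Z,Y_j]=0$. Applying $\mathcal{L}_Z$ to the duality identity $\nu^i(Y_j)=\delta^i_j$ and using the Leibniz rule yields $(\mathcal{L}_Z\nu^i)(Y_j)=0$ for every $j$. Since $Y_1,\ldots,Y_r$ is by hypothesis a frame of $\mathrm{T}N$, this forces $\mathcal{L}_Z\nu^i=0$. The ``if'' direction of the theorem then follows immediately, because $\mathcal{L}_Z$ is a graded derivation of $\Omega(N)$, so arbitrary wedge products of the $\nu^i$ and any real-coefficient linear combination of such products are annihilated by $\mathcal{L}_Z$.

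For the converse, let $\omega\in\Omega(N)$ be $V$-invariant and expand it uniquely as $\omega=\sum_I f_I\,\nu^I$, where $I$ runs over increasing multi-indices and $\nu^I=\nu^{i_1}\wedge\cdots\wedge\nu^{i_k}$. The previous step and the Leibniz rule give
$$
0=\mathcal{L}_Z\omega=\sum_I (\mathcal{L}_Z f_I)\,\nu^I=\sum_I (Z f_I)\,\nu^I\qquad\text{for every } Z\in V.
$$
Pointwise linear independence of the $\nu^I$ yields $Zf_I=0$ for all $Z\in V$ and all $I$. The locally automorphic hypothesis means $\mathcal{D}^V=\mathrm{T}N$, so $df_I=0$, and the standing connectedness assumption on $N$ gives that each $f_I$ is a real constant, as required.

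The whole argument is essentially bookkeeping and presents no serious obstacle; the only point deserving care is that the expansion $\omega=\sum_I f_I\,\nu^I$ is well-defined and unique, which is immediate because $Y_1,\ldots,Y_r$ is a frame of $\mathrm{T}N$. Remark \ref{Rem:Important} is the conceptual source of the $V$-invariance of the $\nu^i$: pulled through $\varphi_x$ to the Lie group $G$, the identity $[Z,Y_j]=0$ becomes the familiar commutation of right- with left-invariant vector fields underpinning the right-invariance of a left-invariant form, but this transfer is not needed for the direct argument above.
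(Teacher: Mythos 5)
Your proof is correct, but it follows a genuinely different route from the paper's. The paper argues globally through the group: by Theorem \ref{Trivial} each $\varphi_x$ is a local diffeomorphism carrying $V$ onto $V^R$ and ${\rm Sym}(V)$ onto $V^L$, so $V$-invariant forms on $N$ correspond to forms on $G$ invariant under right-invariant vector fields, which by Remark \ref{Rem:Important} are precisely the left-invariant forms, i.e.\ constant-coefficient combinations of exterior products of the left-invariant coframe; pulling back through $\varphi_x$ gives the statement. You instead work entirely on $N$: from $[Z,Y_j]=0$ and $\mathcal{L}_Z\bigl(\nu^i(Y_j)\bigr)=(\mathcal{L}_Z\nu^i)(Y_j)+\nu^i([Z,Y_j])$ you get $\mathcal{L}_Z\nu^i=0$ because the $Y_j$ form a frame (which the hypothesis ``dual frame'' guarantees), and then the derivation property plus the expansion $\omega=\sum_I f_I\,\nu^I$, the condition $\mathcal{D}^V={\rm T}N$, and the standing connectedness of $N$ force the coefficients to be real constants. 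Your argument is more elementary and self-contained: it never invokes the local action $\varphi$, Remark \ref{Rem:Important}, or even Lemma \ref{lemma:Exis}, only the commutation defining ${\rm Sym}(V)$ and the frame property; in particular it avoids the locality issues of $\varphi_x$ and yields the global statement on $N$ directly (the same mechanism as Corollary \ref{cor:superposition_rule_locally_automorphic_LS_2} handles the constancy of the coefficients). What the paper's route buys in exchange is the conceptual identification of the invariant forms with the left-invariant forms on $G$, which is what powers the later results such as Corollary \ref{cor:InvVolumeForm} and Theorem \ref{th:MTS}.
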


\begin{example}\nonumber
We consider again Example \ref{exampleCS}
studying the control system given by (\ref{Eq:ControlSys}).
Its Lie symmetries are given by the vector fields
$Y_1,\ldots,Y_5$ described in (\ref{Eq:SymCS}),
whose dual frame reads
\begin{equation}
\label{example:control_theory_dual_forms}
\begin{gathered}
\eta_1 = dx_1,\quad
\eta_2 = dx_2,\quad
\eta_3 = -x_2dx_1+dx_3 \,,
\\
\eta_4 = -2x_3dx_1+dx_4,\quad
\eta_5 = -{x^2_2}dx_1-2x_3dx_2+dx_5 \,.
\end{gathered}
\end{equation}
Therefore, according to the preceding theorem,
all the invariant differential forms of this control system
are linear combinations, with real coefficients,
of the exterior products of the $\eta_i$;
this space is
$\mathsf{\Lambda} ({\rm Sym}(V^{\rm CS}))^*$.%

It is easy to obtain multisymplectic forms invariant with respect to 
$V^{\rm CS}$ within this set. 
It follows from linear algebra considerations that all non-degenerate differential forms on $\mathbb{R}^5$ must have rank five or three. 
For instance, if we consider the invariant $5$-form
\[
\Theta_{vol} = \eta_1 \wedge \eta_2 \wedge \eta_3 \wedge \eta_4 \wedge \eta_5,
\]
we obtain a left-invariant volume form, Hence, this is a multisymplectic form satisfying that
\[
\mathcal{L}_Y\Theta_{vol}=0\,,\quad 
\forall Z\in V^{\rm CS}\,.
\]
Therefore, $(\mathbb{R}^5,\Theta_{vol}, X^{\rm CS})$ is a multisymplectic Lie system.

We can obtain other examples of multisymplectic forms compatible with this Lie system. 
Since
$$
d\eta_1=0,\quad d\eta_2=0,\quad d\eta_3=\eta_1\wedge \eta_2,\quad d\eta_4=2\eta_1\wedge \eta_3,\quad d\eta_5=2\eta_2\wedge \eta_3,
$$
we consider now the closed $3$-form
\begin{multline}
\label{Eq:example_control_theory_multi_form}
\Theta=
d(\eta_3\wedge \eta_4)+d(\eta_4\wedge \eta_5)
=\eta_1\wedge \eta_2\wedge \eta_4+2\eta_1\wedge\eta_3\wedge\eta_5-2\eta_4\wedge \eta_2\wedge \eta_3
\\
=(1-2x_2)dx_{124}+8x_3dx_{123}+2dx_{135}-2dx_{234}
\,,
\end{multline}
where we use the notation
$d x_{ijk} = d x_i \wedge d x_j \wedge d x_k$.
It is easy to prove that this $3$-form is non-degenerate, then it is a multisymplectic form of degree~$3$. 
Therefore $(\mathbb{R}^5,\Theta,X^{\rm CS})$ is a new  multisymplectic Lie system. 
It is worth noting that $\Theta$ is not a volume form. \demo
\end{example}

Although Lemma \ref{lemma:Exis} guarantees the existence of the Lie algebra ${\rm Sym}(V)$, 
its computation may become computationally complicated. 
For this reason, we will provide Theorem \ref{th:MTS}, 
which gives a family of invariant differential forms 
under the action of the elements of a Lie algebra~$V$ 
of a locally automorphic Lie system 
without using the Lie algebra of Lie symmetries 
${\rm Sym}(V)$.

We need to introduce previously some additional structures.
First,
every linear representation 
$\rho\colon \mathfrak{g} \rightarrow {\rm End}(E)$ 
of a Lie algebra $\mathfrak{g}$
on a vector space~$E$ 
can be extended to a linear representation
on the exterior algebra of~$E$;
its elements are indeed derivations, so this yields a homomorphism
$\mathfrak{g} \rightarrow {\rm Der}(\mathsf{\Lambda} E)$ 
\cite[p.\,110]{Gr78}. 
Let us consider the adjoint representation
${\rm ad}_v$ of~$\mathfrak{g}$, 
and denote by
${\rm coad}_v$
its contragradient representation,
which is the linear representation on the dual space
$\mathfrak{g}^*$
given by
${\rm coad}_v = -({\rm ad})^\top$.
We apply the preceding remark to this representation,
thus obtaining a map
$\mathfrak{g} \rightarrow 
{\rm Der}(\mathsf{\Lambda} \mathfrak{g}^*)$
that we denote by 
$v \mapsto D_v$.

\begin{theorem}
\label{th:MTS} 
Let  $(N,X,V)$ be a locally automorphic Lie system,
and let 
$\phi\colon \mathfrak{g} \rightarrow V$ be 
a Lie algebra isomorphism.
Then we have:
\begin{enumerate}
\item
The isomorphism 
$\phi\colon \mathfrak{g} \rightarrow V$
maps the adjoint endomorphism ${\rm ad}_v$ of~$\mathfrak{g}$
to the Lie derivative $\mathcal{L}_{\phi(v)}$
of the vector fields in~$V$.
\item
The dual space $V^*$
can be identified with the set 
$\{ \theta \in \Omega^1(N) \mid 
\forall X \in V,\, \iota_X\theta
\hbox{ is constant} \}$.
\\
With this identification,
the contragradient isomorphism 
$\phi^\vee \colon \mathfrak{g}^* \rightarrow V^*$
also maps the coadjoint endomorphism ${\rm coad}_v$
to the Lie derivative $\mathcal{L}_{\phi(v)}$ of 1-forms.
\item
The exterior power $\mathsf{\Lambda}^p V^*$
can be identified with the set of differential $p$-forms 
$\theta$ on~$N$
whose contractions with $p$ vector fields of~$V$ are constant.
\\
With this identification,
the prolongation of the contragradient isomorphism
to the exterior algebras,
$\mathsf{\Lambda} \phi^\vee \colon 
\mathsf{\Lambda} \mathfrak{g}^* \rightarrow \mathsf{\Lambda} V^*$,
maps the derivation $D_v$ to 
the Lie derivative $\mathcal{L}_{\phi(v)}$ of differential forms.
\item
If an element $\omega \in \mathsf{\Lambda} \mathfrak{g}^*$
satisfies that 
$D_v \omega=0$ for each $v \in \mathfrak{g}$,
then its image, 
$\mathsf{\Lambda} \phi^\vee (\omega)$, is a differential form in~$N$
which is invariant with respect to the elements of~$V$;
namely, the Lie derivative of $\mathsf{\Lambda}\phi^\vee(\omega)$ 
relative to elements of $V$ vanishes.
\end{enumerate}
\end{theorem}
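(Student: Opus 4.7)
The plan is to prove the four parts of the theorem in order, exploiting the fact that parts~3 and~4 follow rather formally from part~2 once the algebraic identifications are set up. For part~1, since $\phi$ is a Lie algebra isomorphism and the Lie bracket of vector fields equals the Lie derivative, one has for every $v,w \in \mathfrak{g}$
\[
\mathcal{L}_{\phi(v)}\phi(w) = [\phi(v),\phi(w)] = \phi([v,w]) = \phi(\mathrm{ad}_v\, w),
\]
which is precisely the required intertwining of $\mathrm{ad}_v$ and $\mathcal{L}_{\phi(v)}$ under~$\phi$.

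Part~2 is the technical heart of the argument. The locally automorphic hypothesis, $\dim V = \dim N = n$ and $\mathcal{D}^V = \mathrm{T}N$, ensures that any basis $X_1,\ldots,X_n$ of~$V$ is a global frame on~$N$, so it admits a smooth dual coframe $\eta^1,\ldots,\eta^n$. A 1-form $\theta$ satisfies $\iota_{X_i}\theta = c_i \in \mathbb{R}$ for all $i$ if and only if $\theta = \sum_i c_i\,\eta^i$, yielding an $\mathbb{R}$-linear bijection between $V^*$ and the claimed subspace of $\Omega^1(N)$. For the second statement I would use the identity $\iota_X\mathcal{L}_Y\theta = \mathcal{L}_Y(\iota_X\theta) - \iota_{[Y,X]}\theta$: with $Y = \phi(v)$ and $X \in V$, the first summand vanishes because $\iota_X\theta$ is constant, and the remaining term again lies in the identified subspace since $[\phi(v),X] \in V$. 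Evaluating this constant when $\theta$ corresponds to $\phi^\vee(\alpha)$ and applying part~1 gives
\[
\iota_X \mathcal{L}_{\phi(v)}\theta = -\alpha([v,\phi^{-1}(X)]) = (\mathrm{coad}_v\alpha)(\phi^{-1}(X)) = \bigl(\phi^\vee(\mathrm{coad}_v\alpha)\bigr)(X),
\]
which is exactly the claimed correspondence between $\mathrm{coad}_v$ and $\mathcal{L}_{\phi(v)}$ under $\phi^\vee$.

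Part~3 extends these ideas to higher exterior powers. Iterating the Cartan-type identity $\iota_{X}\mathcal{L}_{Y} = \mathcal{L}_{Y}\iota_{X} - \iota_{[Y,X]}$ over each of the $p$ contracted vector fields shows that the subspace of $p$-forms with constant contractions against any $p$-tuple of vector fields of~$V$ is stable under $\mathcal{L}_{\phi(v)}$, and the obvious multilinear extension of the identification of part~2 identifies this subspace with $\mathsf{\Lambda}^p V^*$. Both $D_v$ and $\mathcal{L}_{\phi(v)}$ are degree-zero derivations of their respective graded exterior algebras, while $\mathsf{\Lambda}\phi^\vee$ is a graded-algebra isomorphism; by part~2 the two derivations agree on degree one via $\mathsf{\Lambda}\phi^\vee$, and since a degree-zero derivation of an exterior algebra is determined by its restriction to degree one, they must agree in all degrees. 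Part~4 is then immediate: if $D_v\omega = 0$ for every $v$, part~3 gives $\mathcal{L}_{\phi(v)}\bigl(\mathsf{\Lambda}\phi^\vee(\omega)\bigr) = \mathsf{\Lambda}\phi^\vee(D_v\omega) = 0$, and $\phi(\mathfrak{g}) = V$ yields invariance of $\mathsf{\Lambda}\phi^\vee(\omega)$ under every element of~$V$.

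The main obstacle is the identification in part~2, namely showing that every $\mathbb{R}$-valued linear functional on~$V$ arises from a \emph{smooth} 1-form with constant pairings against all of~$V$. This is where the locally automorphic hypothesis is genuinely used: without $\dim V = \dim N$ and $\mathcal{D}^V = \mathrm{T}N$ the vector fields of $V$ would fail to form a pointwise basis of $\mathrm{T}N$ and the dual construction would break down. Once this identification is in hand, the remaining parts follow from general derivation-theoretic considerations and the naturality of the exterior-algebra construction.
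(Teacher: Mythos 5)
Your proof is correct and follows essentially the same route as the paper's: the same bracket computation for part~1, the same frame/dual-coframe identification together with the identity $\iota_X\mathcal{L}_Y\theta = \mathcal{L}_Y\iota_X\theta - \iota_{[Y,X]}\theta$ for part~2, the same argument that two derivations agreeing on the degree-one generators agree on the whole exterior algebra for part~3, and the same immediate conclusion for part~4. The only difference is cosmetic: you spell out the coadjoint computation $\iota_X\mathcal{L}_{\phi(v)}\theta = \bigl(\phi^\vee(\mathrm{coad}_v\alpha)\bigr)(X)$ explicitly, where the paper phrases it as $\mathcal{L}_Y$ on $V^*$ being minus the transpose of $\mathcal{L}_Y$ on $V$.
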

\begin{proof}
The first assertion is immediate:
$
\phi({\rm ad}_v u) = 
\phi([v,u]) =
[\phi(v),\phi(u)] =
\mathcal{L}_{\phi(v)} \phi(u)
$.

For the second,
if $X_i$ is a basis of~$V$,
it is also a frame of the tangent bundle $TN$;
then, if $\theta^i$ is its dual frame,
the $\theta^i$ can be considered as a basis of~$V^*$.
On the other hand, every 1-form on~$N$ can be written as
$\theta = \sum g_i \theta^i$,
and the functions $g_i$ are constant if and only if,
for every $X \in V$, one has that
$\iota_X\theta$ is constant.
The Lie derivative of 1-forms satisfies 
$
\mathcal{L}_{Y} \iota_X\theta =
\iota_X\mathcal{L}_{Y}\theta +
\iota_{\mathcal{L}_{Y} X}\theta
$.
When $X,Y \in V$ and $\theta \in V^*$ this implies that
$
\iota_X\mathcal{L}_{Y} \theta =
-\iota_{\mathcal{L}_{Y} X} \theta
$,
which means that 
$\mathcal{L}_{Y}$ on~$V^*$
is the minus the transpose of 
$\mathcal{L}_{Y}$ on~$V$;
this completes the proof of the second statement.

Third statement proceeds in a similar way. 
The correspondence between 
$D_v$ and $\mathcal{L}_{\phi(v)}$ 
is a consequence of the fact that 
in their respective algebras both operators are derivations,
and, by the preceding statement, 
they agree when applied to the subspaces 
$\mathfrak{g}^*$ and $V^*$, 
both of which generate the corresponding exterior algebras.

From this we obtain
$
\mathsf{\Lambda} \phi^\vee (D_v \omega) =
\mathcal{L}_{\phi(v)} \,\mathsf{\Lambda} \phi^\vee (\omega)
$.
Therefore, 
the invariance of $\omega$ with respect to every $v \in \mathfrak{g}$
implies the invariance of 
$\mathsf{\Lambda} \phi^\vee (\omega)$
with respect to every $Y \in V$.
\end{proof}


Let us show now that certain conditions on~$X$ allow us to easily construct a multisymplectic form turning the Vessiot--Gulbderg Lie algebra for~$X$ into locally $\Omega$-Hamiltonian vector fields.

The idea is to find $\mathfrak{g}$-invariant elements in $\mathsf{\Lambda}\mathfrak{g}^*$ 
whose image under $\mathsf{\Lambda}\phi^\vee$ is a multisymplectic form. 
In particular, every unimodular Lie algebra gives rise to an invariant element of $\mathsf{\Lambda}\mathfrak{g}^*$ of maximal degree 
whose representation is the volume differential form.

Recall that the local diffeomorphism $\varphi_x$ 
that maps $(N,X,V)$ onto an automorphic system $(G,X^R,V^R)$
maps also $V^*$ onto the $(V^R)^*$, 
which consists of the right-invariant differential one-forms on $G$. 
In view of this, one immediately obtains the following corollary.

\begin{corollary}
\label{cor:InvVolumeForm} 
Let $(N,X,V)$ be a locally automorphic Lie system. 
If $V$ is unimodular, then
$V$ admits an invariant volume form given by
$$
\Theta=\eta^1\wedge\ldots\wedge\eta^r,
$$
where $\eta^1,\ldots,\eta^r$ is any basis of elements of~$V^*$. 
Then, $(N,\Theta,X)$, where $X$ takes values in $V$, 
is a multisymplectic Lie system. 
Moreover, $\Theta$ is invariant with respect to the Lie derivatives with elements of the Lie algebra ${\rm Sym}(V)$.
\end{corollary}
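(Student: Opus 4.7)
The plan is to build $\Theta$ as the top exterior product of a global frame of $V^*$ and verify its $V$-invariance by a short computation with structure constants; the remaining assertions then follow from Lemma~\ref{lemma:Exis} and Theorem~\ref{th:invariant_form}.

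First I would fix a basis $X_1,\ldots,X_r$ of $V$. Because $\mathcal{D}^V = \mathrm{T}N$ and $\dim V = \dim N = r$, this basis is a global frame of $\mathrm{T}N$, and its dual frame $\eta^1,\ldots,\eta^r$ is therefore a global frame of $\mathrm{T}^*N$ consisting of elements of $V^*$ in the sense of Theorem~\ref{th:MTS}(2). Hence $\Theta = \eta^1 \wedge \ldots \wedge \eta^r$ is nowhere-vanishing, so it is a volume form on $N$ and in particular multisymplectic.

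Next I would verify $V$-invariance. The standard formula recorded in Section~\ref{subsection:KSeq}, namely $\mathcal{L}_{X_\alpha}\eta^\gamma = -c_{\alpha\beta}^\gamma\eta^\beta$ with $[X_\alpha,X_\beta] = c_{\alpha\beta}^\gamma X_\gamma$, combined with the Leibniz rule for $\mathcal{L}_{X_\alpha}$ applied to $\Theta$, yields
$$\mathcal{L}_{X_\alpha}\Theta = -\mathrm{tr}(\mathrm{ad}_{X_\alpha})\,\Theta.$$
Unimodularity of $V$ forces the right-hand side to vanish, so $\mathcal{L}_Z\Theta = 0$ for every $Z\in V$. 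Alternatively, one may apply Theorem~\ref{th:MTS}(4) to the top element of $\mathsf{\Lambda}^r\mathfrak{g}^*$, whose $\mathfrak{g}$-invariance is equivalent to unimodularity by the same trace argument used in the proof of Proposition~\ref{prop:unimodular_LG}. Since $X$ takes values in $V$, each $X_t$ preserves $\Theta$, and $(N,\Theta,X)$ is a multisymplectic Lie system in the sense of Definition~\ref{def:MLS}.

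Finally, for $\mathrm{Sym}(V)$-invariance, I would appeal to Theorem~\ref{th:invariant_form}: every $V$-invariant differential form on $N$ is a linear combination with \emph{constant} coefficients of exterior products of the dual frame $\nu^1,\ldots,\nu^r$ of a basis of $\mathrm{Sym}(V)$. As $\Theta$ has top degree, it must be a constant multiple of $\nu^1 \wedge \ldots \wedge \nu^r$. By Lemma~\ref{lemma:Exis}, $\mathrm{Sym}(V)$ is isomorphic to $V$ and is therefore also unimodular, so the trace computation above, now applied to a basis of $\mathrm{Sym}(V)$ and its dual frame, shows $\nu^1 \wedge \ldots \wedge \nu^r$ is $\mathrm{Sym}(V)$-invariant; consequently so is $\Theta$. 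I do not anticipate a substantial obstacle, since the ingredients are already at hand; the only subtle point worth flagging is that Theorem~\ref{th:invariant_form} supplies \emph{constant}, not merely smooth, coefficients, which is what permits the clean transfer of invariance between $V$ and $\mathrm{Sym}(V)$ through a single real scalar.
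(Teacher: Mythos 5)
Your proof is correct, and it is worth noting that it proceeds more intrinsically than the paper does. The paper obtains the corollary ``immediately'' by transporting everything to the associated automorphic system through the local diffeomorphism $\varphi_x$ of Theorem \ref{Trivial}: $V^*$ is identified with the right-invariant one-forms on $G$, so $\Theta$ corresponds to a right-invariant volume form, which is left-invariant (hence $V^R$-invariant) precisely by the Haar-measure argument of Section \ref{subsection:unimodular} (Eq.\ (\ref{Eq:exp}), Proposition \ref{prop:unimodular_LG}, Remark \ref{Rem:Important}), and is invariant under $V^L\simeq{\rm Sym}(V)$ automatically. You instead verify $\mathcal{L}_{X_\alpha}\Theta=-{\rm Tr}({\rm ad}_{X_\alpha})\,\Theta=0$ directly on $N$ with structure constants (this is exactly Eq.\ (\ref{Eq:exp}) transplanted to $N$, and your alternative via Theorem \ref{th:MTS}(4) is the algebraic version the paper has in mind), which buys a self-contained argument that never leaves $N$; the only point you leave implicit is that an arbitrary basis of $V^*$ differs from the dual frame of a basis of $V$ by a constant invertible matrix, so $\Theta$ changes only by a nonzero scalar and nothing is affected. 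For the last claim your detour through Theorem \ref{th:invariant_form} and the unimodularity of ${\rm Sym}(V)\simeq V$ is valid (and your remark about the \emph{constant} coefficients is exactly what makes it work), but it is heavier than necessary: for $Y\in{\rm Sym}(V)$ one has $(\mathcal{L}_Y\eta^i)(X_\beta)=Y(\delta^i_\beta)-\eta^i([Y,X_\beta])=0$, so $\mathcal{L}_Y\eta^i=0$ and hence $\mathcal{L}_Y\Theta=0$ with no unimodularity assumption at all, matching the paper's observation that a right-invariant volume form is automatically invariant under the symmetry algebra.
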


\begin{remark}
Theorem \ref{th:MTS} and Corollary \ref{cor:InvVolumeForm} give us a procedure to easily construct a compatible multisymplectic form certain class of Lie systems. 
In fact, Example \ref{subsection:KSeq}  
was carried out by using this procedure to obtain a compatible multisymplectic structure.
\end{remark}

\paragraph{Example}
The Schwarz equation, 
whose first-order system $X^S$ is given by (\ref{Eq:firstKS3}), 
admits a Lie algebra of symmetries given by 
(see \cite{LG99,OT05})
\begin{equation}\label{Eq:SymSc2}
Y_1= \frac{\partial}{\partial x}
\,,\qquad
Y_2= x\frac{\partial}{\partial x} + v\frac{\partial}{\partial v} + a\frac{\partial}{\partial a}
\,,\qquad
Y_3= x^2\frac{\partial}{\partial x} + 2vx\frac{\partial}{\partial v} + 2(ax+v^2)\frac{\partial}{\partial a} \,.
\end{equation} 
These are therefore the vector fields commuting with the Lie algebra $V^S=\langle X_1,X_2,X_3\rangle$, where $X_1,X_2,X_3$ are given by (\ref{Eq:firstKS3}).
The dual forms to (\ref{Eq:SymSc2}) read
$$
\eta_1=dx-\frac{2v^2x+ax^2}{2v^3}dv+\frac{x^2}{2v^2}da,\qquad \eta_2=\frac{v^2+ax}{v^3}dv-\frac x{v^2}da,\quad \eta_3=-\frac{a}{2v^3}dv+\frac{1}{2v^2}da.
$$
According to Theorem \ref{th:invariant_form}, 
every differential form invariant with respect to the vector fields $X_1,X_2,X_3$ 
spanning the Vessiot--Guldberg Lie algebra $V^S$ of~$X^S$ 
is a linear combination of the exterior products of  
$\eta_1,\eta_2,\eta_3$. 
For instance,
$$
-2\eta_1\wedge \eta_3 =
\frac{adx\wedge dv+vda\wedge dx+xdv\wedge da}{v^3} \,,
\quad 
2\eta_2\wedge \eta_3 =
\frac{dv\wedge da}{v^3} \,,
$$
are invariant with respect to the elements of~$V^S$. 
In fact, these are indeed the invariant presymplectic forms obtained in \cite{CGLS14,LV15} in an ad-hoc manner.
\demo

\bigskip
It may be difficult to find multisymplectic forms compatible with Lie systems, 
but it turns out to be quite easy to find compatible closed invariant forms.
The knowledge of the latter can be used to easily find compatible multisymplectic forms,
as will be showed in following sections.

\begin{proposition}
Every locally automorphic Lie system $(N,X,V)$
has non-zero closed invariant forms.
\end{proposition}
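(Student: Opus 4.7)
The plan is to apply Theorem~\ref{th:MTS} to transport non-zero elements of $\mathsf{\Lambda}\mathfrak{g}^*$ that are simultaneously Chevalley--Eilenberg closed and coadjoint invariant to non-zero closed invariant differential forms on~$N$. First, I would verify the key intertwining lemma: the isomorphism $\mathsf{\Lambda}\phi^\vee$ of Theorem~\ref{th:MTS} carries the Chevalley--Eilenberg coboundary on $\mathsf{\Lambda}\mathfrak{g}^*$ to the exterior differential on $\mathsf{\Lambda} V^* \subset \Omega^*(N)$. Indeed, for $\alpha = \mathsf{\Lambda}\phi^\vee(\omega) \in \mathsf{\Lambda}^p V^*$, the intrinsic formula for $d\alpha$ evaluated on $p+1$ vector fields of~$V$ reduces to its bracket terms, because the contractions of $\alpha$ with vectors of~$V$ are constant by Theorem~\ref{th:MTS}(3) and their directional derivatives vanish; the remaining bracket terms then match the Chevalley--Eilenberg formula, and since $V$ spans ${\rm T}N$ by the locally automorphic hypothesis, this determines $d\alpha$ globally on~$N$.

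Next, I would produce the required non-zero invariant cocycle in $\mathsf{\Lambda}\mathfrak{g}^*$ via a dichotomy on $\mathfrak{g} \cong V$. If $\mathfrak{g}$ is unimodular, Corollary~\ref{cor:InvVolumeForm} directly supplies an invariant volume form $\Theta = \eta^1 \wedge \cdots \wedge \eta^r$, which is automatically closed as a top-degree form. If $\mathfrak{g}$ is not unimodular, then the modular character $\chi \colon v \mapsto {\rm Tr}({\rm ad}_v)$ is a non-zero element of $\mathfrak{g}^*$. Because the trace of a commutator vanishes, $\chi$ annihilates $[\mathfrak{g},\mathfrak{g}]$, which simultaneously ensures that it is Chevalley--Eilenberg closed, $(\partial\chi)(v,w) = -\chi([v,w]) = 0$, and coadjoint invariant, $(D_v \chi)(w) = -\chi([v,w]) = 0$. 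In either case, applying $\mathsf{\Lambda}\phi^\vee$ to the chosen cocycle yields a non-zero, closed, $V$-invariant differential form on~$N$.

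The main technical step is the intertwining lemma between the two differentials. It is routine, but it genuinely exploits the hypothesis $\mathcal{D}^V = {\rm T}N$ from the definition of a locally automorphic Lie system in order to promote agreement of $d\alpha$ and $\mathsf{\Lambda}\phi^\vee(\partial\omega)$ on $V$-tuples to agreement as honest differential forms on the whole of~$N$. The conceptual heart of the argument is the observation that the modular character provides a canonical non-zero closed invariant $1$-form in precisely the complementary regime where Corollary~\ref{cor:InvVolumeForm} does not apply, so the two constructions together cover every locally automorphic Lie system.
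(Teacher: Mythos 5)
Your proof is correct, but it follows a genuinely different route from the paper's. The paper argues by transport: by Theorem~\ref{Trivial} the system is locally diffeomorphic to an automorphic Lie system on a Lie group $G$, where closed forms invariant under the right-invariant vector fields abound (e.g.\ a left-invariant volume form, which is closed for degree reasons, or differentials of left-invariant forms), and these are carried back to $N$. You instead stay intrinsically on $N$ and work through Theorem~\ref{th:MTS}: you first prove the intertwining of the Chevalley--Eilenberg differential on $\mathsf{\Lambda}\mathfrak{g}^*$ with the de~Rham differential on $\mathsf{\Lambda}V^*\subset\Omega(N)$ (correct as stated: the derivative terms in the intrinsic formula for $d$ drop out because contractions with elements of $V$ are constant, the bracket terms are again constants since $V$ is a Lie algebra, and $\mathcal{D}^V={\rm T}N$ promotes the identity from $V$-tuples to all of $N$), and then split into the unimodular case (Corollary~\ref{cor:InvVolumeForm}, closed as a top form) and the non-unimodular case, where the modular character $\chi(v)={\rm Tr}({\rm ad}_v)$ gives a non-zero, CE-closed, ${\rm coad}$-invariant element of $\mathfrak{g}^*$ whose image $\phi^\vee(\chi)$ is a closed $V$-invariant $1$-form. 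What your approach buys: it is purely algebraic, produces an explicit low-degree invariant (the modular $1$-form) exactly where the volume form of Corollary~\ref{cor:InvVolumeForm} fails to be invariant, and it sidesteps the delicate point in the paper's sketch of pushing forms through $\varphi_x$, which is in general only a local diffeomorphism. What it costs: the dichotomy is not strictly necessary — the paper's left-invariant volume form corresponds on $N$ to the top exterior product of the coframe dual to a basis of ${\rm Sym}(V)$ (Theorem~\ref{th:invariant_form}), which is closed and $V$-invariant with no unimodularity assumption; your case split arises only because you build the volume form from $V^*$ rather than from ${\rm Sym}(V)^*$. Both arguments are valid; yours is longer but more self-contained at this point of the paper.
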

\begin{proof}
The conditions of the Vessiot--Guldberg Lie algebra allow us to assume that 
$X$ along with $N$ are locally diffeomorphic
to a $t$-dependent right-invariant vector field $X^R$ 
on a connected and simply connected Lie group~$G$.
Hence, every left-invariant differential form is invariant under right-invariant vector fields.
It is obvious that there exist closed left-invariant forms on~$G$, 
e.g.\ volume forms or the differentials of left-invariant forms.
The diffeomorphism between $G$ and $N$ maps these left-invariant forms 
onto invariant closed differential forms on $N$ compatible with~$X$.
\end{proof}

In view of the above, there are plenty of multisymplectic compatible forms.
The main point is to determine closed multilinear forms 
on the Chevalley--Eilenberg cohomology of the Lie algebra,
which is a purely algebraic problem. 
We will not give any precise procedure to construct non-degenerate closed elements of the Chevalley--Eilenberg cohomology. 
Nevertheless, every non-zero decomposable $k$-form 
$\eta\in \mathsf{\Lambda} \mathfrak{g}^*$ 
is such that the rank of the mapping 
$\hat \eta: v\in \mathfrak{g} \mapsto \iota_v\eta$ 
is~$k$ and its kernel has dimension $n-k$. 
This fact suggests that appropriately chosen closed $k$-forms 
will have eventually a zero-dimensional kernel and they will become 1-nondegenerate. 
This will be enough to accomplish our aims in this work.   
Indeed this was used in the control system example
in order to obtain the invariant multisymplectic form of degree~3
given by Eq.\ (\ref{Eq:example_control_theory_multi_form}).

\section{Superposition rules for multisymplectic Lie systems}
\label{section:superposition_rules}
	
Let us employ the multisymplectic form of a multisymplectic Lie system $(N,\Theta,X)$ so as to construct superposition rules for $X$.  
In short our idea consists in constructing an abstract tensor algebra through the Lie--Hamilton algebra of Hamiltonian differential forms of $(N,\Theta,X)$. 
Then, we use the algebraic properties of this tensor algebra and their representation as geometric objects 
to obtain invariant tensor fields of the diagonal prolongation
$(N^{[m]},\Theta^{[m]},X^{[m]})$.
From these invariant tensor fields and, eventually, the Lie symmetries of the Lie system $X$,
we will obtain constants of motion 
that will finally lead 
to the superposition rule for~$X$. 
As a byproduct, many other invariants of $X$ appear, e.g.\ symplectic forms invariant under the action of the elements of~$V^X$.

\subsection{Coalgebras, $\mathfrak{g}$-modules, and tensor fields}
\label{subsection:tensor_fields}
	
The following methods rely on considering the structures related to multisymplectic Lie systems as realizations of tensor algebras and $\mathfrak{g}$-modules \cite{Va84}. 
The properties of such algebraic structures
will be then employed to obtain superposition rules for multisymplectic Lie systems. 
We refer to \cite{Bo74,Pressley,Va84} for further details on the algebraic structures appearing in this section.

A {\it coalgebra} is a linear space $A$ along with two mappings $\Delta:A\rightarrow A\boxtimes A$, the {\it coproduct}, and $\epsilon:A\rightarrow \mathbb{R}$, the {\it counit}, such that
\begin{equation}\label{con}
(\Delta\boxtimes {\rm Id}_A)\circ \Delta=
({\rm Id}_A\boxtimes \Delta )\circ \Delta,
\qquad 
({\rm Id}_A\boxtimes \epsilon)\circ \Delta=
(\epsilon\boxtimes {\rm Id}_A)\circ \Delta={\rm Id}_A.
\end{equation}
Here $\boxtimes$ refers to the tensor product used to define a coalgebra and ${\rm Id}_A$ is the identity map. This notation is important and it must not be confused with the usual tensor product $\otimes$, which appears along this section.
	
If $A$ is an associative algebra with unit element, then 
$A^{(m)}=A\boxtimes\stackrel{^m}{\ldots}\boxtimes A$ 
admits a canonical associative algebra structure with unit. 
In particular, 
the tensor algebra $T(\mathfrak{g})$ related to the Lie algebra $\mathfrak{g}$ 
has a unital associative algebra structure that induces a canonical new one in $T^{(m)}(\mathfrak{g})$ 
\cite{Bo74}. 
In view of this, a simple calculation leads to prove the following proposition 
(cf.\ \cite[Ch. 3, Sec. 11]{Bo74}).
\begin{proposition}
The tensor algebra $T(\mathfrak{g})$ admits a coalgebra structure relative to the coproduct, $\Delta$, and the counit, $\epsilon$, given by the unique morphisms of associative algebras satisfying 
$$
\Delta(v) = v\boxtimes 1+1\boxtimes v, \qquad 
\epsilon(v)=0, \qquad 
\forall v\in \mathfrak{g}.
$$
\end{proposition}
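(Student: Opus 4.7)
The strategy rests on the universal property of the tensor algebra: any linear map from $\mathfrak{g}$ into a unital associative algebra $A$ extends to a unique unital algebra morphism $T(\mathfrak{g}) \to A$. The plan is to invoke this twice to define $\Delta$ and $\epsilon$ as algebra morphisms (which simultaneously gives their uniqueness), and then to verify the coalgebra axioms by observing that both sides of each identity in (\ref{con}) are algebra morphisms out of $T(\mathfrak{g})$, so by the same universal property it suffices to check them on the generators $v \in \mathfrak{g}$.

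For existence, I would note that $T(\mathfrak{g}) \boxtimes T(\mathfrak{g})$ carries its canonical unital associative algebra structure, with product $(a \boxtimes b)(a' \boxtimes b') = aa' \boxtimes bb'$ and unit $1 \boxtimes 1$, and that $\mathbb{R}$ is such an algebra trivially. The linear map $\mathfrak{g} \to T(\mathfrak{g}) \boxtimes T(\mathfrak{g})$, $v \mapsto v \boxtimes 1 + 1 \boxtimes v$, and the zero map $\mathfrak{g} \to \mathbb{R}$ therefore extend uniquely to unital algebra morphisms $\Delta$ and $\epsilon$ as required.

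For coassociativity, both $(\Delta \boxtimes {\rm Id}_{T(\mathfrak{g})}) \circ \Delta$ and $({\rm Id}_{T(\mathfrak{g})} \boxtimes \Delta) \circ \Delta$ are unital algebra morphisms from $T(\mathfrak{g})$ to $T(\mathfrak{g})^{\boxtimes 3}$, since the tensor product of unital algebra morphisms is again such. Evaluated on a generator $v \in \mathfrak{g}$, both yield
$$v \boxtimes 1 \boxtimes 1 + 1 \boxtimes v \boxtimes 1 + 1 \boxtimes 1 \boxtimes v,$$
so they coincide on all of $T(\mathfrak{g})$ by uniqueness in the universal property. For the counit axiom, $({\rm Id}_{T(\mathfrak{g})} \boxtimes \epsilon) \circ \Delta$ and $(\epsilon \boxtimes {\rm Id}_{T(\mathfrak{g})}) \circ \Delta$ are unital algebra endomorphisms of $T(\mathfrak{g})$ after the canonical identifications $T(\mathfrak{g}) \boxtimes \mathbb{R} \cong T(\mathfrak{g}) \cong \mathbb{R} \boxtimes T(\mathfrak{g})$, each sending $v \in \mathfrak{g}$ to $v$; hence both agree with ${\rm Id}_{T(\mathfrak{g})}$.

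No serious obstacle is anticipated. The crucial observation is that, because $\Delta$ and $\epsilon$ are algebra morphisms out of the free associative algebra $T(\mathfrak{g})$, identities on all of $T(\mathfrak{g})$ reduce to identities on the generating subspace $\mathfrak{g}$. The only point requiring attention is the careful bookkeeping between the internal tensor product $\otimes$ used to construct $T(\mathfrak{g})$ and the external $\boxtimes$ used for the coalgebra structure, together with the use of the canonical algebra structure on $\boxtimes$-tensor products that makes ${\rm Id} \boxtimes \Delta$, $\Delta \boxtimes {\rm Id}$, ${\rm Id} \boxtimes \epsilon$, and $\epsilon \boxtimes {\rm Id}$ algebra morphisms.
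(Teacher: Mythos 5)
Your proposal is correct and is essentially the argument the paper has in mind: the paper gives no explicit proof, calling it ``a simple calculation'' and citing Bourbaki, and the standard route there is exactly yours --- extend $v\mapsto v\boxtimes 1+1\boxtimes v$ and $v\mapsto 0$ by the universal property of the free algebra $T(\mathfrak{g})$, then verify coassociativity and the counit axioms on generators, since all maps involved are algebra morphisms. The one point you rightly flag, keeping the internal product $\otimes$ of $T(\mathfrak{g})$ distinct from the external $\boxtimes$ carrying the componentwise algebra structure, is also the notational caveat the paper itself emphasises.
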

	
Analogously, we also define a higher-order coproduct 
$\Delta^{(m)}: 
T(\mathfrak{g}) \rightarrow T^{(m+1)}(\mathfrak{g})$ 
recurrently 
$$
\Delta^{(m+1)}= 
(
{\rm Id}
\boxtimes 
\stackrel{m}{\cdots}
\boxtimes {\rm Id}
\boxtimes \Delta
) \circ 
\Delta^{(m)}, \qquad 
\forall m\in \mathbb{N}, \qquad 
\Delta^{(1)}=\Delta,
$$
which is a morphism of associative algebras.

Recall that a {\it $\mathfrak{g}$-module} is a linear space $E$ along with a Lie algebra morphism 
$\rho:\mathfrak{g}\rightarrow {\rm End}(\mathfrak{g})$.
The adjoint representation 
${\rm ad}: v\in \mathfrak{g} \mapsto {\rm ad}_v\in {\rm Der}(\mathfrak{g})$ 
is such that each ${\rm ad}_v$ can be uniquely extended to a derivation relative to the tensor product in $T(\mathfrak{g})$. 
In this manner, 
$T(\mathfrak{g})$ becomes a $\mathfrak{g}$-module relative to 
${\rm ad}: v \in \mathfrak{g} \mapsto 
{\rm ad}_v\in {\rm Der}(T(\mathfrak{g}))$.
From now on, 
we will denote the adjoint representation of~$\mathfrak{g}$ 
and several of its generalizations and/or extensions 
in the same way as this will not lead to confusion.

The Lie algebra morphism 
${\rm ad}: \mathfrak{g} \rightarrow {\rm Der}(T(\mathfrak{g}))$ 
induces a second one  
${\rm ad}: \mathfrak{g} \rightarrow 
{\rm Der}(T^{(m)}(\mathfrak{g}))$
by requiring
$$
{\rm ad}_v({\bf t}_1\boxtimes\stackrel{m}{\cdots} \boxtimes {\bf t}_m)=
{\rm ad}_v({\bf t}_1) \boxtimes \cdots \boxtimes {\bf t}_m+
\ldots+
{\bf t}_1\boxtimes\cdots\boxtimes{\rm ad}_v({\bf t}_m),
\qquad 
\forall {\bf t}_1,\ldots {\bf t}_m\in T(\mathfrak{g}).
$$
This turns the spaces 
$T^{(m)}(\mathfrak{g})$, with $m\in \mathbb{N}$, 
into $\mathfrak{g}$-modules. 
	
Previous structures have a special relevance for the theory of multisymplectic Lie systems. 
In particular, the space to be defined next plays a significative role 
in the determination of their superposition rules and tensorial invariants.
	
\begin{definition} 
Let $E$ be the $\mathfrak{g}$-module relative to a Lie algebra representation $\rho:\mathfrak{g}\rightarrow {\rm End}(E)$. 
We write $E^\mathfrak{g}$ for the space of {\it $\mathfrak{g}$-invariant elements} of $E$, namely 
$$
E^\mathfrak{g}=
\{e\in E : \rho_v(e)=0 ,\;\forall v\in \mathfrak{g}\}.
$$
\end{definition}
	
\begin{proposition}\label{Cogmap}
The mappings 
$\Delta^{(m)}:T(\mathfrak{g})\rightarrow T^{(m+1)}(\mathfrak{g})$, 
with $m\in \mathbb{N}$, 
are $\mathfrak{g}$-module morphisms between the natural $\mathfrak{g}$-module structures of $T(\mathfrak{g})$ and $T^{(m+1)}(\mathfrak{g})$, namely
\begin{equation}\label{Proof}
\Delta^{(m)}\circ {\rm ad}_v=
{\rm ad}_v\circ\Delta^{(m)},\qquad 
\forall v\in \mathfrak{g}.
\end{equation}
Moreover, $\Delta^{(m)}({T(\mathfrak{g})}^{\mathfrak{g}})\subset [T^{(m+1)}(\mathfrak{g})]^{\mathfrak{g}}$.
\end{proposition}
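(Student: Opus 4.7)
The plan is a two-step argument: first establish the base case $m=1$ of equation (\ref{Proof}), then promote it to arbitrary $m$ by induction, and finally deduce the inclusion $\Delta^{(m)}(T(\mathfrak{g})^{\mathfrak{g}})\subset[T^{(m+1)}(\mathfrak{g})]^{\mathfrak{g}}$ as an immediate corollary.

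For the base case, I would view $T(\mathfrak{g})$ and $T^{(2)}(\mathfrak{g})$ as unital associative algebras with their canonical structures. Both $\Delta\circ{\rm ad}_v$ and ${\rm ad}_v\circ\Delta$ are then derivations from $T(\mathfrak{g})$ into $T^{(2)}(\mathfrak{g})$, the latter regarded as a $T(\mathfrak{g})$-bimodule via $\Delta$: the first because $\Delta$ is an algebra morphism composed with a derivation on $T(\mathfrak{g})$, the second because ${\rm ad}_v$ on $T^{(2)}(\mathfrak{g})$ acts by the Leibniz rule across the two tensor slots and hence defines a derivation with respect to the component-wise product. Since $T(\mathfrak{g})$ is generated as an algebra by $\mathfrak{g}$, agreement of the two derivations on $\mathfrak{g}$ is enough, and for $w\in\mathfrak{g}$ a one-line check using ${\rm ad}_v(1)=0$ (which follows from the derivation property applied to $1\cdot 1=1$) yields
\[
\Delta([v,w]) = [v,w]\boxtimes 1 + 1\boxtimes[v,w] = {\rm ad}_v(w\boxtimes 1+1\boxtimes w) = {\rm ad}_v(\Delta(w)).
\]

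For the inductive step, the recursive definition $\Delta^{(m+1)}=({\rm Id}\boxtimes\stackrel{m}{\cdots}\boxtimes{\rm Id}\boxtimes\Delta)\circ\Delta^{(m)}$ reduces the claim to verifying that ${\rm Id}^{\boxtimes m}\boxtimes\Delta$ intertwines the adjoint actions on $T^{(m+1)}(\mathfrak{g})$ and $T^{(m+2)}(\mathfrak{g})$. Because ${\rm ad}_v$ on each $T^{(k)}(\mathfrak{g})$ is a sum of Leibniz contributions, one from each tensor slot, this follows from the base case applied on the last slot together with the trivial intertwining on the first $m$ slots.

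The second assertion then requires no additional work: if $t\in T(\mathfrak{g})^{\mathfrak{g}}$ then, for every $v\in\mathfrak{g}$, ${\rm ad}_v(\Delta^{(m)}(t))=\Delta^{(m)}({\rm ad}_v(t))=0$, so $\Delta^{(m)}(t)\in[T^{(m+1)}(\mathfrak{g})]^{\mathfrak{g}}$. The only genuinely delicate point I expect to negotiate is the careful handling of the two different algebra structures involved in the base case, specifically recognising that ${\rm ad}_v\circ\Delta$ is a derivation into $T^{(2)}(\mathfrak{g})$ seen as a bimodule via $\Delta$, not into the tensor algebra itself; once this bookkeeping is straight, the rest of the argument is purely formal.
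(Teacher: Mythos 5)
Your proof is correct and follows essentially the same route as the paper: an induction on $m$ built on the recursion $\Delta^{(m+1)}=({\rm Id}^{\boxtimes m}\boxtimes\Delta)\circ\Delta^{(m)}$, with the base case $\Delta\circ{\rm ad}_v={\rm ad}_v\circ\Delta$ and the final inclusion obtained trivially. The only difference is cosmetic: you spell out, via the derivations-agreeing-on-generators argument and the explicit intertwining of ${\rm Id}^{\boxtimes m}\boxtimes\Delta$ with the adjoint actions, two steps that the paper simply asserts as immediate.
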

\begin{proof} 
Let us prove (\ref{Proof}) by induction. 
In fact, 
${\rm ad}_v\circ \Delta=\Delta\circ {\rm ad}_v$ 
for every $v\in\mathfrak{g}$ and (\ref{Proof}) holds for $m=1$. 
If (\ref{Proof}) is obeyed for a fixed~$m$, then
\vskip-5mm
$$
\Delta^{(m+1)}\circ {\rm ad}_v =
( {\rm Id}\boxtimes
\stackrel{^m}{\ldots}
\boxtimes{\rm Id}
\boxtimes\Delta) \circ 
\Delta^{(m)}\circ {\rm ad}_v =
({\rm Id}\boxtimes
\stackrel{^m}{\ldots}
\boxtimes{\rm Id}\boxtimes\Delta) \circ
{\rm ad}_v\circ \Delta^{(m)} =
{\rm ad}_v\circ \Delta^{(m+1)}
$$
for every $v\in \mathfrak{g}$. 
By induction (\ref{Proof}) is valid for any $m\in \mathbb{N}$ 
and the relation 
$\Delta^{(m)}({T(\mathfrak{g})}^{\mathfrak{g}}) \subset 
[T^{(m+1)}(\mathfrak{g})]^{\mathfrak{g}}$ 
follows trivially.
\end{proof}
	
Let $\mathfrak{T}(N)$ be the unital associative algebra of covariant tensor fields on~$N$.
We define 
$N^m=N\times \stackrel{m}{\ldots} \times N$ 
and let $(\xi_1,\ldots,\xi_m)$ be a point of $N^m$. We write
$\mathfrak{T}^{(m)}(N)=\mathfrak{T}(N)\otimes_{N^m}\stackrel{^m}{\ldots} \otimes_{N^m}\mathfrak{T}(N)$
for the space of covariant tensor fields on $N^m$ being linear combinations of tensor fields of the form
$$
T_1(\xi_1)\otimes_{N^m}\ldots\otimes_{N^{m}}T_m(\xi_m),
$$
where $T_1,\ldots,T_m\in \mathfrak{T}(N)$. 

\begin{theorem} 
\label{th:SupTheo}
 Consider the $\mathfrak{g}$-module structure on $\mathfrak{T}(N)$ 
given by $\rho: \mathfrak{g} \rightarrow {\rm Der}(\mathfrak{T}(N))$.  
If $\iota: \mathfrak{g} \rightarrow \mathfrak{T}(N)$ is a $\mathfrak{g}$-module morphism, 
then $\iota$ can be extended uniquely to a $\mathfrak{g}$-module and associative algebra morphism 
$\Upsilon: T(\mathfrak{g}) \mapsto \mathfrak{T}(N)$
by requiring that
$$
\Upsilon(v_1\otimes_\mathfrak{g} \ldots \otimes_\mathfrak{g} v_r)=\iota(v_1)\otimes_N \ldots \otimes_N \iota(v_r),\qquad \forall v_1,\ldots,v_r\in\mathfrak{g},\qquad \forall r\in \mathbb{N},
$$
where $\otimes_\mathfrak{g}$ is the tensor product in $T(\mathfrak{g})$ whereas $\otimes_N$ is the tensor field product on~$N$. 
	
Moreover, $\rho$ gives rise to a Lie algebra representation $\rho^{(m)}:\mathfrak{g}\rightarrow {\rm Der}(\mathfrak{T}^{(m)}(N))$ such that
$$
\rho^{(m)}_v (T_1\otimes_{N^m}\ldots \otimes_{N^m} T_m)=
\rho_v(T_1) \otimes_{N^m} \ldots \otimes_{N^m} T_m+\ldots+T_1 \otimes_{N^m} \ldots \otimes_{N^m} \rho_v(T_m),
$$
for all $ T_1,\ldots,T_m\in \mathfrak{T}(N)$, every $v\in \mathfrak{g}$, and every $r\in \mathbb{N}$. 
Additionally, there exists a $\mathfrak{g}$-module morphism 
$\Upsilon^{(m)}:T^{(m)}(\mathfrak{g})\rightarrow \mathfrak{T}^{(m)}(N)\subset \mathfrak{T}(N^m)$ such that
$$
\Upsilon^{(m)}({\bf t_1}\boxtimes \stackrel{m}{\ldots} \boxtimes {\bf t_m})=
\Upsilon({\bf t_1})\otimes_{N^m} \ldots \otimes_{N^m} \Upsilon({\bf t_m}),
\qquad \forall {\bf t_1},\ldots,{\bf t_m} \in T(\mathfrak{g}).
$$
\end{theorem}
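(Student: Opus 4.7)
The plan is to get $\Upsilon$ from the universal property of the tensor algebra, check equivariance on generators by exploiting the derivation property on both sides, and then bootstrap the $m$-fold version out of $\Upsilon$ together with the definition of $\rho^{(m)}$.

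First I would build $\Upsilon$. Since $T(\mathfrak{g})$ is the free unital associative algebra generated by the vector space $\mathfrak{g}$, the linear map $\iota\colon\mathfrak{g}\to \mathfrak{T}(N)$ extends in a unique way to a unital associative algebra morphism $\Upsilon\colon T(\mathfrak{g})\to \mathfrak{T}(N)$, namely the one given by the formula in the statement on pure tensors and extended by linearity. This is the standard universal property of $T(\mathfrak{g})$ (cf.\ \cite{Bo74}), so uniqueness and existence as an algebra morphism is automatic. The only thing to verify is well-definedness on $T(\mathfrak{g})$, which is built into the construction via the tensor algebra.

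Next I would check that $\Upsilon$ is a $\mathfrak{g}$-module morphism, i.e.\ $\Upsilon\circ {\rm ad}_v=\rho_v\circ \Upsilon$ for every $v\in\mathfrak{g}$. Both ${\rm ad}_v$ on $T(\mathfrak{g})$ (extended from the adjoint action on $\mathfrak{g}$ as a derivation of $\otimes_\mathfrak{g}$) and $\rho_v$ on $\mathfrak{T}(N)$ (a derivation of $\otimes_N$ by hypothesis) are derivations. Since $\Upsilon$ is an algebra morphism, both compositions $\Upsilon\circ {\rm ad}_v$ and $\rho_v\circ \Upsilon$ are derivations from $T(\mathfrak{g})$ into $\mathfrak{T}(N)$, where the latter is viewed as a $T(\mathfrak{g})$-bimodule through $\Upsilon$. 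On the generators $v'\in\mathfrak{g}\subset T(\mathfrak{g})$ they agree, because $\Upsilon({\rm ad}_v v')=\iota([v,v'])=\rho_v(\iota(v'))=\rho_v(\Upsilon(v'))$ by the hypothesis that $\iota$ is a $\mathfrak{g}$-module morphism. Two derivations that coincide on a generating set coincide everywhere, so the equality holds on all of $T(\mathfrak{g})$; a short induction on tensor degree using the Leibniz rule is the alternative if one prefers to avoid the general derivation argument.

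Third, I would deal with $\rho^{(m)}$. The formula given for $\rho^{(m)}_v$ is precisely the sum-of-derivations construction on the $m$-fold tensor product algebra $\mathfrak{T}^{(m)}(N)$. Since each slot carries a representation by derivations, the total operator is a derivation of $\otimes_{N^m}$, and the bracket relation $[\rho^{(m)}_v,\rho^{(m)}_w]=\rho^{(m)}_{[v,w]}$ reduces by the product rule to the Lie algebra morphism property of $\rho$ in each slot. Then $\Upsilon^{(m)}$ is defined on pure tensors by the prescribed formula and extended by linearity; it is well-defined as the $m$-fold tensor product of the algebra morphisms $\Upsilon$, and its image sits inside $\mathfrak{T}^{(m)}(N)\subset \mathfrak{T}(N^m)$ via pull-back by the canonical projections $N^m\to N$. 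The $\mathfrak{g}$-equivariance of $\Upsilon^{(m)}$ follows slot by slot from the already established equivariance of $\Upsilon$ together with the derivation definitions of both the source action ${\rm ad}$ on $T^{(m)}(\mathfrak{g})$ and the target action $\rho^{(m)}$.

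The main technical obstacle is the equivariance step for $\Upsilon$: keeping straight that ${\rm ad}_v$ on $T(\mathfrak{g})$ is the \emph{derivation} extension of adjoint (not the algebra morphism extension, which would not even be well-defined), so that the argument via the uniqueness of derivations on generators is the right one, and that $\rho_v$ on $\mathfrak{T}(N)$ is likewise a derivation of $\otimes_N$ by hypothesis. Once that alignment is explicit, the rest is a direct consequence of the universal property of $T(\mathfrak{g})$ and bookkeeping with tensor products.
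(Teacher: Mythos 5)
Your proposal is correct and follows essentially the same route as the paper: construct $\Upsilon$ as the unique unital algebra morphism extending $\iota$, verify $\Upsilon\circ{\rm ad}_v=\rho_v\circ\Upsilon$ using that both actions are derivations and that $\iota$ intertwines them on $\mathfrak{g}$, check that $\rho^{(m)}$ is a representation via cancellation of the cross terms in the commutator, and obtain $\Upsilon^{(m)}$ and its equivariance slot by slot. The only difference is presentational: where the paper expands everything explicitly on decomposable tensors, you package the same computation through the universal property of $T(\mathfrak{g})$ and the fact that two $\Upsilon$-derivations agreeing on the generating set $\mathfrak{g}$ (and on the unit) coincide, which, as you note, is just the paper's induction on tensor degree in abstract form.
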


\begin{figure}[h]\label{Diagram:Th54}
\caption{Diagram summarising the results of Theorem \ref{th:SupTheo}.}
\begin{equation}
\label{StarTrekki1}
\xymatrix{
T(\mathfrak{g}) \ar[d]_{{\rm ad}_v} \ar[rr]^{\Delta^{(m-1)}} &&
T^{(m)}(\mathfrak{g}) \ar[d]_{{\rm ad}_v} \ar[rr]^{\Upsilon^{(m)}} &&
\mathfrak{T}^{(m)}(N) \ar[d]_{\rho^{(m)}_v}
\\
T(\mathfrak{g}) \ar[rr]^{\Delta^{(m-1)}} &&
T^{(m)}(\mathfrak{g}) \ar[rr]^{\Upsilon^{(m)}} &&
\mathfrak{T}^{(m)}(N).
}
\end{equation}
\end{figure}
\begin{proof} 
The map $\Upsilon$ is a well-defined algebra morphism. 
Let us prove that it is a $\mathfrak{g}$-module morphism also. 
Since $\iota$ is a $\mathfrak{g}$-module morphism and $\rho_v$ is a derivation for every $v\in \mathfrak{g}$, 
one obtains that, on decomposable elements of $T(\mathfrak{g})$,
$$
\begin{aligned}
\Upsilon({\rm ad}_v(v_1\otimes_\mathfrak{g}\ldots\otimes_\mathfrak{g} v_r))&=
\Upsilon \left(
\sum_{\alpha=1}^r v_1 \otimes_\mathfrak{g} \ldots \otimes_\mathfrak{g} {\rm ad}_v v_\alpha \otimes_\mathfrak{g} \ldots \otimes_\mathfrak{g} v_r
\right) \\
&= \sum_{\alpha=1}^r \iota(v_1) \otimes_N \ldots \otimes_N \iota ({\rm ad}_v v_\alpha) \otimes_N \ldots \otimes_N \iota(v_r) \\
&= \sum_{\alpha=1}^r \iota(v_1) \otimes_N \ldots \otimes_N \rho_v \iota(v_\alpha) \otimes_N \ldots\otimes_N\iota(v_r) \\
&= \rho_v(\iota(v_1) \otimes_N \ldots \otimes_N \iota(v_r)) \\
&= \rho_v
(\Upsilon (v_1 \otimes_\mathfrak{g} \ldots \otimes_\mathfrak{g} v_r)).
\end{aligned}
$$
As $\Upsilon\circ {\rm ad}_v=\rho_v\circ \Upsilon$ on decomposable elements of $T(\mathfrak{g})$, 
this equality is obeyed on the whole 
$T(\mathfrak{g})$ and 
$\Upsilon\circ {\rm ad}_v=\rho_v\circ \Upsilon$ 
for every $v\in \mathfrak{g}$ and 
$\Upsilon$ becomes a morphism of $\mathfrak{g}$-modules.
	
Let us verify that  $\rho^{(m)}$ is a Lie algebra morphism. As previously, we will start by proving this fact on decomposable elements of $\mathfrak{T}^{(m)}(N)\subset\mathfrak{T}(N^m)$:
$$
\rho_w^{(r)}\rho_v^{(r)}(T_1\otimes_{N^m}\ldots \otimes_{N^m}T_m)=\sum_{i,j=1}^rT_1\otimes_{N^m}\ldots\otimes_{N^m}\rho_w(T_i)\otimes_{N^m}\ldots\otimes_{N^m}\rho_v(T_j)\otimes_{N^m}\ldots \otimes_{N^m}T_m,
$$
for all $v,w\in \mathfrak{g}$ and $T_1,\ldots,T_m\in \mathfrak{T}(N).$ 
Subtracting from this the value of  
$\rho_v^{(r)}\rho_w^{(r)} 
(T_1 \otimes_{N^m} \ldots \otimes_{N^m} T_m)$, 
all elements with $v$ and $w$ acting on elements
$T_i,T_j$ with $i\neq j$ disappear and we are left with
$$
\left[\rho_w^{(r)},\rho_v^{(r)}\right]
(T_1\otimes_{N^m}\ldots \otimes_{N^m}T_m)=
\sum_{i=1}^r
T_1 \otimes_{N^m} \ldots \otimes_{N^m} [\rho_w,\rho_v](T_i) \otimes_{N^m} \ldots \otimes_{N^m} T_m.
$$
Since $\rho$ is a Lie algebra representation, it follows that 
$$
\begin{aligned}
\left[\rho_w^{(r)},\rho_v^{(r)}\right] 
(T_1 \otimes_{N^m} \ldots \otimes_{N^m} T_m)&=
\sum_{i=1}^n T_1 \otimes_{N^m} \ldots \otimes_{N^m} \rho_{[w,v]} (T_i) \otimes_{N^m} \ldots \otimes_{N^m} T_m\\
&= \rho_{[w,v]}^{(r)} (T_1 \otimes_{N^m} \ldots \otimes_{N^m} T_m)
\end{aligned}
$$
on decomposable elements of $\mathfrak{T}^{(m)}(N)$. 
The equality for the whole $\mathfrak{T}^{(m)}(N)$ is therefore also satisfied.
	
The fact that  $\Upsilon^{(m)}$ is a morphism of $\mathfrak{g}$-modules 
results immediately from using the previous ideas.
\end{proof}

\smallskip
It is useful now to consider two particular cases described in the following lemmas. 
\begin{lemma} 
The space $S(\mathfrak{g})$ of totally symmetric tensors over $\mathfrak{g}$ 
and the space $\mathsf{\Lambda}(\mathfrak{g})$ of totally antisymmetric tensors over $\mathfrak{g}$ 
are  $\mathfrak{g}$-submodules of $T(\mathfrak{g})$. 
Similarly, $S^{(m)}(\mathfrak{g})$ and $\mathsf{\Lambda}^{(m)}(\mathfrak{g})$ are  $\mathfrak{g}$-submodules of $T^{(m)}(\mathfrak{g})$.
\end{lemma}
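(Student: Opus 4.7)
The plan is to reduce the claim to the observation that the adjoint derivation commutes with the natural action of the symmetric group on tensor powers. Recall that $S^r(\mathfrak{g})$ consists of the tensors in $T^r(\mathfrak{g})$ fixed by every permutation of factors, while $\mathsf{\Lambda}^r(\mathfrak{g})$ consists of those on which each permutation acts by its sign. Therefore, to show that $\mathrm{ad}_v$ preserves $S(\mathfrak{g})=\bigoplus_r S^r(\mathfrak{g})$ and $\mathsf{\Lambda}(\mathfrak{g})=\bigoplus_r \mathsf{\Lambda}^r(\mathfrak{g})$, it suffices to verify that the restriction of $\mathrm{ad}_v$ to each homogeneous component $T^r(\mathfrak{g})$ commutes with every $\sigma\in S_r$.

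First I would record the explicit form of this restriction: since $\mathrm{ad}_v$ is a derivation of $T(\mathfrak{g})$ extending the adjoint map on $\mathfrak{g}$, its action on decomposable tensors is
$$
\mathrm{ad}_v(v_1\otimes_\mathfrak{g}\cdots\otimes_\mathfrak{g} v_r)=\sum_{i=1}^{r}v_1\otimes_\mathfrak{g}\cdots\otimes_\mathfrak{g}\mathrm{ad}_v(v_i)\otimes_\mathfrak{g}\cdots\otimes_\mathfrak{g} v_r.
$$
For any $\sigma\in S_r$, applying $\sigma$ to both sides simply relabels the sum, so $\sigma\circ\mathrm{ad}_v$ and $\mathrm{ad}_v\circ\sigma$ agree on decomposable tensors. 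By linearity this extends to all of $T^r(\mathfrak{g})$, so $\mathrm{ad}_v$ commutes with the whole group algebra of $S_r$. In particular it commutes with the symmetrizer $\frac{1}{r!}\sum_\sigma \sigma$ and the antisymmetrizer $\frac{1}{r!}\sum_\sigma \mathrm{sgn}(\sigma)\sigma$, whose images are $S^r(\mathfrak{g})$ and $\mathsf{\Lambda}^r(\mathfrak{g})$ respectively; hence these subspaces are preserved, and so is their direct sum.

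For the second half, I would invoke the definition of $\mathrm{ad}_v$ on $T^{(m)}(\mathfrak{g})$ already recorded in the excerpt, namely
$$
\mathrm{ad}_v(\mathbf{t}_1\boxtimes\cdots\boxtimes\mathbf{t}_m)=\sum_{i=1}^{m}\mathbf{t}_1\boxtimes\cdots\boxtimes\mathrm{ad}_v(\mathbf{t}_i)\boxtimes\cdots\boxtimes\mathbf{t}_m.
$$
If each $\mathbf{t}_i$ lies in $S(\mathfrak{g})$ (resp.\ $\mathsf{\Lambda}(\mathfrak{g})$), then by the first part $\mathrm{ad}_v(\mathbf{t}_i)$ does too, so each summand on the right remains in $S^{(m)}(\mathfrak{g})$ (resp.\ $\mathsf{\Lambda}^{(m)}(\mathfrak{g})$), and by linearity these subspaces are invariant. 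Since no genuine obstacle arises---the only substantive point is the commutation of $\mathrm{ad}_v$ with the symmetric group action, which follows from it being a derivation---I expect the main effort to be purely bookkeeping: keeping the two tensor products $\otimes_\mathfrak{g}$ and $\boxtimes$ clearly distinguished so that the derivation identity in each factor is applied at the right level.
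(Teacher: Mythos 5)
Your proof is correct and follows essentially the same route as the paper, which disposes of the lemma with the single observation that $\mathrm{ad}_v$ sends totally symmetric (resp.\ antisymmetric) tensors to tensors of the same type; your commutation of the derivation $\mathrm{ad}_v$ with the $S_r$-action (hence with the symmetrizer and antisymmetrizer) is just a fuller justification of that observation, and your treatment of the $\boxtimes$-factors in $T^{(m)}(\mathfrak{g})$ matches the paper's definition of the extended action.
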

Its proof is an immediate consequence of the fact that if ${\bf t}_1$ is an element totally antisymmetric (symmetric) of $T^{(m)}(\mathfrak{g})$, 
then ${\rm ad}_v({\bf t}_1)$ is totally antisymmetric (symmetric) for every $v\in \mathfrak{g}$.

\smallskip
The following lemma shows that the mappings $\Delta^{(m)}$ can be restricted to symmetric and antisymmetric tensors.
\begin{lemma}
\label{FuckRes} 
For every Lie algebra $\mathfrak{g}$ and $m \in \mathbb{N}$, one has that 
$\Delta^{(m)}S(\mathfrak{g}) \subset S^{(m+1)}(\mathfrak{g})$
and 
$\Delta^{(m)} \mathsf{\Lambda}(\mathfrak{g}) \subset \mathsf{\Lambda}^{(m+1)}(\mathfrak{g})$.
\end{lemma}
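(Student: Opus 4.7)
The plan is to reduce both inclusions to the base case $m=1$ by induction on $m$, and then to verify that case directly by expanding the coproduct $\Delta$ using its multiplicativity as an associative algebra morphism.

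For the inductive step, assume $\Delta^{(m)}$ sends $S(\mathfrak{g})$ into $S^{(m+1)}(\mathfrak{g})$ and $\mathsf{\Lambda}(\mathfrak{g})$ into $\mathsf{\Lambda}^{(m+1)}(\mathfrak{g})$. Applying the recursion $\Delta^{(m+1)}=(\mathrm{Id}\boxtimes\cdots\boxtimes\mathrm{Id}\boxtimes\Delta)\circ\Delta^{(m)}$ and invoking the base case in the last slot immediately gives $\Delta^{(m+1)}(S(\mathfrak{g}))\subset S^{(m)}(\mathfrak{g})\boxtimes S^{(2)}(\mathfrak{g})=S^{(m+2)}(\mathfrak{g})$, and likewise for $\mathsf{\Lambda}$. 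Hence everything reduces to showing both inclusions for $\Delta=\Delta^{(1)}$.

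For $\Delta(S(\mathfrak{g}))\subset S^{(2)}(\mathfrak{g})$, I would use the polarization identity $S^{n}(\mathfrak{g})=\mathrm{span}\{v^{\otimes n}:v\in\mathfrak{g}\}$, valid over $\mathbb{R}$. Since $\Delta$ is an algebra morphism with $\Delta(v)=v\boxtimes 1+1\boxtimes v$, and since $v\boxtimes 1$ and $1\boxtimes v$ commute in $T(\mathfrak{g})\boxtimes T(\mathfrak{g})$ (both products equal $v\boxtimes v$), the binomial theorem yields
\[
\Delta(v^{\otimes n})=(v\boxtimes 1+1\boxtimes v)^{n}=\sum_{k=0}^{n}\binom{n}{k}\,v^{\otimes k}\boxtimes v^{\otimes(n-k)},
\]
each summand of which lies in $S^{k}(\mathfrak{g})\boxtimes S^{n-k}(\mathfrak{g})\subset S^{(2)}(\mathfrak{g})$. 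Linearity then concludes the symmetric case.

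For $\Delta(\mathsf{\Lambda}(\mathfrak{g}))\subset\mathsf{\Lambda}^{(2)}(\mathfrak{g})$, I would start from a spanning element $\mathrm{Alt}(v_1\otimes\cdots\otimes v_n)=\tfrac{1}{n!}\sum_{\sigma}\mathrm{sgn}(\sigma)\,v_{\sigma(1)}\otimes\cdots\otimes v_{\sigma(n)}$, where $\sigma$ ranges over permutations of $\{1,\ldots,n\}$. Multiplicativity of $\Delta$ together with the commutation of $v_i\boxtimes 1$ and $1\boxtimes v_j$ produces the shuffle expansion
\[
\Delta(v_{\sigma(1)}\otimes\cdots\otimes v_{\sigma(n)})=\sum_{S\subseteq\{1,\ldots,n\}}v_{\sigma(s_1)}\otimes\cdots\otimes v_{\sigma(s_k)}\boxtimes v_{\sigma(t_1)}\otimes\cdots\otimes v_{\sigma(t_{n-k})},
\]
with $S=\{s_1<\cdots<s_k\}$ and $S^{c}=\{t_1<\cdots<t_{n-k}\}$. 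Fixing a bidegree $(k,n-k)$, the change of variable $\tau=\sigma\circ\kappa_{S}^{-1}$, where $\kappa_{S}$ is the shuffle sending $i\mapsto s_i$ and $k+j\mapsto t_j$, factors the sign as $\mathrm{sgn}(\sigma)=\mathrm{sgn}(\kappa_S)\,\mathrm{sgn}(\tau)$ and decouples the $S$-sum from the $\tau$-sum. A further decomposition $\tau=\kappa_{A}\circ(\rho\times\pi)$ with $A=\tau(\{1,\ldots,k\})$ and $(\rho,\pi)$ permutations of $\{1,\ldots,k\}$ and $\{1,\ldots,n-k\}$ performs an independent antisymmetrization in each tensor slot and expresses the bidegree-$(k,n-k)$ component as a scalar multiple of $\sum_{|A|=k}\mathrm{sgn}(\kappa_A)\,(v_{a_1}\wedge\cdots\wedge v_{a_k})\boxtimes(v_{b_1}\wedge\cdots\wedge v_{b_{n-k}})$, which manifestly lies in $\mathsf{\Lambda}^{k}(\mathfrak{g})\boxtimes\mathsf{\Lambda}^{n-k}(\mathfrak{g})\subset\mathsf{\Lambda}^{(2)}(\mathfrak{g})$.

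The main obstacle is the sign bookkeeping in the antisymmetric case: the nested change of variables must be executed carefully in order to cleanly separate the shuffle sign $\mathrm{sgn}(\kappa_S)$, whose sum over $S$ becomes an overall scalar, from the internal signs that produce the two wedge factors. The symmetric case sidesteps these combinatorics entirely via polarization.
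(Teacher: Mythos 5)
Your proof is correct, and it differs from the paper's mainly in how the symmetric case is handled. The paper treats both cases uniformly: it applies $\Delta$ to the (sign-)symmetrized word $\sum_{\sigma\in S_r}(\pm)\,\sigma(v_1\otimes\cdots\otimes v_r)$, expands by multiplicativity into subset terms $v_{i_1<\cdots<i_k}\boxtimes v^c_{i_1<\cdots<i_k}$, and then decomposes $S_r$ via coset representatives of the subgroup preserving $\{i_1,\ldots,i_k\}$ and of the subgroup fixing it pointwise; this is exactly the shuffle decomposition you carry out in your antisymmetric argument (your substitutions $\sigma=\tau\circ\kappa_S$ and $\tau=\kappa_A\circ(\rho\times\pi)$ play the role of the paper's representatives $\hat\sigma_\alpha,\tilde\sigma_\beta$ and its pointwise stabilizer, with the same sign factorization), so there the two routes essentially coincide. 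What you do genuinely differently, and more cleanly, is the symmetric case: polarization reduces everything to pure powers $v^{\otimes n}$, where multiplicativity and the commutation of $v\boxtimes 1$ with $1\boxtimes v$ give the binomial expansion with each summand visibly in $S^{k}(\mathfrak{g})\boxtimes S^{n-k}(\mathfrak{g})$, eliminating all permutation combinatorics; this shortcut is unavailable for $\mathsf{\Lambda}(\mathfrak{g})$ (pure powers vanish), which is why you, like the paper, must resort to shuffles there. Two small points: the change of variable should be $\tau=\sigma\circ\kappa_S$ (so that $v_{\sigma(s_i)}=v_{\tau(i)}$), not $\sigma\circ\kappa_S^{-1}$ --- an inessential convention slip, since the factorization $\mathrm{sgn}(\sigma)=\mathrm{sgn}(\kappa_S)\mathrm{sgn}(\tau)$ is unaffected; and the overall scalar $\sum_{|S|=k}\mathrm{sgn}(\kappa_S)$ in your final expression can vanish for some bidegrees (e.g.\ $n$ even, $k$ odd), which is harmless for the inclusion but means the ``scalar multiple'' should not be read as nonzero. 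Your inductive reduction to $m=1$ is the same one-line recursion argument used in the paper.
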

The proof of this result is rather technical;
it will be done in the appendix.

\medskip
Lemma \ref{FuckRes} allows us to extend 
Proposition \ref{Cogmap} and
Theorem \ref{th:SupTheo}
to $S(\mathfrak{g})$ and $\mathsf{\Lambda}(\mathfrak{g})$. 
All these results can be summarised through the commutative diagrams displayed in  Figure 2.
\begin{figure}[h]\label{diagram:Dia1}
\caption{Commutative diagrams summarising the results of 	Theorem \ref{th:SupTheo} and Lemma \ref{FuckRes}.
Both diagrams are the restrictions of 
diagram (\ref{StarTrekki1})
to the submodules of symmetric and antisymmetric tensors.
}
\kern -5mm
\begin{equation}
\label{StarTrekki2}
\xymatrix@C=14mm{
\mathsf{S}(\mathfrak{g}) \ar[d]_{{\rm ad}_v} \ar[r]^{\Delta^{(m-1)}} &
\mathsf{S}^{(m)}(\mathfrak{g}) \ar[d]_{{\rm ad}_v} \ar[r]^{\Upsilon^{(m)}} &
\mathfrak{T}^{(m)}(N) \ar[d]_{\rho^{(m)}_v}
\\
\mathsf{S}(\mathfrak{g}) \ar[r]^{\Delta^{(m-1)}} &
\mathsf{S}^{(m)}(\mathfrak{g}) \ar[r]^{\Upsilon^{(m)}} &
\mathfrak{T}^{(m)}(N) 
}
\quad
\xymatrix@C=14mm{
\mathsf{\Lambda}(\mathfrak{g}) \ar[d]_{{\rm ad}_v} \ar[r]^{\Delta^{(m-1)}} &
\mathsf{\Lambda}^{(m)}(\mathfrak{g}) \ar[d]_{{\rm ad}_v} \ar[r]^{\Upsilon^{(m)}} &
\mathfrak{T}^{(m)}(N) \ar[d]_{\rho^{(m)}_v}
\\
\mathsf{\Lambda}(\mathfrak{g}) \ar[r]^{\Delta^{(m-1)}} &
\mathsf{\Lambda}^{(m)}(\mathfrak{g}) \ar[r]^{\Upsilon^{(m)}} &
\mathfrak{T}^{(m)}(N) 
}
\end{equation}
\end{figure}

\subsection{Application to multisymplectic Lie systems: calculus of tensor  invariants}

Let us use the algebraic structures developed in the previous section to derive superposition rules for multisymplectic Lie systems without solving systems of PDEs or ordinary differential equations as in most of the literature \cite{CGM07,Dissertationes,CL11,PW}.

A relevant tool for obtaining a superposition rule is given by diagonal prolongations of vector fields \cite{CGM07}. If $X$ is a vector field on~$N$,
its \emph{diagonal prolongation} to $N^m$
is the vector field
$X^{[m]}(x_{(1)},\ldots,x_{(m)}) =
X(x_{(1)}) + \ldots + X(x_{(m)})$. 
Let us recall the distributional method to obtain superposition rules. 
For further details we refer to 
\cite[Sections 1.5 and 1.6]{Dissertationes}. 
Given a Lie system on $N$ 
with a Vessiot--Guldberg Lie algebra~$V$ 
spanned by a basis of vector fields $X_1,\ldots,X_r$, 
a superposition rule for $X$ can be obtained 
by determining the smallest $m$ so that 
$X_1^{[m]},\ldots,X_r^{[m]}$ 
are linearly independent at a generic point. 
Then, we must obtain $n$ common first-integrals 
$I_1,\ldots,I_n$ for 
$X_1^{[m+1]},\ldots,X_r^{[m+1]}$ 
satisfying that 
\begin{equation}\label{conp}\partial(I_1,\ldots,I_n) / 
\partial(x^1_{(1)},\ldots,x^n_{(1)}) 
\neq 0.
\end{equation}
Let $\lambda_1,\ldots,\lambda_n$ be real numbers. 
By assuming 
$I_1=\lambda_1$, \ldots, $I_n=\lambda_n$, 
condition (\ref{conp}) allows us to express 
$x_{(1)}^1,\ldots,x_{(1)}^n$ 
as functions of $\lambda_1,\ldots,\lambda_n$ 
and the variables $x_{(i)}^1,\ldots,x_{(i)}^n$ 
for $i=2,\ldots,m+1$, 
which gives a superposition rule depending on $m$ particular solutions.

Our algebraic/geometric methods to obtain superposition rules 
rely on obtaining $I_1,\ldots,I_n$ 
through $\mathfrak{g}$-invariant elements 
of the spaces $T^{(q)}(\mathfrak{g})$, 
with $q\in \mathbb{N}$. 
We propose two methods, 
one relying on the Casimir elements 
of the universal enveloping algebra $U(\mathfrak{g})$, 
and another one basing 
on the invariant elements of the Grassmann algebra 
$\mathsf{\Lambda}(\mathfrak{g})$ 
of the linear space~$\mathfrak{g}$.

\begin{proposition}	
If $(N,\Theta,X)$ is a multisymplectic Lie system, 
then $(N^{[m]},\Theta^{[m]},X^{[m]})$ 
is also a multisymplectic Lie system.
\end{proposition}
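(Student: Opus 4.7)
The plan is to make the prolongation of the multisymplectic form explicit, then verify the three defining properties of a multisymplectic Lie system one by one. Write $\pi_i\colon N^m \to N$ for the projection onto the $i$-th factor and define
\[
\Theta^{[m]} \;=\; \sum_{i=1}^m \pi_i^*\Theta \,.
\]
Since pullback commutes with the exterior derivative and $d\Theta = 0$, the form $\Theta^{[m]}$ is automatically closed, so the only genuine work in Step~1 is to check 1-nondegeneracy.

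For this, note that $T_{(x_1,\ldots,x_m)}N^m \cong \bigoplus_i T_{x_i}N$, so any tangent vector decomposes as $v = v_1 + \cdots + v_m$ with $v_i$ sitting in the $i$-th summand. A direct computation gives $\iota_{v}\,\Theta^{[m]} = \sum_i \pi_i^*(\iota_{v_i}\Theta)$, where each term $\pi_i^*(\iota_{v_i}\Theta)$ vanishes as soon as at least one of its arguments is pushed forward by a $\pi_j$ with $j\neq i$. Feeding it tangent vectors supported in a single factor isolates each summand; thus $\iota_v\Theta^{[m]}=0$ forces $\iota_{v_i}\Theta = 0$ for every $i$, and 1-nondegeneracy of $\Theta$ gives $v = 0$. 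Hence $\Theta^{[m]}$ is a multisymplectic $k$-form on $N^m$.

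Next, the Lie system structure of $X^{[m]}$ is inherited from that of $X$: writing $X = \sum_{\alpha}b_\alpha(t)X_\alpha$ with $X_1,\ldots,X_r$ a basis of a Vessiot--Guldberg Lie algebra $V \supset V^X$, the map $Y \mapsto Y^{[m]}$ is a Lie algebra homomorphism $\mathfrak{X}(N)\to\mathfrak{X}(N^m)$ (a standard fact on diagonal prolongations, cf.\ \cite{CGM07}), so the $X_\alpha^{[m]}$ span a finite-dimensional Lie algebra containing $V^{X^{[m]}}$, and $X^{[m]} = \sum_\alpha b_\alpha(t)X_\alpha^{[m]}$ is a Lie system. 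Its smallest Lie algebra $V^{X^{[m]}}$ is a homomorphic image of $V^X$ and, in particular, finite-dimensional.

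Finally, the prolongations of (locally) Hamiltonian vector fields are again (locally) Hamiltonian. Decomposing $Y^{[m]} = \sum_i Y_i^N$, where $Y_i^N$ is $\pi_i$-related to $Y$ and projects to zero under $\pi_j$ for $j\neq i$, one obtains
\[
\iota_{Y^{[m]}}\Theta^{[m]} \;=\; \sum_{i,j}\iota_{Y_i^N}\pi_j^*\Theta \;=\; \sum_i \pi_i^*(\iota_Y\Theta) \,.
\]
If $\iota_Y\Theta$ is closed (resp.\ exact with primitive $\Upsilon$), then $\iota_{Y^{[m]}}\Theta^{[m]}$ is closed (resp.\ exact, with primitive $\sum_i\pi_i^*\Upsilon$). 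Applying this to every $Z\in V^{X^{[m]}}$, viewed as the image under prolongation of an element of $V^X$, concludes the proof. The only genuinely non-routine step is the 1-nondegeneracy argument, which is the main obstacle but is handled cleanly by the factor-by-factor decomposition above.
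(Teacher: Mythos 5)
Your proof is correct and follows essentially the same route as the paper: define $\Theta^{[m]}$ as the sum of the pullbacks $\pi_i^*\Theta$, observe it is closed and 1-nondegenerate, and check that $\iota_{X_t^{[m]}}\Theta^{[m]}=\sum_i\pi_i^*(d\theta_t)$ makes the prolonged Vessiot--Guldberg vector fields Hamiltonian. The only difference is that you spell out the factor-by-factor 1-nondegeneracy argument, which the paper asserts without detail.
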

\begin{proof} 
By assumption, the Vessiot--Guldberg Lie algebra $V^X$ related to $X$ 
consists of Hamiltonian vector fields relative to the multisymplectic form $\Theta$, i.e. 
$\iota_{X_t}\Theta=d\theta_{t}$ 
for certain differential forms $\theta_{t}$, with $t\in \mathbb{R}$. 
Since $\Theta^{[m]}(x_{(1)},\ldots,x_{(m)})=\sum_{a=1}^m\Theta(x_{(a)})$, 
then $\Theta^{[m]}$ is closed and 1-nondegenerate. 
Additionally, 
$$
\iota_{X_t^{[m]}}\Theta^{[m]}=\sum_{a=1}^md\theta_{t}(x_{(a)}),
$$
and the vector fields $X_t^{[m]}$ are Hamiltonian relative to $\Theta^{[m]}$ for every $t\in \mathbb{R}$. 
Hence, $(N^{[m]},\Theta^{[m]},X^{[m]})$ is a multisymplectic Lie system.
\end{proof}

It has been shown in Proposition \ref{prop:LHalgebra}
that 
every multisymplectic Lie system $(N,\Theta,X)$ of degree~$k$
induces on the set 
$\mathfrak{W}$
of the the differentials of the Hamiltonian forms of~$V$ 
a Lie algebra structure,
inherited from the Lie algebra structure on the space of differential $(k{-}1)$-forms
defined in Definition \ref{def:AH}. 
Let us apply the results of the preceding section to~$\mathfrak{W}$.

\begin{proposition}
\label{prop:CorRep} 
Let $(N,\Theta,X)$ be a multisymplectic Lie system with an induced Lie--Hamilton algebra $\mathfrak{W}$. 
Consider a Lie algebra isomorphism $\phi:\mathfrak{g}\simeq \mathfrak{W}$. 
Then, $\mathfrak{T}^{(m)}(N)$ becomes a $\mathfrak{g}$-module with respect to the Lie algebra representation
$$
\rho^{(m)}: v\in \mathfrak{g} \mapsto \mathcal{L}_{-X^{[m]}_v}\in {\rm Der}(\mathfrak{T}^{(m)}(N)),
$$
where $X_v$ is the unique Hamiltonian vector field (relative to $\Theta$) associated with $\phi(v)\in \mathfrak{M}$. 
Moreover, $\phi$ can be extended to a $\mathfrak{g}$-module morphism  
$\Upsilon^{(m)}:T^{(m)}(\mathfrak{g})\rightarrow \mathfrak{T}^{(m)}(N)$, i.e.
\begin{equation}
\label{Eq:RelRel}
\Upsilon^{(m)}({\rm ad}_v({\bf t})) =
\mathcal{L}_{-X^{[m]}_v} \Upsilon^{(m)}({\bf t}),\qquad 
\forall v\in \mathfrak{g},
\forall\, {\bf t}\in T^{(m)}(\mathfrak{g}).
\end{equation}
\end{proposition}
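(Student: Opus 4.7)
The plan is to reduce the statement to a direct application of Theorem \ref{th:SupTheo}, after verifying two ingredients: that the prescribed $\rho^{(m)}$ is genuinely a Lie algebra representation on $\mathfrak{T}^{(m)}(N)$, and that the inclusion $\iota\colon \mathfrak{g}\to \mathfrak{T}(N)$ induced by $\phi$ is a $\mathfrak{g}$-module morphism in the sense required by that theorem. The whole argument hinges on controlling the sign discrepancy between Hamiltonian vector fields and their associated Hamiltonian forms.

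First, I would establish the key sign relation. For $v\in\mathfrak{g}$, let $X_v$ be the unique (locally) Hamiltonian vector field with $\iota_{X_v}\Theta=\phi(v)\in\mathfrak{M}$. Because, by the observation just after Definition \ref{def:AH}, the correspondence $X\mapsto \iota_X\Theta$ is an injective Lie algebra \emph{anti}-homomorphism, the isomorphism $\phi$ forces
\[
X_{[v,w]}=[X_w,X_v]=-[X_v,X_w],
\]
so $v\mapsto -X_v$ is a genuine Lie algebra homomorphism. Taking diagonal prolongations, which respect brackets, the assignment $v\mapsto -X_v^{[m]}$ is still a homomorphism, and Lie differentiation is a Lie algebra morphism from vector fields on $N^m$ to $\mathrm{Der}(\mathfrak{T}^{(m)}(N))$. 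Hence
\[
[\rho^{(m)}_v,\rho^{(m)}_w]=\mathcal{L}_{[-X_v^{[m]},-X_w^{[m]}]}=\mathcal{L}_{-X_{[v,w]}^{[m]}}=\rho^{(m)}_{[v,w]},
\]
showing that $\rho^{(m)}$ is a Lie algebra representation.

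Next, I would check that $\iota\colon \mathfrak{g}\to \mathfrak{T}(N)$, $v\mapsto \phi(v)=\iota_{X_v}\Theta$, intertwines the adjoint action on $\mathfrak{g}$ with the $\rho=\rho^{(1)}$-action on $\mathfrak{T}(N)$. Using that $\Theta$ is closed and $\mathcal{L}_{X_v}\Theta=0$, Cartan's magic formula gives
\[
\mathcal{L}_{-X_v}\iota_{X_w}\Theta=-\iota_{[X_v,X_w]}\Theta=\iota_{[X_w,X_v]}\Theta=\iota_{X_{[v,w]}}\Theta=\phi([v,w])=\iota([v,w]),
\]
which is precisely the $\mathfrak{g}$-equivariance of $\iota$. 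With this in hand, Theorem \ref{th:SupTheo} applied to the representation $\rho$ produces the unique extension $\Upsilon\colon T(\mathfrak{g})\to \mathfrak{T}(N)$, its prolongation $\Upsilon^{(m)}\colon T^{(m)}(\mathfrak{g})\to \mathfrak{T}^{(m)}(N)$, and the accompanying intertwining identity (\ref{Eq:RelRel}), $\Upsilon^{(m)}\circ \mathrm{ad}_v=\rho^{(m)}_v\circ \Upsilon^{(m)}$.

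The only real subtlety is the bookkeeping of the minus sign: without it, $v\mapsto X_v$ would only give a Lie algebra anti-homomorphism, so $\rho^{(m)}$ would fail to be a representation and $\iota$ would fail to be a $\mathfrak{g}$-module morphism; everything else is a mechanical invocation of the earlier results. I would therefore devote a short explicit computation to this sign issue and then cite Theorem \ref{th:SupTheo} for the remaining content.
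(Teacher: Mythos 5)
Your proposal is correct and follows essentially the same route as the paper: establish the sign relation $X_{[v,w]}=-[X_v,X_w]$ coming from the anti-homomorphism $X\mapsto\iota_X\Theta$, conclude that $\rho_v=\mathcal{L}_{-X_v}$ (and its diagonal prolongation) is a representation, check that $\phi$ is $\mathfrak{g}$-equivariant via $\mathcal{L}_{X_v}\iota_{X_w}\Theta=\iota_{[X_v,X_w]}\Theta$, and then invoke Theorem \ref{th:SupTheo}. The only cosmetic difference is that you verify the bracket-preservation of $\rho^{(m)}$ directly from the prolongation, while the paper obtains $\rho^{(m)}$ from Theorem \ref{th:SupTheo} and then identifies it with $\mathcal{L}_{-X_v^{[m]}}$ on decomposable elements; you also spell out the equivariance of $\phi$, which the paper merely asserts.
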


\begin{proof}  
The Lie algebra isomorphism $\phi:\mathfrak{g}\simeq\mathfrak{M}$ allows us to define a linear morphism
\begin{equation}\label{Eq:LH}
\rho: v\in \mathfrak{g} \mapsto \mathcal{L}_{-X_v} \in {\rm Der}(\mathfrak{T}(N)),
\end{equation}
where $X_v$ is the unique Hamiltonian vector field such that $\iota_{X_v}\Theta=\phi(v)$.
Let us show that $\rho$ is a Lie algebra morphism. It is immediate that $\rho$ is linear. From the properties of the Lie bracket (\ref{braform}), 
$\phi([v,\hat{v}])=\{\phi(v),\phi(\hat v)\}$ is the Hamiltonian form of $-[X_v,X_{\hat v}]$. Hence, 
$$
\rho([v,\hat v])=
\mathcal{L}_{X_{[v,\hat v]}}=
\mathcal{L}_{[-X_v,-X_{\hat v}]}=
[\rho_{v},\rho_{\bar v}], \qquad \forall
v,\bar v\in \mathfrak{g},
$$
and $\rho$ is a Lie algebra morphism.
Moreover, $\phi$ can be considered as an injection 
$\phi:\mathfrak{g}\simeq\mathfrak{M}\subset \mathfrak{T}(N)$ of 
$\mathfrak{g}$ in $\mathfrak{T}(N)$ 
and it is also a $\mathfrak{g}$-module morphism. 
Hence, all assumptions of Theorem \ref{th:SupTheo} hold and we can apply it to our particular case. 
More specifically, $\phi$ can be extended to an algebra morphism 
$\Upsilon^{(m)}:T^{(m)}(\mathfrak{g})\rightarrow \mathfrak{T}^{(m)}(N)$.
	
Theorem \ref{th:SupTheo} states that $\rho$ can be extended to a Lie algebra morphism 
$\rho^{(m)}: \mathfrak{g} \rightarrow {\rm Der}(\mathfrak{T}^{(m)}(N))$.  
It is interesting to observe that on decomposable elements of 
$\mathfrak{T}^{(m)}(N)$, 
namely elements of the form 
$T_1\otimes_{N^m}\ldots \otimes_{N^m}T_m$ 
with $T_1,\ldots,T_m\in \mathfrak{T}(N)$, 
one has that
$$
\rho^{(m)}_v(T_1\otimes_{N^m}\ldots \otimes_{N^m}T_m) =
\mathcal{L}_{-X_v^{[m]}}
(T_1\otimes_{N^m}\ldots \otimes_{N^m}T_m),
$$
where $X_v^{[m]}$ is the prolongation to $N^m$ of the Hamiltonian vector field $X_v$ associated with $v\in \mathfrak{g}$. 
As a consequence, one can assume that 
$\rho^{(m)}_v=\mathcal{L}_{-X_v^{[m]}}$. 
Since Theorem \ref{th:SupTheo} ensures that $\Upsilon^{(m)}$ is a morphism of $\mathfrak{g}$-modules, one obtains the relation (\ref{Eq:RelRel}).
\end{proof}

The Poincar\'e--Birkhoff--Witt theorem may be employed to prove that 
every element of the enveloping Lie algebra $U(\mathfrak{g})$ 
can be understood as a unique symmetric element of the tensor algebra 
$T(\mathfrak{g})$ and vice versa 
(cf.\ \cite{Va84}). 
In other words, there exists a linear isomorphism 
$\lambda:U(\mathfrak{g})\rightarrow S(\mathfrak{g})$ 
identifying both linear spaces. 
Moreover, 
$\lambda$ is also a $\mathfrak{g}$-module morphism \cite{Va84} 
and $S(\mathfrak{g})$ is isomorphic (as a $\mathfrak{g}$-module) 
to the symmetric algebra of~$\mathfrak{g}$, 
namely the algebra of commutative polynomials in the elements of $\mathfrak{g}$ 
(see \cite{Va84} for details). 
The elements of $U(\mathfrak{g})$ that commute with any other element of~$\mathfrak{g}$ 
relative to its $\mathfrak{g}$-module structure, 
the so-called {\it Casimir elements}, 
give rise via $\lambda$ to elements of $[S(\mathfrak{g})]^\mathfrak{g}$.
From these comments, 
Lemma \ref{FuckRes} and Proposition \ref{prop:CorRep},
we will obtain the following result:

\begin{corollary}
\label{cor:Inv} 
Let $(N,\Theta,X)$ be a multisymplectic Lie system with a Lie--Hamilton algebra~$\mathfrak{M}$. 
Let 
$\phi:\mathfrak{g}\simeq\mathfrak{M}$ 
be a Lie algebra isomorphism and let $\Upsilon^{(m)}:T^{(m)}(\mathfrak{g})\rightarrow \mathfrak{T}^{(m)}(N)$ 
be its induced morphism of $\mathfrak{g}$-algebras given in 
Proposition \ref{prop:CorRep}. 
If $C$ is a Casimir element of $U(\mathfrak{g})$ 
or an element of $\mathsf{\Lambda}(\mathfrak{g})^\mathfrak{g}$, 
then 
$\Upsilon^{(m)}(\Delta^{(m-1)}C)$ 
is an invariant relative to the evolution of $X^{[m]}$.  
\end{corollary}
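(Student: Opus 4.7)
The plan is to show that $\Upsilon^{(m)}(\Delta^{(m-1)}C)$ is annihilated by $\mathcal{L}_{X_t^{[m]}}$ for every $t$, which is exactly what "invariant under the evolution of $X^{[m]}$" means. The argument will be a straightforward chaining of the structural results proved earlier in the section, combined with the Poincar\'e--Birkhoff--Witt identification alluded to just before the corollary.

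First, I would reduce the two cases of the hypothesis to a single statement: $C$ should be regarded as a $\mathfrak{g}$-invariant element of $T(\mathfrak{g})$. In the Grassmann case $C\in\mathsf{\Lambda}(\mathfrak{g})^{\mathfrak{g}}\subset T(\mathfrak{g})^{\mathfrak{g}}$ there is nothing to say. In the Casimir case, I would invoke the linear $\mathfrak{g}$-module isomorphism $\lambda:U(\mathfrak{g})\to S(\mathfrak{g})$ produced by PBW (quoted immediately before the corollary) to transport a Casimir element to an element of $S(\mathfrak{g})^{\mathfrak{g}}\subset T(\mathfrak{g})^{\mathfrak{g}}$. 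Either way, ${\rm ad}_v(C)=0$ for every $v\in\mathfrak{g}$.

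Next, I would apply the coproduct part of Proposition~\ref{Cogmap}, together with its symmetric/antisymmetric refinement Lemma~\ref{FuckRes}, to conclude that
\[
\Delta^{(m-1)}(C)\in [T^{(m)}(\mathfrak{g})]^{\mathfrak{g}},
\]
i.e.\ ${\rm ad}_v\bigl(\Delta^{(m-1)}C\bigr)=0$ for every $v\in\mathfrak{g}$. The diagrams in (\ref{StarTrekki2}) make this step visually immediate. Now I would feed this into Proposition~\ref{prop:CorRep}: the intertwining relation (\ref{Eq:RelRel}) gives
\[
\mathcal{L}_{-X_v^{[m]}}\,\Upsilon^{(m)}\bigl(\Delta^{(m-1)}C\bigr)
=\Upsilon^{(m)}\bigl({\rm ad}_v(\Delta^{(m-1)}C)\bigr)=0
\]
for every $v\in\mathfrak{g}$, and hence $\mathcal{L}_{X_v^{[m]}}\Upsilon^{(m)}(\Delta^{(m-1)}C)=0$ for all $v$.

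To finish, I would observe that, since $\phi:\mathfrak{g}\to\mathfrak{M}$ is an isomorphism and $\mathfrak{M}=\{\iota_Z\Theta\mid Z\in V^X\}$, the assignment $v\mapsto X_v$ exhausts the Vessiot--Guldberg Lie algebra $V^X$; in particular, for each $t\in\mathbb{R}$ there is some $v(t)\in\mathfrak{g}$ with $X_t=X_{v(t)}$, whence $X_t^{[m]}=X_{v(t)}^{[m]}$ and $\mathcal{L}_{X_t^{[m]}}\Upsilon^{(m)}(\Delta^{(m-1)}C)=0$ for every $t$. This is exactly the invariance statement. The only delicate point I foresee is the Casimir case, where one must be slightly careful that the PBW symmetrization map $\lambda$ is genuinely a $\mathfrak{g}$-module morphism (so that $\mathfrak{g}$-invariance is preserved); once this is accepted, the rest is purely a matter of reading off the diagrams (\ref{StarTrekki1})--(\ref{StarTrekki2}).
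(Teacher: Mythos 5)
Your proposal is correct and follows essentially the same route as the paper's proof: identify $C$ (via the PBW symmetrization in the Casimir case) with a $\mathfrak{g}$-invariant element of $T(\mathfrak{g})$, push the invariance through $\Delta^{(m-1)}$ using Proposition~\ref{Cogmap}/Lemma~\ref{FuckRes}, apply the intertwining relation (\ref{Eq:RelRel}) of Proposition~\ref{prop:CorRep}, and conclude invariance under $X^{[m]}$ because each $X_t$ lies in the Lie algebra of Hamiltonian vector fields attached to $\mathfrak{M}$. The only cosmetic difference is that the paper additionally records the identification $\Upsilon^{(m)}(\Delta^{(m-1)}(v))=\theta_{Y^{[m]}}$ and the prolonged Lie--Hamilton algebra $\mathfrak{M}^{[m]}$, which your argument does not need.
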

\begin{proof} 
Let $V$ be the Lie algebra of Hamiltonian vector fields of~$\mathfrak{W}$. 
If $Y\in V$, then 
$Y^{[m]}\in V^{[m]}$ and $V^{[m]}$ is spanned by the vector fields $Y^{[m]}$, 
where $Y$ is an arbitrary element of~$V$. 
It is immediate that $V^{[m]}$ is a Lie algebra isomorphic to~$V$. 

If $\theta_Y=\iota_{Y}\Theta$, then 
$\theta_{Y^{[m]}}=\iota_{Y^{[m]}}\Theta^{[m]}=\theta_{Y}^{[m]}$.
As a consequence $\mathfrak{W}^{[m]}$ 
is isomorphic to $\mathfrak{W}$ 
relative to the Lie brackets (\ref{braform}) 
induced by $\Theta^{[m]}$ and~$\Theta$, respectively. 
Moreover, $\mathfrak{M}^{[m]}$ becomes a Lie--Hamilton algebra for 
$(N^{[m]},\Theta^{[m]},X^{[m]})$. 
	
Let $v\in \mathfrak{g}$ be such that $\theta_Y=\phi(v)$. 
Thus,  
$\Upsilon^{(m)}(\Delta^{(m-1)}(v))=
\theta_Y^{[m]}=\theta_{Y^{[m]}}$. 
In other words, $\Upsilon^{(m)}(\Delta^{(m-1)}(v))$ is the Hamiltonian form corresponding to $Y^{[m]}$. 
If $C$ is a Casimir of $U(\mathfrak{g})$ or an element of 
$\mathsf{\Lambda}(\mathfrak{g})^\mathfrak{g}$, 
then its symmetric or antisymmetric representative in $T(\mathfrak{g})$ 
is a $\mathfrak{g}$-invariant element of $T(\mathfrak{g})$. 
By using Diagram (\ref{StarTrekki2}), 
we obtain that
\begin{equation}
\mathcal{L}_{Y^{[m]}}[\Upsilon ^{(m)}(\Delta^{(m-1)}{C})]=
\Upsilon^{(m)}\circ\rho_v^{(m)}(\Delta^{(m-1)}(C))=
\Upsilon^{(m)}\circ\Delta^{(m-1)}({\rm ad}_v(C))=0
\end{equation}
for every $Y\in V$.
Hence, 
$\Upsilon^{(m)}(\Delta^{(m-1)}(C))$ 
is an invariant relative to the Vessiot--Guldberg Lie algebra $V^{[m]}$ of $X^{[m]}$ 
and it becomes invariant under the evolution of $X^{[m]}$.
\end{proof}

It is worth noting that $\Upsilon^{(m)}$ is not an algebra morphism relative to the product $\otimes$ in $T^{(m)}(\mathfrak{g})$. 
For instance,
\begin{equation*} 
\Upsilon^{(2)}[(1\boxtimes v_1) \otimes (v_2\boxtimes 1)] =
\Upsilon^{(2)}(v_2\boxtimes v_1) =
\Upsilon(v_2)(x_1)\otimes_{N^2} \Upsilon(v_1)(x_2),
\end{equation*}
whereas 
$$
\Upsilon^{(2)}(1\boxtimes v_1) \otimes_{N^2} \Upsilon^{(2)}(v_2\boxtimes 1) =
\Upsilon(v_1)(x_2)\otimes_{N^2} \Upsilon(v_2)(x_1).
$$
Therefore, both expressions only coincide when
$$
\Upsilon(v_2)(x_1)\otimes_N \Upsilon(v_1)(x_2) = 
\Upsilon(v_1)(x_2)
\otimes_N \Upsilon(v_2)(x_1).
$$

\subsection{Casimir elements and superposition rules}
Let us illustrate in this section how to apply the formalism devised in the previous one to obtain, 
via the Casimir of $\mathfrak{sl}_2$, 
a superposition rule for the multisymplectic Lie system related to Schwarz equations.

Consider the Schwarz equation given by (\ref{Eq:KS3}). 
Its first-order system of of differential equations (\ref{Eq:firstKS3}) is related to a Vessiot--Guldberg Lie algebra 
$V^S=\langle X_1,X_2,X_3\rangle$, 
where $X_1,X_2,X_3$ are given in (\ref{Eq:VFKS1}). 
As shown in Section \ref{subsection:definition_properties_MLS}, 
this Lie system is related to a multisymplectic Lie system 
$(\mathcal{O},\Theta_S,X^S)$ 
with a Lie--Hamilton algebra of differential two-forms 
$\mathfrak{M}_S = \langle d\theta_1,d\theta_2,d\theta_3\rangle$, 
whith $d\theta_1,d\theta_2,d\theta_3$ 
given by (\ref{dfSE}), 
isomorphic to $\mathfrak{sl}_2$ relative to the Lie bracket (\ref{braform}) induced by $\Theta_S$. 

Let $\{v_1,v_2,v_3\}$ be a basis of $\mathfrak{sl}_2$ satisfying the commutation relations
$$
[v_1,v_2]= -v_1,\qquad 
[v_1,v_3]= -2v_2,\qquad 
[v_2,v_3]= -v_3.
$$
The associated universal enveloping algebra $U(\mathfrak{sl}_2)$ has essentially a unique Casimir element 
\cite{BCHLS13,Pressley} 
whose symmetric tensorial form is given by
$$
\mathcal{C}=v_1\otimes v_3+v_3\otimes v_1-2v_2\otimes v_2.
$$
The corresponding morphism of algebras 
$\Upsilon: U(\mathfrak{sl}_2) \rightarrow T(\mathcal{O})$ 
gives rise to an invariant (relative to the Lie derivatives with elements of $V^S$) 4-covariant tensor field on $\mathcal{O}$ given by
$$
\Upsilon(\mathcal{C})=d\theta_1\otimes d\theta_3+d\theta_3\otimes d\theta_1-2d\theta_2\otimes d\theta_2.
$$
It is simple to verify by direct computation that this tensor field is invariant relative to the Lie derivatives with respect to elements of the Vessiot--Guldberg Lie algebra $V^S$. 

The triple $(\mathcal{O},X^S,V^S)$ is an automorphic Lie system. 
Therefore, it satisfies the conditions of Lemma \ref{lemma:Exis}, 
which ensures that $X^S$ admits a Lie algebra ${\rm Sym}(V^S)$ isomorphic to $V^S$ of Lie symmetries of $V^S$ and $X^S$. 
In view of Lemma \ref{lemma:Exis}, the multisymplectic form $\Theta_S$ is also invariant relative to the elements of ${\rm Sym}(V^S)$. 
It is known that $V^S$ admits a Lie algebra of Lie symmetries spanned by \cite{LG99,OT05} the vector fields $Y_1,Y_2,Y_3$ given by (\ref{Eq:SymSc2}).

The contractions of the tensor field $\Upsilon(\mathcal{C})$ with four elements of
${\rm Sym}(V^S)$
are constants of motion for $X^S$. 
In particular, a long but simple calculation shows that 
$$
\iota_{Y_1}\iota_{Y_3}\iota_{Y_1}\iota_{Y_3} \Upsilon(\mathcal{C}) = -2,
\qquad
\iota_{Y_2}\iota_{Y_1}\iota_{Y_2}\iota_{Y_3} \Upsilon(\mathcal{C}) = -1.
$$
To obtain a superposition rule for the system, we have to obtain three functionally independent constants of motion for the diagonal prolongation of $X^S$ to $\mathcal{O}^2$ 
(cf.\ \cite{Dissertationes}). 
Let us write $d\theta^{(j)}_i=d\theta_i(x_{(j)})$. 
Then, the extended invariant $\mathcal{I}^{[2]}=\Upsilon^{(2)}(\Delta (\mathcal{C}))$ reads
$$
\Upsilon(\mathcal{C})^{[2]}+2[d\theta_3(\xi_1)\otimes d\theta_1(\xi_2)+d\theta_1(\xi_1)\otimes d\theta_3(\xi_2)-2d\theta_2(\xi_1)\otimes d\theta_2(\xi_2)]. 
$$
It is worth noting that $\mathcal{I}^{[2]}\neq \Upsilon(\mathcal{C})^{[2]}$.

In virtue of Corollary \ref{cor:Inv}, 
the contractions of the tensor field $\Upsilon(\Delta (\mathcal{C}))$ with $Y_1^{[2]},Y^{[2]}_2,Y^{[2]}_3$
are also invariants of $(X^S)^{[2]}$. 
This fact also allows us to obtain constants of motion for $(X^S)^{[2]}$, 
and, since the vector fields $X_1,X_2,X_3$ are linearly independent at a generic point, 
to obtain superposition rules for~$X^S$ 
in view of the distributional method 
(cf.\ \cite{CGM07,Dissertationes}).

The contractions of $\Upsilon(\mathcal{C})^{[2]}$ with four arbitrary vector fields of $\langle Y_1^{[2]},Y^{[2]}_2,Y^{[2]}_3\rangle$ satisfy that
$$
\iota_{Y^{[2]}_a}\iota_{Y^{[2]}_b}\iota_{Y^{[2]}_c}\iota_{Y^{[2]}_d}\Upsilon(\mathcal{C})^{[2]}=\iota_{Y_a}\iota_{Y_b}\iota_{Y_c}\iota_{Y_d}\Upsilon(\mathcal{C})(x_{(1)})+\iota_{Y_a}\iota_{Y_b}\iota_{Y_c}\iota_{Y_d}\Upsilon(\mathcal{C})(x_{(2)}),
$$
for all $Y_a,Y_b,Y_c,Y_d\in {\rm Sym}(V^S)$. 
As a consequence, such contractions are constants and therefore useless for our purposes as they will lead to trivial constants that cannot be used to obtain a superposition rule. 
Meanwhile, if $C=\mathcal{I}^{[2]}-\Upsilon(\mathcal{C})^{[2]}$, then the contractions 
$$
\begin{gathered}
I_1 =
\frac{(a_2v_1-a_1v_2)^2}{v_1^3v_2^3} =
2\iota_{Y^{[2]}_1}\iota_{Y^{[2]}_2}\iota_{Y^{[2]}_1}\iota_{Y^{[2]}_2} C \,,
\\
I_2 =
\frac{2v_1v_2(v_1-v_2)}{a_2v_1-v_2a_1} +x_1+x_2 =
2\iota_{Y^{[2]}_1}\iota_{Y^{[2]}_2}\iota_{Y^{[2]}_1}\iota_{Y^{[2]}_3} \frac{C}{I_1} \,,
\\ 
I_3 =
\left(x_2-\frac{2v_1v_2^2}{a_2v_1-a_1v_2}\right)
\left(I_2-x_2+\frac{2v_1v_2^2}{a_2v_1-a_1v_2}\right) =
\iota_{Y^{[2]}_1}\iota_{Y^{[2]}_3}\iota_{Y^{[2]}_1}\iota_{Y^{[2]}_3} \frac{C}{2I_1} \,.
\end{gathered}
$$
give rise to three constants of motion for $(X^S)^{[2]}$. 
Remarkably, the contractions of $C$ with elements of ${\rm Sym}(V^S)$ offer many other constants of motion for the diagonal prolongation $(X^S)^{[2]}$, which allows us to select those having a simpler or more appropriate form to obtain a superposition rule. In particular, $I_1,I_2,I_3$ are chosen so as that
\begin{equation}\label{condition}
\partial(I_1,I_2,I_3)/\partial (x_1,v_1,a_1)\neq 0,
\end{equation}
which implies that $I_1,I_2,I_3$ are functionally independent and enable us to obtain a superposition rule 
(cf.\ \cite{CGM07,Dissertationes}). 
Since $X^{[2]}_1,X^{[2]}_2,X^{[2]}_3$ span a distribution or rank three almost everywhere on the six-dimensional manifold $\mathcal{O}^2$, 
all remaining constants of motion for a generic $(X^S)^{[2]}$ are of the form $F=F(I_1,I_2,I_3)$ for a certain function 
$F \colon \mathbb{R}^3 \rightarrow \mathbb{R}$. 

The above procedure is more powerful than the methods devised in \cite{CGLS14,LS13} 
to obtain the constants of motion for $(X^{S})^{[2]}$. 
Indeed, the constants of motion obtained in \cite{LS13} 
were derived via the characteristics method for 
$X^{[2]}_1,X^{[2]}_2,X^{[2]}_3$, which is very tedious. 
Meanwhile, the Dirac structure method employed in \cite{CGLS14} is simpler to be applied 
than the procedure in \cite{LS13} 
but it still demands to obtain a Dirac structure, which may be long, 
along with some Lie symmetries of ${\rm Sym}(V^S)$ 
and some other symmetries chosen in an {\it ad-hoc} way. 
Instead, our present methods give rise to the multisymplectic form in an immediate manner and, knowing several Lie symmetries, one can get the superposition rule. 
Observe that other contractions of $\Upsilon(\mathcal{C})$ or $\Upsilon^{[2]}(\Delta(\mathcal{C}))$ 
with the Lie symmetries (\ref{Eq:SymSc2}) 
and their prolongations to $\mathcal{O}^2$ 
can be used to obtain other tensorial invariants to study the Schwarz equation, 
e.g.\ presymplectic forms given by 
$\iota_{Y_i\wedge Y_i} \Upsilon(\mathcal{C})$ 
with $i,j=1,2,3$.

To obtain the searched superposition rule for the Schwarz equation, 
we use the functions $I_1,I_2,I_3$, which satisfy condition (\ref{condition}). 
Therefore, the following functions
$$
\Upsilon_1 = I_1 \,,
\quad 
\Upsilon_2 = \frac{I_2\pm\sqrt{I_2^2-4I_3}}{2} =
x_2-\frac{2v_1v_2^2}{a_2v_1-a_1v_2} \,, 
\quad
\Upsilon_3 = I_2-\Upsilon_2 = x_1+\frac{2v_1^2v_2}{a_2v_1-a_1v_2} \,, 
$$
also satisfy (\ref{condition}) and the equations 
$\Upsilon_1=k_1$,
$\Upsilon_2=k_2$,
$\Upsilon_3=k_3$ 
can be employed to obtain $x_1,v_1,a_1$ 
from $x_2,v_2,a_2$ and $k_1,k_2,k_3$. 
Indeed, these are the equations employed in \cite{CGLS14} 
to obtain the superposition rule for Schwarz equations. 
Here, they appear without the necessity of using symmetries apart from the Lie symmetries of ${\rm Sym}(V^S)$. 

More specifically, 
the superposition rule can be obtained by using the equation $\Upsilon_1=k_1$, 
which enables us to obtain the value of $a_2v_1-v_2a_1$ in terms of $k_1,v_1,v_2$. 
Substituting this into equations 
$\Upsilon_2=k_2$ and $\Upsilon_3=k_3$, 
we obtain two algebraic equations concerning the variables $v_1,v_2,x_1,x_2$ and the constants $k_1,k_2,k_3$. 
This allows us to obtain $x_1,v_1$ in terms of $x_2,v_2,k_1,k_2,k_3$. 
In particular, one finds that 
\begin{equation}
\label{Homo}
x_1 = \frac{\alpha x_2+\beta}{\gamma x_2+\delta} \,, 
\end{equation}
for certain constants $\alpha,\beta,\gamma,\delta$ 
satisfying that $\alpha\delta-\beta\gamma=1$ 
and whose form can be expressed as a function of $k_1,k_2,k_3$. 
The expression of
$v_1$ and $a_1$ in terms of $x_2,v_2,a_2,\alpha,\beta,\gamma,\delta$ 
can be then easily obtained from (\ref{Homo}). 
The resulting expressions become a superposition rule for the Schwarz equation 
(cf.\ \cite{CGM07,Dissertationes}).

\subsection{Invariant forms and superposition rules}

Let us study an application of the methods of the previous section to derive, 
via invariant elements of the Grassmann algebra $\Lambda(\mathfrak{sl}_2)$, 
a superposition rule for the multisymplectic Lie system related to a control system.

Consider the Riccati-type diffusion system 
\begin{equation}\label{Eq:Partial01}
\left\{
\begin{aligned}
\frac{dw}{dt} &= a(t)v^2,\\
\frac{du}{dt} &= -b(t)+2c(t) u+4a(t)u^2, 
\\
\frac{dv}{dt} &= \left( c(t)+4a(t)u \right) v, 
\\
\end{aligned}
\right.
\end{equation}
where $a(t)$, $b(t),$  and $c(t)$ are arbitrary $t$-dependent functions. 
This system appears as a reduction of a system of differential equations 
that is used to solve diffusion-type equations, Burger's equations, and other PDEs 
\cite{SSV14}.  
Moreover, its relation to Dirac structures has been studied in \cite{CGLS14}. 
Let us apply our methods to obtain its properties.

For simplicity, we restrict ourselves to analysing the system (\ref{Eq:Partial01}) on 
$N= \{(u,v,w)\in\mathbb{R}^3 \mid v\neq 0\}$.  
This highlights the main points of our presentation by avoiding secondary technical details, e.g.\ all hereafter given structures are well-defined on $N$.

The system (\ref{Eq:Partial01}) describes the integral curves of the $t$-dependent vector field
\[
X^{RS}_t = a(t)X_1-b(t)X_2+c(t)X_3\,,
\]
on $N$, where 
\begin{equation}\label{Eq:BD}
\begin{gathered}
X_1 =4u^2\frac{\partial}{\partial u}+ 4uv\frac{\partial}{\partial v}+ v^2\frac{\partial}{\partial w},\qquad
X_2 = 2u\frac{\partial}{\partial u}+v\frac{\partial }{\partial  v},\qquad  
X_3 = \frac{\partial}{\partial u}.
\end{gathered}
\end{equation}
span 
a Lie algebra $V^R$ isomorphic to $\mathfrak{sl}_2$. 
In fact,
\begin{equation}
\label{commutation}
[X_1,X_2]=-2X_1,\qquad 
[X_2,X_3]=-2X_3,\qquad 
[X_1,X_3]=-4X_2. 
\end{equation}
Hence, $V^R$ is a Vessiot--Guldberg Lie algebra for the system (\ref{Eq:Partial}) which becomes a Lie system.

Since $(du\wedge dv\wedge dw)(X_1,X_2,X_3)=v^4$, 
one has that 
$\mathcal{D}^{V^R}_p=T_pN$ 
for any $p\in N$ and $\dim V^R=\dim N$. 
Therefore, $(N,X,V^R)$ is a locally automorphic Lie system.
Since the vector fields $X_1,X_2,X_3$ are linearly independent at a generic point, 
$X$ admits a superposition rule depending on a unique particular solution.
Recall that Lemma \ref{lemma:Exis} ensures that all Lie symmetries for the elements of $V^R$ span a Lie algebra ${\rm Sym}(V^R)$ isomorphic to $V^R$. 
A long but simple calculation allows us to obtain that ${\rm Sym}(V^R)$ is spanned by:
\begin{equation}
\label{SymSch}
\begin{gathered}
Y_1 = v^2\frac{\partial}{\partial u}+4vw\frac{\partial}{\partial v}+4w^2\frac{\partial}{\partial w},\qquad 
Y_2 = v\frac{\partial}{\partial v}+2w\frac{\partial}{\partial w},\qquad
Y_3=\frac{\partial}{\partial w}.
\end{gathered}
\end{equation}

The corresponding dual one-forms to the vector fields in (\ref{Eq:BD}) read
\begin{equation}
\label{DiffOneS}
\begin{gathered}
\eta_1 = \frac{dw}{v^2},\qquad 
\eta_2 = \frac{dv}{v}-4\frac{udw}{v^2},\qquad 
\eta_3 = du-\frac{2udv}{v}+\frac{4u^2dw}{v^2}.
\end{gathered}
\end{equation}
Since $V^R$ is semisimple, it is therefore unimodular and to obtain a multisymplectic form invariant under the Lie derivatives with elements of $V^R$, 
one has just to  
define the multisymplectic form 
$$
\Theta^{RS}=\eta_1\wedge \eta_2\wedge \eta_3=\frac{1}{v^3}dw\wedge dv\wedge du
$$
turning $(N,\Theta^{RS},X^{RS})$ into a multisymplectic Lie system.  

The differentials forms $\iota_{X_\alpha}\Theta^{RS}$, with $\alpha=1,2,3$, 
span a Lie--Hamilton algebra $\mathfrak{M}$ isomorphic to~$V^R$.
In view of the bracket (\ref{braform}) and (\ref{commutation}), their commutation relations are
\begin{equation}
\label{Eq:comm2}
\begin{gathered}{} 
[d\theta_1,d\theta_2] = 2d\theta_1, 
\qquad 
[d\theta_1,d\theta_3] = 4d\theta_2,
\qquad 
[d\theta_2,d\theta_3] = 2d\theta_3. 
\end{gathered}
\end{equation}

To obtain invariants and superposition rules for $X^{RS}$, 
we will derive invariants of the Grassmann algebra $\mathsf{\Lambda}(\mathfrak{sl}_2)$. 
More exactly, we will use an element of 
$\mathsf{\Lambda}(\mathfrak{sl}_2)^{\mathfrak{sl}_2}$, 
e.g.\ 
$\mathcal{C}=v_1\wedge v_2\wedge v_3$, 
and then $\Upsilon^{(2)}\Delta(\mathcal{C})$ will be invariant under the evolution of $(X^{RS})^{[2]}$ in view of Corollary \ref{cor:Inv}.

More specifically, 
\begin{multline*}
\Delta(v_1\wedge v_2\wedge v_3) =
(v_1\wedge v_2\wedge v_3)\boxtimes1 + 1\boxtimes (v_1\wedge v_2\wedge v_3) + (v_1\wedge v_2)\boxtimes v_3+ (v_2\wedge v_3)\boxtimes v_1
\\
+(v_3\wedge v_1)\boxtimes v_2 + v_3\boxtimes(v_1\wedge v_2) + v_2\boxtimes (v_3\wedge v_1) + v_1\boxtimes( v_2\wedge v_3).
\end{multline*}
It is worth noting that, as proved in Lemma \ref{FuckRes}, we have that $\Delta(v_1\wedge v_2\wedge v_3)\subset \mathsf{\Lambda}(\mathfrak{sl}_2)\otimes \mathsf{\Lambda}(\mathfrak{sl}_2)$.

A simple calculation gives that 
$$
\iota_{Y_1}\iota_{Y_2}\iota_{Y_1}\iota_{Y_3}\iota_{Y_2}\iota_{Y_3} 
\Upsilon^{(2)}\Delta(v_1\wedge v_2\wedge v_3) = 
-2 \, \frac{[v_1^2+v_2^2-4(u_1-u_2)(w_1-w_2)]^2}{v_1^2v_2^2}.
$$
This can be simplified to
$$
f_1: =\frac{v_1^2+v_2^2-4(u_1-u_2)(w_1-w_2)}{v_1v_2}
$$

The application of the Lie symmetries $Y_1,Y_2,Y_3$ to $f_1$ 
gives rise to the following two functionally independent invariants:
$$
f_2=\frac{u_2-u_1}{v_1v_2} \,,
\qquad 
f_3=\frac{v_1^2-v_2^2-4(u_1-u_2)(w_1+w_2)}{v_1v_2} \,.
$$
Since $\partial(f_1,f_2,f_3)/\partial(v_1,u_1,w_1)\neq 0$,
by equating $f_1,f_2,f_3$ to constants $k_1,k_2,k_3$, 
it is possible to obtain $u_1,v_1,w_1$ 
as functions of $u_2,v_2,w_2$ and the constants $k_1,k_2,k_3$ 
and to give rise to a superposition rule for the system under study.

As commented before, the above system admits a superposition rule depending on just one particular solution. 
Hence, one has to obtain constants of motion for the diagonal prolongation of the system to $N^{[2]}$ 
to obtain the superposition rule.

\section{Conclusions and outlook}

This work has illustrated the existence of multisymplectic Lie systems in the literature 
and has provided tools to endow Lie systems 
with a compatible multisymplectic structure. 
This has lead us to endow them with certain algebraic structures that enable the obtention of superposition rules and invariants,
retrieving as particular cases much of the invariants 
found in the previous literature on the topic. 
This seems to have applications to extend the coalgebra method in the literature 
\cite{BBHMR09},
which would have applications in the theory of integrable systems. 
Results have been illustrated by examples of physical and mathematical interest.

In the future, 
we aim to extend the approach given in multisymplectic Lie systems to arbitrary Lie systems 
by attaching to them an associative tensor algebra 
obtained by tensor products of the elements 
of a Vessiot--Guldberg Lie algebra 
and, eventually, their dual forms. 
The Vessiot--Guldberg Lie algebra acts then on this associative algebra and 
its invariants should give rise to invariant structures for Lie systems. 
We believe that this approach will recover all results of the whole literature on geometric structures on Lie systems as particular cases 
\cite{BBHLS13,BCHLS13,CGLS14,GL17,LL18,LV15}. 
Finally, it seems that the ideas of this work 
could be applied to generalise 
not only the coalgebra formalism for obtaining superposition rules for Lie--Hamilton systems 
but also the coalgebra method itself 
(see \cite{BBHMR09}). 
Additionally, we believe that 
the quantum algebra structure of the universal enveloping algebra 
appearing in the coalgebra method 
could be employed to obtain new integrals of motion.

\appendix
\section*{Appendix: proof of Lemma \ref{FuckRes}}
\addcontentsline{toc}{section}{Appendix}
\def\thesection{A}

\paragraph{Lemma}
{\itshape\!\!
Let $\mathfrak{g}$ be a Lie algebra and $m \in \mathbb{N}$. Then,  
$\Delta^{(m)}S(\mathfrak{g}) \subset S^{(m+1)}(\mathfrak{g})$
and 
$\Delta^{(m)} \mathsf{\Lambda}(\mathfrak{g}) \subset \mathsf{\Lambda}^{(m+1)}(\mathfrak{g})$.
}

\begin{proof} 
Let's prove the inclusion for $S(\mathfrak{g})$.
It will be enough to prove it for homogeneous elements of~$S(\mathfrak{g})$.
We first consider the case $m=1$. 
	
Consider elements $v_1,\ldots,v_r$ of $\mathfrak{g}$. 
Let us write 
$v_{i_1<\ldots<i_k} = v_{i_1} \otimes \ldots \otimes v_{i_k}$, 
where $1\leq i_1<\ldots<i_k\leq r$, 
the $r$ is any natural number, 
and $v^c_{i_1<\ldots<i_k}$ is
the exterior product of the elements in 
$\{v_{i_1},\ldots,v_{i_k}\}^c$, where $^c$ stands for the complementary set in $\{v_1,\ldots,v_r\}$, 
ordered with respect to their indexes from the lower to the higher one.  
Using the fact that $\Delta$ is a morphism of associative algebras, we get that 
$$
\Delta(v_{i_1<\ldots<i_k})=
\Delta(v_{i_1})\cdot\ldots\cdot \Delta(v_{i_r})=
(v_1\boxtimes 1+1\boxtimes v_1) \cdot \ldots \cdot (v_r\boxtimes 1+1\boxtimes v_r).
$$
It follows by induction that the product on the right-hand side can be written as
$$
\sum_{k=0}^r
\sum_{1\leq i_1<\ldots<i_k\leq r} 
v_{i_1<\ldots<{i_k}} \boxtimes v^c_{i_1<\ldots<{i_k}}.
$$
If we write 
${\rm Alt}(v_1\otimes\ldots\otimes v_r)=
\sum_{\sigma \in S_r}\sigma(v_1\otimes\ldots\otimes v_r)$, 
where $S_r$ is the permutation group of $r$ elements, then 
\begin{equation}\label{1}
\Delta({\rm Alt}(v_{1}\otimes\ldots\otimes v_{r}))=
\sum_{\sigma\in S_r} \sigma \left[
\sum_{k=0}^r \sum_{1\leq i_1<\ldots<i_k\leq r} 
\left( v_{i_1<\ldots<i_k} \boxtimes v^c_{i_1<\ldots<i_k} \right)
\right] \,.
\end{equation}
To prove that the right-hand side belongs to 
$S(\mathfrak{g})\boxtimes S(\mathfrak{g})$, 
we aim to decompose the elements of $\sigma\in S_r$ 
in a specific way so as to control how they act on the elements 
$v_{i_1<\ldots<i_k} \boxtimes v^c_{i_1<\ldots<i_k}$.

Let $G_{i_1<\ldots<i_k}$ be the subgroup of $S_r$ whose elements leave the subset 
$\{i_1,\ldots,i_k\}$ 
invariant and let $G_{i_1,\ldots,i_k}$ be the
subgroup of $G_{i_1<\ldots<i_k}$ whose elements fix all the elements $v_{i_1},\ldots,v_{i_k}$. 
Let $\hat{\sigma}_1,\ldots,\hat \sigma_{s_1}$ and 
$\tilde{\sigma}_1,\ldots,\tilde {\sigma}_{s_2}$ 
be representatives of the equivalence classes of the cosets $S_r/G_{i_1<\ldots<i_k}$ and 
$G_{i_1<\ldots<i_k}/G_{i_1,\ldots,i_k}$, respectively. 
Then, 
$S_r=
\bigcup_{\alpha=1}^{s_1}
\bigcup_{\beta=1}^{s_2}
\bigcup_{\sigma'\in G_{i_1,\ldots,i_k}}
\hat\sigma_\alpha \circ {\tilde{\sigma}}_{\beta}\circ\sigma'$. 
Every $\sigma'\in G_{i_1,\ldots,i_k}$ fixes all the elements of $\{i_1,\ldots,i_k\}$ and leaves invariant the subset $\{i_1,\ldots,i_k\}^c$. 
Hence, 
$\sigma'(v_{i_1<\ldots<i_k})=v_{i_1<\ldots<i_k}$ 
whereas 
$\sum_{\sigma'\in G_{i_1,\ldots,i_k}} \sigma'(v^c_{i_1<\ldots<i_k})=
{\rm Alt} (v^c_{i_1<\ldots<i_k})$.  
Using the previous decomposition of $S_r$ and  (\ref{1}), 
we obtain
\begin{equation}\label{2}
\Delta({\rm Alt}(v_{1}\otimes\ldots\otimes v_{r}))=
\sum_{k=0}^{r}
\sum_{1\leq i_1<\ldots<i_k\leq r}
\sum_{\alpha=1}^{s_1}
\sum_{\beta=1}^{s_2}
\hat\sigma_\alpha \circ \tilde\sigma_\beta[ 
(v_{i_1<\ldots<i_k}) \hat\otimes 
{\rm Alt}(v^c_{i_1<\ldots<i_k})]
\end{equation}
The element $\bar \sigma_\beta$ leaves stable $\{v_{i_1},\ldots,v_{i_r}\}$ and its complementary set. 
Since ${\rm Alt}(v_{i_1<\ldots<i_r}^c)$ is totally symmetric, one has that 
$\bar \sigma_\beta\left[{\rm Alt}(v_{i_1<\ldots<i_r}^c)\right]=
{\rm Alt}(v_{i_1<\ldots<i_r}^c)$. 
Since $G_{i_1<\ldots<i_k}/G_{i_1,\ldots,i_k}\simeq S_k$, 
one has that 
$$
\sum_{\beta=1}^{s_2} \bar \sigma_\beta[v_{i_1<\ldots<i_k}]=
{\rm Alt}(v_{i_1<\ldots<i_k}).
$$
Using previous results in (\ref{2}), one gets
$$\Delta({\rm Alt}(v_{1}\otimes\ldots\otimes v_{r})) =
\sum_{\alpha=1}^{s_1} 
\hat\sigma_\alpha[
{\rm Alt}(v_{i_1<\ldots<i_k}) \boxtimes 
{\rm Alt}(v^c_{i_1<\ldots<i_k}) ]
\in S^{(2)}(\mathfrak{g})
$$
and 
$\Delta S(\mathfrak{g})\subset S^{(2)}(\mathfrak{g})$. 

Once the case for $m=1$ has been proved, 
the case for general~$m$ can be proved by induction. 
Assume that the result is true for $m$ and therefore $\Delta^{(m)}(S(\mathfrak{g})) \subset S^{(m+1)}(\mathfrak{g})$. 
Let us prove that our lemma holds true for $m+1$. 
By the recurrence relation for $\Delta^{(m+1)}$ and the induction hypothesis, 
one has that
$$
\Delta^{(m+1)}(S(\mathfrak{g}))=
({\rm Id}\boxtimes \stackrel{^m}{\ldots}\boxtimes {\rm Id}\boxtimes \Delta)\circ \Delta^{(m)} (S(\mathfrak{g}))=
({\rm Id}\boxtimes \stackrel{^m}{\ldots}\boxtimes {\rm Id}\boxtimes \Delta)(S^{(m)}(\mathfrak{g})) \subset 
S^{(m+1)}(\mathfrak{g}).
$$

The proof for 
$\mathsf{\Lambda}(\mathfrak{g})$ 
follows analogously by defining 
$\overline{\rm Alt}(v_{i_1,\ldots,i_r})=
\sum_{\sigma\in S_r} (-1)^{{\rm sign}(\sigma)}  \sigma(v_{i_1,\ldots,i_r})$ 
and using the following equalities:
$$
\sum_{\sigma'\in G_{i_1,\ldots,i_k}} 
(-1)^{{\rm  sign}(\sigma')} \sigma'(v^c_{i_1<\ldots<i_k}) =
\overline{\rm Alt} (v^c_{i_1<\ldots<i_k})
\,,
$$
$$
\sum_{\beta=1}^{s_2} 
(-1)^{{\rm sign}(\bar{\sigma}_\beta)} 
\bar{\sigma}_\beta
[(v_{i_1<\ldots<i_k}) \boxtimes \overline{\rm Alt}(v^c_{i_1<\ldots<i_k})] =
\sum_{\beta=1}^{s_2} 
\overline{\rm Alt} (v_{i_1<\ldots<i_k}) 
\boxtimes 
\overline{\rm Alt} (v^c_{i_1<\ldots<i_k})
\,.
$$
\end{proof}

\subsection*{Acknowledgements}

The authors acknowledge fruitful discussions on the topic of the paper with our colleague N. Rom\'an-Roy.
We acknowledge partial financial support from 
the Polish National Science Centre project 
2016/22/M/ST1/00542 (HARMONIA);
from the Spanish Ministerio de Econom\'{\i}a y Competitividad projects 
MTM2014--54855--P
and
MTM2015--64166--C2-1-P;
from the Catalan Government project 
2017--SGR--932;
and from the Aragon Government grant E38$\_$17R.


\end{document}